\documentclass{CSML}

\def\dOi{13(3:1)2017}
\lmcsheading%
{\dOi}
{1--31}
{}
{}
{May\phantom.~10, 2016}
{Jul.~\phantom06, 2017}
{}

\usepackage{hyperref}\hypersetup{hidelinks}
\usepackage{amssymb}
\usepackage{amsmath}
\usepackage{amsfonts}
\usepackage{tikz}
\usepackage{circuitikz}

\newcommand{\meet}{\wedge}
\newcommand{\join}{\vee}
\newcommand{\leqc}{\prec}

\theoremstyle{plain}\newtheorem*{M0*}{M.0}
\theoremstyle{plain}\newtheorem*{M1*}{M.1}
\theoremstyle{plain}\newtheorem*{M2*}{M.2}
\theoremstyle{plain}\newtheorem*{M3*}{M.3}
\theoremstyle{plain}\newtheorem*{M4*}{M.4}

\DeclareMathOperator{\free}{Free}
\DeclareMathOperator{\eval}{Eval}
\DeclareMathOperator{\dual}{op}

\newcommand{\C}{\mathcal{C}}
\newcommand{\Ell}{\mathcal{L}}
\newcommand{\K}{\mathcal{K}}
\renewcommand{\S}{\mathcal{S}}
\renewcommand{\L}{\mathcal{L}}
\newcommand{\F}{\mathcal{F}}
\newcommand{\E}{\mathcal{E}}
\newcommand{\W}{\mathcal{W}}
\newcommand{\plus}{+}
\newcommand{\proj}{\cdot}
\DeclareMathOperator{\fragility}{fragility}
\DeclareMathOperator{\resilience}{resilience}
\newcommand{\Rnneg}{\mathbb{R}^+}

\begin{document}

\title[Towards an Algebra for Cascade Effects]{Towards an Algebra for Cascade Effects}

\author[E.~M.~Adam]{Elie M. Adam\rsuper *}	
\address{Laboratory for Information and Decision Systems,
  Massachusetts Institute of Technology}
\email{\{eadam, dahleh, asuman\}@mit.edu}  
\thanks{{\lsuper *}This research was partially supported by a Xerox Fellowship.}	

\author[M.~A.~Dahleh]{Munther A. Dahleh}	
\address{\vspace{-18 pt}}

\author[A.~Ozdaglar]{Asuman Ozdaglar}	
\address{\vspace{-18 pt}}

\keywords{Cascade effects, Contagion, Deduction, Lattice, Fixed points, Closure operator, Monotone map, Galois connection, Dynamical system, Systemic failure.}
\ACMCCS{[{\bf Mathematics of
  computing}]: Discrete mathematics; [{\bf Theory of computation}]: Logic; [{\bf Applied computing}]: Engineering}
\amsclass{06B99, 06A15, 93A99}

\begin{abstract}
\noindent We introduce a new class of (dynamical) systems that inherently capture cascading effects (viewed as consequential effects) and are naturally
amenable to combinations. We develop an axiomatic general theory around those systems, and guide the endeavor towards an understanding of cascading failure.
The theory evolves as an interplay of lattices and fixed points, and its results may be instantiated to commonly studied \emph{models} of cascade effects.

We characterize the systems through their fixed points, and equip them with two operators. We uncover properties of the operators, and express \emph{global} systems through combinations of \emph{local} systems. We enhance the theory with a notion of failure, and understand the class of shocks inducing a system to failure. We develop a notion of $\mu$-rank to capture the energy of a system, and understand the minimal amount of \emph{effort} required to fail a system, termed \emph{resilience}. We deduce a dual notion of \emph{fragility} and show that the combination of systems sets a limit on the amount of fragility inherited.
\end{abstract}

\maketitle

\section{Introduction}

Cascade effects refer to situations where the expected behavior governing a certain 
system appears to be \emph{enhanced} as this component is embedded into a greater system. The effects of change
in a subsystem may pass through \emph{interconnections} and enforce an indirect change on the state 
of any remote subsystem. As such effects are pervasive---appearing in various scenarios of ecological systems, 
communication infrastructures, financial networks, power grids and societal networks---there is an 
interest (and rather a need) to understand them. Models are continually proposed to capture 
instances of cascading behavior, yet the \emph{universal} properties of this phenomenon remain 
untouched. Our goal is to capture some essence of cascade effects, and develop an axiomatic theory around it.

A reflection on such a phenomenon reveals two informal aspects of it. The first aspect 
uncovers a notion of \emph{consequence} relation that seemingly drives the phenomenon. Capturing 
\emph{chains of events} seems to be inescapably necessary. The second aspect projects cascade 
effects onto a theory of subsystems, combinations and interaction. We should not expect any 
cascading behavior to occur in \emph{isolation}.

The line of research will be pursued within the context of systemic failure, and set along a guiding 
informal question. When handed a system of interlinked subsystems, when would a \emph{small} perturbation in some 
subsystems induce the system to failure?
The phenomenon of cascade effects (envisioned in this paper) restricts the possible systems to those satisfying posed axioms.
The analysis of cascade effects shall be perceived through an analysis on these systems.

We introduce a new class of (dynamical) systems that inherently capture cascading effects (viewed as \emph{consequential} effects) and are naturally amenable to combinations. We develop a general theory around those systems, and guide the endeavor towards an understanding of cascading failure. The theory evolves as an interplay of lattices and fixed points, and its results may be instantiated to commonly studied \emph{models} of cascade effects.

\subsection*{Our Systems}

The systems, in this introduction, will be motivated through an elementary example. This example is labeled M.0 and further referred to throughout the paper.

\begin{M0*} Let $G(V,A)$ be a digraph, and define $N(S) \subseteq V$ to be the set of nodes $j$ with $(i,j) \in A$ and $i \in S$. A vertex is of one of two colors, either black or white. The vertices are initially colored, and $X_0$ denotes the set of black colored nodes. The system evolves through discrete time to yield $X_1,X_2,\cdots$ sets of black colored nodes. Node $j$ is colored black at step $m+1$ if any of its neighbors $i$ with $j \in N(i)$ is black at step $m$. Once a node is black it remains black forever.
\end{M0*}
    
Our systems will consist of a collection of states along with internal dynamics. The collection of states is a finite set $P$. The dynamics dictate the evolution of the system through the states and are \emph{governed} by a class of maps $P\rightarrow P$. The state space in M.0 is the set $2^V$ where each $S \subseteq V$ identifies a subset of \emph{black} colored nodes; the dynamics are dictated by $g: X \mapsto X \cup N(X)$ as $X_{m+1} = g X_m$.

We intuitively consider some states to be \emph{worse} or \emph{less desirable} than others. The color \emph{black} may be undesirable in M.0, representing a \emph{failed} state of a node. State $S$ is then considered to be worse than state $T$ if it includes $T$. We formalize this notion by equipping $P$ with a partial order $\leq$. The order is only partial as not every pair of states may be comparable. It is natural to read $a \leq b$ in this paper as $b$ is a worse (or less desirable) state than $a$. The state space $2^V$ in M.0 is ordered by set inclusion $\subseteq$.

We expect two properties from the dynamics driving the systems. We require the dynamics to be \emph{progressive}. The system may only evolve to a state considered less desirable than its initial state. We also require \emph{undesirability} to be preserved during an evolution. The less desirable the initial state of a system is, the less desirable the final state (that the system evolves to) will be. We force each map $f: P \rightarrow P$ governing the dynamics to satisfy two axioms:
\begin{description}
 \item[A.1] If $a \in P$, then $a \leq fa$.
 \item[A.2] If $a, b \in P$ and $a \leq b$, then $fa \leq fb$.
\end{description}
The map $X \mapsto X \cup N(X)$ in M.0 satisfies both A.1 and A.2 as $S \subseteq S \cup N(S)$, and $S\cup N(S) \subseteq S'\cup N(S')$ if $S \subseteq S'$.

Our interest lies in the limiting outcome of the dynamics, and the understanding we wish to develop may be solely based on the \emph{asymptotic} behavior of the system. In M.0, we are interested in the set $X_m$ for $m$ large enough as a function of $X_0$. As $V$ is finite, it follows that $X_m = X_{|V|}$ for $m \geq |V|$. We are thus interested in the map $g^{|V|} : X_0 \mapsto X_{|V|}$. More generally, as iterative composition of a map satisfying A.1 and A.2 eventually yields idempotent maps, we equip the self-maps $f$ on $P$ with a third axiom:
\begin{description}
 \item[A.3] If $a \in P$, then $ffa = fa$.
\end{description}
Our class of interest is the (self-)maps (on $P$) satisfying the axioms A.1, A.2 and A.3. Each system will be identified with one such map. The system generated from an instance of M.0 corresponds to the map $X_0 \mapsto X_{|V|}$.

The axioms A.1, A.2 and A.3 naturally permeate a number of areas of mathematics and logic.
Within metamathematics and (universal) logic, Tarski introduced these three axioms 
(along with supplementary axioms) and launched his theory of consequence operator (see \cite{TAR1936} and \cite{TAR1956}). He aimed to 
provide a general characterization of the notion of deduction. As such, if $S$ represents a
set of statements taken to be true (i.e. premises), and $Cn(S)$ denotes the set of statements 
that can be \emph{deduced} to be true from $S$, then $Cn$ (as an operator) obeys A.1, A.2 and A.3.
Many familiar maps also adhere to the axioms. As examples, we may consider the function that maps %
 (i) a subset of a topological space to its topological closure,
 (ii) a subset of a vector space to its linear span,
 (iii) a subset of an algebra (e.g. group) to the subalgebra (e.g. subgroup) it generates,
 (iv) a subset of a Euclidean n-space to its convex hull.
Such functions may be referred to as \emph{closure operators} (see e.g. \cite{BIR1936}, \cite{BIR1967},  \cite{ORE1943} and \cite{WAR1942}), and are typically objects of study in \emph{universal algebra}.

\subsection*{Goal and Contribution of the Paper}
This paper has three goals. The first is to introduce and motivate the class of systems. The second is to present some properties of the systems, and develop preliminary tools for the analysis. The third is to construct a setup for cascading failure, and illustrate initial insight into the setup. The paper will not deliver an exhaustive exposition. It will introduce the concepts and augment them with enough results to allow further development.

We illustrate the contribution through M.0. We define $f$ and $g$ to be the systems derived from two instances $(V,A)$ and $(V,A')$ of M.0.

We establish that our systems are uniquely identified with their set of fixed points. We can reconstruct $f$ knowing only the sets $S$ containing $N(S)$ (i.e. the fixed points of $f$) with no further information on $(V,A)$. We further provide a complete characterization of the systems through the fixed points. The characterization yields a remarkable conceptual and analytical simplification in the study.

We equip the systems with a lattice structure, uncover operators ($\plus$ and $\proj$) and express \emph{complex} systems through formulas built from \emph{simpler} systems. The $\plus$ operator \emph{combines} the effect of systems, possibly derived from different models. The system $f\plus g$, as an example, is derived from $(V,A \cup A')$. The $\proj$ operator \emph{projects} systems onto each other allowing, for instance, the recovery of \emph{local} evolution rule. We fundamentally aim to extract properties of $f \plus g$ and $f \proj g$ through properties of $f$ and $g$ separately. We show that $\plus$ and $\proj$ lend themselve to well behaved operations when systems are represented through their fixed-points.

We realize the systems as interlinked components and formalize a notion of \emph{cascade effects}. Nodes in $V$ are identified with maps $e_1,\cdots, e_{|V|}$. The system $f\proj e_i$ then defines the evolution of the color of node $i$ as a function of the system state, and is identified with the set of nodes that \emph{reach} $i$ in $(V,A)$.

We draw a connection between shocks and systems, and enhance the theory with a notion of failure.
We show that minimal shocks (that fail a system $h$) exhibit a unique property that uncovers \emph{complement} subsystems in $h$, termed \emph{weaknesses}. A system is shown to be \emph{injectively} decomposed into its \emph{weaknesses}, and any weakness in $h \plus h'$ cannot result but from the combination of weaknesses in $h$ and $h'$.

We introduce a notion of $\mu$-rank of a system---akin to the (analytic) notion 
of a norm as used to capture the energy of a system---and
show that such a notion is unique should it adhere to natural principles.
The $\mu$-rank is tied to the number of connected components in $(V,A)$ when $A$ is symmetric.

We finally set to understand the minimal amount of \emph{effort} required to fail a system, termed \emph{resilience}. Weaknesses reveal a dual (equivalent) quantity, termed \emph{fragility}, and further puts resilience and $\mu$-rank on comparable grounds. The fragility is tied to the size of the largest connected component in $(V,A)$ when $A$ is symmetric. It is thus possible to formally define \emph{high ranked} systems that are not necessarily \emph{fragile}. The combination of systems sets a limit on the amount of fragility the new system inherits. Combining two subsystems cannot form a fragile system, unless one of the subsystems is initially fragile.

\subsection*{Outline of the Paper} 
Section 2 presents mathematical preliminaries. We characterize the systems in Section 3, and equip them with the operators in Section 4. We discuss component realization in Section 5, and derive properties of the systems lattice in Section 6.  We discuss cascade effects in Section 7, and provide connections to formal methods in Section 8. We consider cascading failure and resilience in Section 9, and conclude with some remarks in Section 10.

\section{Mathematical Preliminaries}
A partially ordered set or poset $(P,\leq)$ is a set $P$ equipped with a (binary) relation $\leq$ that is reflexive, antisymmetric and transitive. The element $b$ is said to cover $a$ denoted by $a \leqc b$ if $a \leq b$, $a \neq b$ and there is no $c$ distinct from $a$ and $b$ such that $a \leq c$ and $c \leq b$. A poset $P$ is graded if, and only if, it admits a rank function $\rho$ such that $\rho(a) = 0$ if $a$ is minimal and $\rho(a') = \rho(a) + 1$ if $a \leqc a'$. The poset $(P,\leq)$ is said to be a lattice if every pair of elements admits
a greatest lower bound (meet) and a least upper bound (join) in $P$. We define
the operators $\meet$ and $\join$ that sends a pair to their meet and join
respectively. The structures $(P,\leq)$ and $(P,\meet,\join)$ are
then isomorphic. A lattice is distributive if, and only if, $(a \join b) \meet c =
(a \meet c) \join (b \meet c)$ for all $a$, $b$ and $c$. The pair $(a,b)$ is said to be a modular pair if $c \join (a \meet b) = (c \join a) \meet b$ whenever $c \leq b$. A lattice is modular if all pairs are modular pairs. Finally, a \emph{finite} lattice is (upper) semimodular if, and only if, $a \join b$ covers both $a$ and $b$, whenever $a$ and $b$ cover $a \meet b$.

\subsection*{Notation} We denote $f(g(a))$ by $fga$, the composite $ff$ by $f^2$, and the inverse map of $f$ by $f^{-1}$. We also denote $f(i)$ by $f_i$ when convenient.

\section{The Class of Systems}\label{sec:class}

The state space is taken to be a \emph{finite} lattice $(P,\leq)$. We consider in this paper only posets $(P,\leq)$ that are lattices, as opposed to arbitrary posets. It is natural to read $a \leq b$ in this paper as $b$ is a worse (or less desirable) state than $a$. The meet (glb) and join (lub) of $a$ and $b$ will be denoted by $a\meet b$ and $a \join b$ respectively. A minimum and maximum element exist in $P$ (by finiteness) and will be denoted by $\check{p}$ and $\hat{p}$ respectively.

A system is taken to be a map $f: P \rightarrow P$ satisfying:
\begin{description}
 \item[A.1] If $a \in P$, then $a \leq fa$.
 \item[A.2] If $a, b \in P$ and $a \leq b$, then $fa \leq fb$.
 \item[A.3] If $a \in P$, then $ffa = fa$.
\end{description}
The set of such maps is denoted by $\L_P$ or simply by $\L$ when $P$ is irrelevant to the context. This set is necessarily finite as $P$ is finite.

\subsubsection*{Note on Finiteness} Finiteness is not essential to the development
in the paper; completeness can be used to replace finiteness when needed. We restrict the exposition in this paper to finite cases to ease non-necessary details. As every finite lattice is complete, we will make no mention of completeness throughout.

\subsection{Models and Examples}
The axioms A.1 and A.2 hold for typical ``models'' adopted for cascade effects. We present three models (in addition to M.0 provided in Section 1) supported on the Boolean lattice, two of which---M.1 and M.3---are standard examples (see \cite{GRA1978}, \cite{KLE2007} and \cite{MOR2000}). It can be helpful to identify a set $2^S$ with the set of all $black$ and $white$ colorings on the objects of $S$. A subset of $S$ then denotes the objects colored $black$. The model M.1 generalizes M.0 by assigning \emph{thresholds} to nodes in the graph. Node $i$ is colored $black$ when the number of neighbors colored $black$ surpasses its threshold. The model M.2 is \emph{noncomparable} to M.0 and M.1, and the model M.3 generalizes all of M.0, M.1 and M.2.

\begin{M1*} Given a digraph over a set $S$ or equivalently a map $N : S \rightarrow 2^S$, a map $k:S \rightarrow \mathbb{N}$ and a subset $X_0$ of $S$, let $X_1,X_2,\cdots$ be subsets of $S$ recursively defined such that $i \in X_{m+1}$ if, and only
  if, either $|N_i \cap X_m| \geq k_i$ or $i\in X_m$.
\end{M1*}
\begin{M2*} Given a collection $\mathcal{C} \subseteq 2^S$ for some set $S$, a map $k:\mathcal{C}\rightarrow \mathbb{N}$ and a subset $X_0$ of $S$, let $X_1,X_2,\cdots$ be
  subsets of $S$ recursively defined such that $i \in X_{m+1}$ if, and only
  if, either there is a $C\in \mathcal{C}$ containing $i$ such that $|C \cap X_m| \geq k_c$ or $i\in X_m$.
\end{M2*}
\begin{M3*} Given a set $S$, a collection of monotone maps $\phi_i$ (one for each $i \in S$) from $2^S$
  into $\{0,1\}$ (with $0<1$) and a subset $X_0$ of $S$, let $X_1,X_2,\cdots$ be
  subsets of $S$ recursively defined such that $i \in X_{m+1}$ if, and only
  if, either $\phi_i (X_m) = 1$ or $i\in X_m$.
\end{M3*}
We necessarily have $X_{|S|} = X_{|S|+1}$ in the three cases above, and the map $X_0 \mapsto X_{|S|}$ is then in $\Ell_{2^S}$. The dynamics depicted above may be captured in a more general form.
\begin{M4*} Given a finite lattice $L$, an order-preserving map $h: L \rightarrow L$, and $x_0 \in L$, let $x_1,x_2,\cdots \in L$ be recursively defined such that $x_{m+1} = x_m \join h(x_m)$.  
\end{M4*}
We have $x_{|L|} = x_{|L|+1}$ and the map $x_0 \mapsto x_{|L|}$ is in $\Ell_L$.

The axioms allow greater variability if the state space is modified or augmented accordingly. Nevertheless, this paper is only concerned with systems of the above form.

\subsubsection*{Note on Realization} Modifications of instances of M.i (e.g. altering values of $k$ in M.1) may not alter the system function. As the interest lies in understanding \emph{universal} properties of final evolution states, the analysis performed should be invariant under such modifications. However, analyzing the systems directly through their \emph{form} (as specified through M.0, M.1, M.2 and M.3) is bound to rely heavily on the representation used. Introducing the axioms and formalism enables an understanding of systems that is independent of their representation. It is then a separate question as to whether or not a system may be realized through some form, or whether or not restrictions on form translate into interesting properties on systems. Not all systems supported on the Boolean lattice can be realized through the form M.0, M.1 or M.2. However, every system in $\Ell_{2^S}$ may be realized through the form M.3.  Indeed, if $f \in \Ell_{2^S}$, then for every $i \in S$ define $\phi_i : 2^S \rightarrow \{0,1\}$ where $\phi_i(a) = 1$ if, and only if, $i \in f(a)$. The map $\phi_i$ is monotone as $f$ satisfies A.2.  Realization is further briefly discussed in Section~\ref{sec:comp}.

\subsection{Context, Interpretation and More Examples.}

A more \emph{realistic} interpretation of the models M.i comes from a more realistic interpretation of the state space.  This work began as an endeavor to understand the mathematical structure underlying models of diffusion of behavior commonly studied in the social sciences.  The setup there consists of a population of interacting agents.  In a societal setting, the agents may refer to individuals.  The interaction of the agents affect their behaviors or opinions.  The goal is to understand the spread of a certain behavior among agents given certain interaction patterns.  Threshold models of behaviors (captured by M.0, M.1, M.2 and M.3) have appeared in the work of Granovetter \cite{GRA1978}, and more recently in \cite{MOR2000}.  Such models are key models in the literature, and have been later considered by computer scientists, see. e.g., \cite{KLE2007} for an overview.

The model described by M.1 is known as the linear threshold model.  An individual adopts a behavior, and does not change it thereafter, if at least a certain number of its neighbors adopts that behavior.  Various variations can also be defined, see e.g.\ M.2 and M.3, and again \cite{KLE2007} for an overview.  The cascading intuition in all the variations however remains unchanged.  These models can generally be motivated through a game theoretic setup.  We will not be discussing such setups in this paper.  The \emph{no-recovery} aspect of the models considered may be further relaxed by introducing appropriate time stamps.  One such connection is described in \cite{KLE2007}. We are however interest in the instances where no-recovery occurs.
 
The models may also be given an interpretation in epidemiology.  Every agent may either be healthy or infected.  Interaction with an infected individual causes infections.  This is in direct resemblance to M.0.  Stochastics can also be added, either for a realistic approach or often for tractability.  There is also a vast literature on processes over graphs, see e.g., \cite{DUR1997} and \cite{NEW2010}.  Our aim is to capture the consequential effects that are induced by the interaction of several entities.  We thus leave out any stochatics for the moment; they may be added later with technical work.

On a different end, inspired by cascading failure in electrical grids, consider the following simple resistive circuit.  The intent is to guide the reader into a more realistic direction.
\begin{center}
  \begin{circuitikz}[scale=0.7, transform shape]\draw
    (0,1) to [short, o-o, l=$L_1$] (3,1) to [short, o-o, l=$L_2$] (6,1)
    (6,1) -- (6,0.5)  to[american voltage source] (6,-1) node[ground]{};
    \draw (0,1) -- (0,0.5) to[R] (0,-1) node[ground]{};
    \draw (3,1) -- (3,0.5) to[R] (3,-1) node[ground]{};
\end{circuitikz}
\end{center}
If line $L_2$ is disconnected from the voltage source, then line $L_1$ will also be disconnected from the source.  Indeed, the current passing through $L_1$ has to pass through $L_2$.  The converse is, of course, not true.  This interdependence between $L_1$ and $L_2$ is easily captured by a system in $\Ell_{2^{\{L_1,L_2\}}}$. More general dependencies (notably failures caused by a redistribution of currents) can be captured, and concurrency can be taken care of by going to power sets.  Indeed, M.0 also captures general reachability problems, where a node depicts an element of the state space.  Specifically, let $S$ be a set of states of some system, and consider a reflexive and transitive relation $\rightarrow$ such that $a \rightarrow b$ means that state $b$ is reachable from state $a$.  The map $2^S \rightarrow 2^S$ where $A \mapsto \{b : a \rightarrow b \text{ for some } a \in A\}$ satsfies A.1, A.2, and A.3 when $2^S$ is ordered by inclusion.

This work abstracts out the essential properties that gives rise to these situations.  The model M.3 depicts the most general form over the boolean lattice.  In M.3, the set $S$ can be interpreted to contain $n$ events, and an element of $2^S$ then depicts which events have occured.  A system is then interpreted as a collection of (monotone) implications:  if such and such event occurs, then such event occurs.  The more general model M.4 will be evoked in Section \ref{sec:formal}, while treating connections to formal methods and semantics of programming languages.

\subsubsection*{On Closure Operators} As mentioned in the introduction, the maps satisfying A.1, A.2 and A.3 are often known as closure operators.  On one end, they appeared in the work of Tarski (see e.g., \cite{TAR1936} and \cite{TAR1956}). On another end, they appeared in the work of Birkhoff, Ore and Ward (see e.g., \cite{BIR1936}, \cite{ORE1943} and \cite{WAR1942}, respectively).  The first origin reflects the consequential relation in the effects considered.  The second origin reflects the theory of interaction of multiple systems.  Closure operators appear as early as \cite{MOO1911}.  They are intimately related to Moore families or closure systems (i.e., collection of subsets of $S$ containing $S$ and closed under intersection) and also to Galois connections (see e.g., \cite{BIR1967} Ch. V and \cite{EVE1944}).
Every closure operator corresponds to a Moore family (see e.g., Subsection \ref{subsec:fp}). This connection will be extensively used throughout the paper.  Most of the properties derived in Sections \ref{sec:class} and \ref{sec:lattice} can be seen to appear in the literature (see e.g. \cite{BIR1967} Ch.\ V and \cite{CAS2003} for a recent survey).   They are very elementary, and will be easily and naturally rederived whenever needed.
Furthermore, every Galois connection induces one closure operator, and every closure operator arises from at least one Galois connection.  Galois connection will be briefly discussed in Section \ref{sec:eval}.  They will not however play a major explicit role in this paper.

\subsection{The Fixed Points of the Systems}\label{subsec:fp}

As each map in $\L$ sends each state to a respective fixed point, a
grounded understanding of a system advocates an understanding of its
fixed points. We develop such an understanding in this subsection, and
characterize the systems through their fixed points. Let $\Phi$ be the
map $f \mapsto \{a : fa=a\}$ that sends a system to its set of fixed
points.\newpage

\begin{prop}
 If $f\neq g$ then $\Phi f \neq \Phi g$.
\end{prop}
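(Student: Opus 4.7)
The plan is to prove the contrapositive: I will show that the set of fixed points $\Phi f$ completely determines $f$, by exhibiting an explicit formula that recovers $fa$ from $\Phi f$ and the order $\leq$.

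The key observation is that for any system $f \in \L$ and any $a \in P$, the value $fa$ is the least fixed point of $f$ lying above $a$. On the one hand, $fa$ is itself a fixed point by A.3, and $a \leq fa$ by A.1, so $fa \in \{b \in \Phi f : a \leq b\}$. On the other hand, if $b$ is any fixed point with $a \leq b$, then A.2 gives $fa \leq fb = b$, so $fa$ is indeed the minimum of that set. I would record this as a short lemma before attacking the proposition.

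From there the proposition is immediate. Suppose $\Phi f = \Phi g$. For every $a \in P$ one has
\[
fa \;=\; \min\{b \in \Phi f : a \leq b\} \;=\; \min\{b \in \Phi g : a \leq b\} \;=\; ga,
\]
so $f = g$. Contrapositively, $f \neq g$ forces $\Phi f \neq \Phi g$.

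I do not expect a real obstacle here: the argument uses each of A.1, A.2, A.3 exactly once in the natural way. The only point worth pausing on is existence of the minimum, but this is free because $fa$ is already a witness lying in the set, so the set is nonempty and bounded below by itself. (For later sections it will be useful to note that $\Phi f$ is in fact closed under arbitrary meets — a quick check using A.1 and A.2 — which is why the Moore-family viewpoint mentioned earlier in the excerpt is available; but this stronger fact is not needed for the present proposition.)
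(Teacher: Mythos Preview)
Your proof is correct. Both you and the paper argue the contrapositive, and both ultimately rely on the same three ingredients (A.1 to get $a\leq fa$, A.3 to get $fa\in\Phi f$, A.2 to compare). The packaging differs slightly: the paper runs a direct two-sided inequality, observing that $\Phi f=\Phi g$ makes $fa$ a fixed point of $g$, whence $ga\leq g(fa)=fa$, and symmetrically $fa\leq ga$. You instead isolate the explicit reconstruction formula $fa=\min\{b\in\Phi f: a\leq b\}$ and then read off $f=g$ from equality of the sets. Your route is marginally longer but buys something: the formula you prove is exactly the one the paper invokes a few lines later (in showing that any meet-closed subset containing $\hat p$ arises as some $\Phi f$), so your lemma does double duty.
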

\begin{proof}
 If $\Phi f = \Phi g$, then $ga \leq gfa = fa$ and $fa \leq fga = ga$ for each $a$. Therefore $f = g$.
\end{proof}
It is obvious that each state is mapped to a fixed point; it is less obvious that, knowing only the fixed points, the system can be reconstructed uniquely. It seems plausible then to directly define systems via their fixed point, yet doing so inherently supposes an understanding of the image set of $\Phi$.
\begin{prop}\label{pro:maxp}
 If $f \in \L_P$, then $\hat{p} \in \Phi f$.
\end{prop}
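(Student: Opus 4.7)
The plan is a direct application of axiom A.1 together with the definition of $\hat{p}$ as the maximum element of $P$. First I would invoke A.1 on the element $\hat{p}$ to obtain $\hat{p} \leq f\hat{p}$. Then, since $\hat{p}$ is the maximum element of $P$ and $f\hat{p} \in P$, we automatically have $f\hat{p} \leq \hat{p}$. Antisymmetry of the order $\leq$ then forces $f\hat{p} = \hat{p}$, which by definition of $\Phi$ gives $\hat{p} \in \Phi f$.

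There is essentially no obstacle here: the argument uses only A.1 (not A.2 or A.3) and the defining property of the top element. The proof is a two-line verification, and I would present it as such. The main value of stating this proposition explicitly is conceptual, namely that every system admits the maximum (worst) state as a fixed point, which anchors later structural results about $\Phi f$ as a Moore family containing $\hat{p}$.
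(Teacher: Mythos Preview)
Your argument is correct and matches the paper's proof exactly: the paper writes tersely that $\hat{p} \leq f\hat{p} \leq \hat{p}$, which is precisely your invocation of A.1 followed by maximality of $\hat{p}$, with antisymmetry implicit.
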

\begin{proof}
 Trivially $\hat{p} \leq f\hat{p} \leq \hat{p}$.
\end{proof}
Furthermore,
\begin{prop}\label{pro:closedmeet}
 If $a,b\in \Phi f$, then $a \meet b \in \Phi f$.
\end{prop}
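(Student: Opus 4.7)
The plan is to show $f(a \meet b) = a \meet b$ by sandwiching $f(a \meet b)$ between $a \meet b$ and itself, using the three axioms A.1 and A.2 together with the assumption that $a$ and $b$ are fixed points.

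First I would get the inequality $a \meet b \leq f(a \meet b)$ directly from A.1 applied to the element $a \meet b$. This gives one direction essentially for free.

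For the reverse inequality, I would exploit A.2 (monotonicity). Since $a \meet b \leq a$, monotonicity yields $f(a \meet b) \leq f(a)$, and since $a \in \Phi f$ we have $f(a) = a$, so $f(a \meet b) \leq a$. The same argument with $b$ gives $f(a \meet b) \leq b$. By the universal property of the meet, $f(a \meet b) \leq a \meet b$.

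Combining the two inequalities yields $f(a \meet b) = a \meet b$, i.e.\ $a \meet b \in \Phi f$. There is no real obstacle here; the only subtle point is simply observing that A.3 is not needed for this particular closure-under-meet property — the argument works for any inflationary, monotone self-map, which is why Moore families are characterized purely by closure under meets (together with containing $\hat{p}$, as in Proposition~\ref{pro:maxp}).
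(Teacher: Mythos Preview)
Your proof is correct and essentially identical to the paper's: both use A.2 on $a\meet b \leq a$ and $a\meet b \leq b$ to obtain $f(a\meet b) \leq fa \meet fb = a\meet b$, then close with A.1. Your remark that A.3 is unnecessary here is accurate and worth noting, though the paper does not comment on it.
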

\begin{proof}
 It follows from A.2 that $f(a\meet b) \leq fa$ and $f(a\meet b) \leq fb$. If $a,b \in \Phi f$, then $f(a\meet b) \leq fa \meet fb = a\meet b$. The result follows as $a\meet b \leq f(a \meet b)$.    
\end{proof}
In fact, the properties in Propositions \ref{pro:maxp} and \ref{pro:closedmeet} fully characterize the image set of $\Phi$.
\begin{prop}
 If $S \subseteq P$ is closed under $\meet$ and contains
 $\hat{p}$, then $\Phi f=S$ for some $f \in \L_P$.
\end{prop}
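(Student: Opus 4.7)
The plan is to construct $f$ explicitly from $S$ by sending each state $a$ to the smallest element of $S$ lying above it. Concretely, I would define
\[
    fa \;=\; \bigwedge\,\{\,s \in S : a \leq s\,\}.
\]
First I would check that $f$ is well-defined and actually lands in $S$. The set $\{s \in S : a \leq s\}$ is nonempty because $\hat{p} \in S$ by hypothesis, and its meet lies in $S$ because $S$ is closed under $\meet$ (finiteness lets us reduce an arbitrary meet to a finite iterated meet). So $fa$ is an element of $S$ for every $a$.

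Next I would verify A.1, A.2, A.3 in turn. For A.1, every $s$ in the indexing set satisfies $a \leq s$, so $a$ is a lower bound and hence $a \leq fa$. For A.2, if $a \leq b$ then $\{s \in S : b \leq s\} \subseteq \{s \in S : a \leq s\}$ by transitivity, and the meet of a smaller subset is at least as large, giving $fa \leq fb$. For A.3, note that $fa \in S$ and $fa \leq fa$, so $fa$ itself appears in $\{s \in S : fa \leq s\}$; since $fa$ is a lower bound of every such $s$, it is the meet, so $f(fa) = fa$.

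Finally I would check $\Phi f = S$. One direction: if $a = fa$, then $a \in S$ since $fa \in S$. Other direction: if $s \in S$, then $s$ belongs to $\{t \in S : s \leq t\}$, so $fs \leq s$; combined with A.1 this forces $fs = s$, so $s \in \Phi f$.

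There is no real obstacle here beyond making sure the meet-closure hypothesis is used in the right place (to land $fa$ inside $S$) and that $\hat{p} \in S$ is used to guarantee the indexing set is nonempty; both uses are dictated by the construction. One might worry about infinite meets, but finiteness of $P$ makes every such meet a finite one, and the preceding proposition already handles the binary case which extends by induction.
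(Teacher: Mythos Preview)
Your construction $fa = \bigwedge\{s \in S : a \leq s\}$ is exactly the paper's $f : a \mapsto \inf\{b \in S : a \leq b\}$, and your verification of well-definedness, A.1--A.3, and $\Phi f = S$ simply spells out what the paper leaves implicit. The approach is identical and the proof is correct.
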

\begin{proof}
 Construct $f : a \mapsto \inf\{b \in S : a \leq b\}$. Such a function is well defined and satisfies A.1, A.2 and A.3.
\end{proof}

It follows from Propositions \ref{pro:maxp} and \ref{pro:closedmeet} that $\Phi f$ forms a lattice under the induced order $\leq$. This conclusion coincides with that of Tarski's fixed point theorem (see \cite{TAR1955}). However, one additional structure is gained over arbitrary order-preserving maps. Indeed, the meet operation of the lattice $(\Phi f,\leq)$ coincides with that of the lattice $(P,\leq)$.

\begin{exa} Let $f : 2^V \rightarrow 2^V$ be the system derived from an instance $(V,A)$ of M.0. The fixed points of $f$ are the sets $S \subseteq V$ such that $S \supseteq N(S)$. If $S$ and $T$ are fixed points of $f$, then $S \cap T$ is a fixed point of $f$. Indeed, the set $S\cap T$ contains $N(S \cap T)$. The map $f$ sends each set $T$ to the intersection of all sets $S \supseteq T \cup N(S)$. Although every collection $C$ of sets in $2^V$ closed under $\cap$ and containing $V$ can form a system, it will not always be possible to find a digraph where $C$ coincides with the sets $S \supseteq N(S)$. The model M.0 is not \emph{complex} enough to capture all possible systems. 
\end{exa}
  
The space $\L$ is thus far only a set, with no further mathematical structure. The theory becomes lively when elements of $\L$ become \emph{related}.

\subsection{Overview Through an Example}\label{sec:running}

We illustrate some main ideas of the paper through an elementary example.  The example will run throughout the paper, revisited in each section to illustrate its corresponding notions and results. The example we consider is the following (undirected) instance of M.1:

\begin{center}
\begin{tikzpicture}
  [scale=.4,auto=center,every node/.style={circle,fill=black!10!white,scale=0.8}]
  \node (n1) at (1,10) {A,2};
  \node (n2) at (4,8)  {B,1};
  \node (n3) at (1,6)  {C,2};

    \foreach \from/\to in {n1/n2,n2/n3,n1/n3}
    \draw (\from) -- (\to);
\end{tikzpicture}
\end{center}

The nodes are labeled $A$, $B$ and $C$. Each node $I$ is tagged with an integer $k_I$ that denotes a \emph{threshold}. Each node can then be in either one of two colors: \emph{black} or \emph{white}. Node $I$ is colored black (and stays black forever) when at least $k_I$ neighbors are \emph{black}. In our example, node $A$ (resp. $C$) is colored black when both $B$ and $C$ (resp. $A$) are black. Node $B$ is colored black when either $A$ or $C$ are black. A node remains white otherwise.

The set underlying the state space is the set of possible colorings of nodes. Each coloring may be identified with a subset of $\{A,B,C\}$ containing the black colored nodes. The state space will then be identified with $2^{\bf{3}}$, the set of all subsets of $\{A,B,C\}$. The set $2^{\bf{3}}$ admits a natural ordering by inclusion ($\subseteq$) that turns it into a lattice. It may then be represented through a \emph{Hasse diagram} as:

\begin{center}
\begin{tikzpicture}[scale=0.8]
  \node (max) at (0,3) {$ABC$};
  \node (a) at (1.5,1.5) {$aBC$};
  \node (b) at (0,1.5) {$AbC$};
  \node (c) at (-1.5,1.5) {$ABc$};
  \node (d) at (1.5,0) {$AbC$};
  \node (e) at (0,0) {$aBc$};
  \node (f) at (-1.5,0) {$Abc$};
  \node (min) at (0,-1.5) {$abc$};
  \draw (min) -- (d) -- (a) -- (max) -- (b) -- (f)
  (e) -- (min) -- (f) -- (c) -- (max)
  (d) -- (b);
  \draw[preaction={draw=white, -,line width=6pt}] (a) -- (e) -- (c);
\end{tikzpicture}
\end{center}

{\bf Notation:} We denote subsets of $\{A,B,C\}$ as strings of letters. Elements in the set are written in uppercase, while elements not in the set are written in lowercase. Thus $aBC$, $Abc$ and $abc$ denote $\{B,C\}$, $\{A\}$ and $\{\}$ respectively. The string $AC$ (with $b/B$ absent) denotes both $AbC$ and $ABC$.

The system derived from our example is the map $f : 2^{\bf 3} \rightarrow 2^{\bf 3}$ satisfying A.1, A.2 and A.3 such that $A \mapsto ABC$, $C \mapsto ABC$ and all remaining states are left unchanged. The fixed points of $f$ yield the following representation.

\begin{center}
\begin{tikzpicture}[scale=0.6]
  \node (max) at (0,3) {$\times$};
  \node (a) at (1.5,1.5) {$\circ$};
  \node (b) at (0,1.5) {$\circ$};
  \node (c) at (-1.5,1.5) {$\circ$};
  \node (d) at (1.5,0) {$\circ$};
  \node (e) at (0,0) {$\times$};
  \node (f) at (-1.5,0) {$\circ$};
  \node (min) at (0,-1.5) {$\times$};
  \draw (min) -- (d) -- (a) -- (max) -- (b) -- (f)
  (e) -- (min) -- (f) -- (c) -- (max)
  (d) -- (b);
  \draw[preaction={draw=white, -,line width=6pt}] (a) -- (e) -- (c);
\end{tikzpicture}
\end{center}

We indicate, on the diagram, a fixed point by $\times$ and a non-fixed point by $\circ$.

\subsection{On the System Maps and their Interaction}  \label{ap:motivation}
As mentioned in the introduction, the systems of interest consist of a collection of states along with internal dynamics. The collection of states is a finite set $P$. The dynamics dictate the evolution of the system through the states and are \emph{governed} by a class $\K$ of maps $P\rightarrow P$. The class $\K$ is closed under composition, contains the identity map and satisfies:
\begin{description}
 \item[P.1] If $a \neq b$ and $fa=b$ for some $f \in \K$, then $gb \neq a$ for every $g \in \K$.
 \item[P.2] If $gfa=b$ for some $f, g \in \K$, then $hga=b$ for some $h\in \K$.
\end{description}
The principles P.1 and P.2 naturally induce a partial order $\leq$ on the set $P$. The principles P.1 and P.2 further force the functions to be \emph{well adapted} to this order.
\begin{prop}
There exists a partial order $\leq$ on $P$ such that for each $f \in \K$:
\begin{description}
\item[A.1] If $a \in P$, then $a \leq fa$.
\item[A.2] If $a, b \in P$ and $a \leq b$, then $fa \leq fb$.
\end{description}
\end{prop}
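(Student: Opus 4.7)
The plan is to take the obvious candidate order and verify it works: declare $a \leq b$ if and only if there exists $f \in \K$ with $fa = b$. The three order axioms and the two system axioms A.1 and A.2 should each fall out of one of the assumed ingredients (identity in $\K$, closure under composition, P.1, P.2), in a one-to-one correspondence.

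First I would check that $\leq$ is a partial order. Reflexivity uses that the identity lies in $\K$. Transitivity uses closure under composition: if $fa = b$ and $gb = c$, then $(gf)a = c$ with $gf \in \K$. Antisymmetry is exactly what P.1 rules out: if $a \leq b$ and $b \leq a$ with $a \neq b$, then some $f \in \K$ sends $a$ to $b$ while some $g \in \K$ sends $b$ back to $a$, contradicting P.1.

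Axiom A.1 is immediate from the definition: for any $f \in \K$ and any $a$, the equation $f a = fa$ witnesses $a \leq fa$. The real content is A.2. Suppose $a \leq b$, so $\phi a = b$ for some $\phi \in \K$, and let $f \in \K$ be arbitrary. We must produce $h \in \K$ with $h(fa) = fb$. Since $fb = f(\phi a) = (f\phi)a$, we are looking at the composite $f\phi$ applied to $a$. Applying P.2 with the roles of its ``$f$'' and ``$g$'' played by $\phi$ and $f$ respectively, and with $b := f\phi a$, the hypothesis $(f\circ\phi)a = b$ is automatic and the conclusion gives some $h \in \K$ with $h(fa) = b = fb$. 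Hence $fa \leq fb$, which is A.2.

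The only step that requires any thought is matching A.2 to P.2, since the rest is formal bookkeeping; the main obstacle is simply noticing that P.2 is precisely the right ``rearrangement'' principle to push the witness $\phi$ past the application of $f$. Antisymmetry is where P.1 is indispensable, while composition-closure and identity are precisely what is needed to make $\leq$ a preorder; thus each of the stated principles contributes exactly one ingredient to the conclusion.
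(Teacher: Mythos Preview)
Your proposal is correct and follows essentially the same route as the paper: define $a \leq b$ iff $b = fa$ for some $f \in \K$, derive reflexivity and transitivity from the identity and composition closure, antisymmetry from P.1, A.1 directly from the definition, and A.2 by instantiating P.2 exactly as you do. The paper's proof is terser (and even attributes A.1 to P.1, which is unnecessary since, as you note, A.1 is immediate from the definition of $\leq$), but the argument is the same.
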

\begin{proof}
 Define a relation $\leq$ on $P$ such that $a \leq b$ if, and only if, $b = fa$ for some $f\in \K$. The relation $\leq$ is reflexive and transitive as $\K$ is closed under composition and contains the identity map, respectively. Both antisymmetry and A.1 follow from P.1. Finally, if $a \leq b$, then $b = ga$ for some $g$. It then follows by P.2 that $fb = fga = hfa$ for some $h$. Therefore, $fa \leq fb$.
\end{proof}
We only alluded that the maps in $\K$ will govern our dynamics. No law of interaction is yet specified as to \emph{how} the maps will govern the dynamics. As the state space is finite, the interaction may be motivated by iterative (functional) composition. For some map $\phi : \mathbb{N} \rightarrow \K$, the system starts in a state $a_0$ and evolves through $a_1,a_2,\cdots$ with $a_{i+1} = \phi_i a_i$. We reveal properties of such an interaction.

Let $\phi : \mathbb{N} \rightarrow \S \subseteq \K$ be a surjective map, and define a map $F_i$ recursively as $F_1 = \phi_1$ and $F_{i+1} = \phi_{i+1}F_i$.
\begin{prop}
 For some $M$, we have $F_m = F_M$ for $m\geq M$.
\end{prop}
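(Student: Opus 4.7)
My plan is to exploit the fact that axiom A.1 makes each orbit $\{F_i a\}_i$ a nondecreasing sequence in the finite poset $P$, hence eventually constant, and then to upgrade this pointwise stabilization to a uniform stabilization of the maps $F_i$ by taking a maximum over the finitely many points $a \in P$.

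First I would establish pointwise monotonicity. Fix $a \in P$. By definition $F_{i+1} a = \phi_{i+1}(F_i a)$, and since $\phi_{i+1} \in \K$ satisfies A.1 we have $F_i a \leq \phi_{i+1}(F_i a) = F_{i+1} a$. Hence $F_1 a \leq F_2 a \leq \cdots$ is a nondecreasing sequence in $P$. Because $P$ is finite, every such sequence takes only finitely many distinct values and therefore stabilizes; so for each $a \in P$ there is a least index $M_a$ with $F_m a = F_{M_a} a$ for all $m \geq M_a$.

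Then I would set $M = \max_{a \in P} M_a$, which is well defined because $P$ is finite. For every $a \in P$ and every $m \geq M$ we have $m \geq M \geq M_a$, so $F_m a = F_{M_a} a = F_M a$, and therefore $F_m = F_M$ as self-maps on $P$.

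There is essentially no obstacle here; the argument is immediate once the correct monotone sequence is identified. It is worth noting, for orientation, that the proof uses neither surjectivity of $\phi$, nor axioms A.2 and A.3, nor the principles P.1 and P.2; only A.1 together with finiteness of $P$ enters. The surjectivity of $\phi$ onto $\S$ will presumably play its role in a subsequent statement identifying the common limit $F_M$ as a specific element of $\Ell_P$ determined by $\S$.
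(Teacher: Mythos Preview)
Your proof is correct and follows essentially the same approach as the paper's own proof, which simply notes that A.1 gives $F_1 a \leq F_2 a \leq \cdots$ and then invokes finiteness of $P$. You have merely spelled out the passage from pointwise to uniform stabilization (via $M = \max_a M_a$) that the paper leaves implicit.
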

\begin{proof}
It follows from A.1 that $F_1a \leq F_2a \leq \cdots$. The result then follows from finiteness of $P$. 
\end{proof}
\begin{prop} \label{pro:idempotent}
The map $F_M$ is idempotent if $\phi^{-1}f$ is a non-finite set for each $f\in \S$.
\end{prop}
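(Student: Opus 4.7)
The plan is to fix an arbitrary $a \in P$, let $b = F_M a$, and show that $F_M b = b$. The point is to exploit the fact that the sequence $F_1 a \leq F_2 a \leq \cdots$ stabilizes at $b$ from stage $M$ onward, and then use the hypothesis that each $f \in \S$ is used infinitely often to conclude that $b$ is a common fixed point of every map in $\S$.

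First I would observe that for every $m \geq M$ we have $F_m a = F_M a = b$, and by the recursive definition $F_{m+1} a = \phi_{m+1}(F_m a) = \phi_{m+1}(b)$. Since also $F_{m+1} a = b$, this gives $\phi_{m+1}(b) = b$ for every $m \geq M$; equivalently $\phi_n(b) = b$ for every $n > M$.

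Now I would invoke the hypothesis that $\phi^{-1}f$ is infinite for each $f \in \S$. Combined with the previous step, for any $f \in \S$ we can pick $n > M$ with $\phi_n = f$, which forces $f(b) = b$. Hence every element of $\S$ fixes $b$. In particular $\phi_1,\dots,\phi_M \in \S$ all fix $b$, so unrolling the composition gives
\[
F_M b = \phi_M \phi_{M-1} \cdots \phi_1 (b) = \phi_M \phi_{M-1} \cdots \phi_2 (b) = \cdots = \phi_M(b) = b,
\]
and therefore $F_M F_M a = F_M b = b = F_M a$, proving idempotence.

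The only subtle step is the second one, where one has to recognize that the infinite-preimage assumption is exactly what is needed to turn the stabilization of the $F_m$ sequence into the stronger statement that $b$ is fixed by every map in $\S$; without it, $b$ would only be fixed by whatever maps appear cofinally in the sequence $\phi$. Everything else is just iteration of A.1 (for the stabilization proved in the preceding proposition) and direct unwinding of the composition.
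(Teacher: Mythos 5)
Your proof is correct and follows essentially the same route as the paper: the paper's one-line argument "$fF=F$ for each $f\in\S$ with infinite preimage, hence $FF=F$ since $F$ is a finite composition of such maps" is exactly your pointwise observation that $b=F_Ma$ is fixed by every $\phi_n$ with $n>M$, hence by every $f\in\S$, hence by the composite $\phi_M\cdots\phi_1$. No gaps; your version just spells out the unrolling in more detail.
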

\begin{proof}
If $\phi^{-1}f$ is non-finite, then $fF = F$. If $\phi^{-1}f$ is non-finite for all $f \in \S$, then $FF = F$ as $F$ is the finite composition of maps in $\S$.
\end{proof}

Let $\psi : \mathbb{N} \rightarrow \S$ be another surjective map, and define a map $G_i$ recursively as $G_1 = \phi_1$ and $G_{i+1} = \phi_{i+1}G_i$. For some $N$, we necessarily get $G_{N} = G_n$ for $n \geq N$.
\begin{prop} \label{pro:equalmaps}
It follows that $F_M = G_N$, if $\phi^{-1}f$ and $\psi^{-1}f$ are non-finite sets for each $f\in \S$.
\end{prop}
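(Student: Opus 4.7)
The plan is to establish the absorption identities $F_M G_N = G_N$ and $G_N F_M = F_M$, and then to deduce $F_M = G_N$ pointwise by a two-sided sandwich using A.1 and A.2. The starting point is the observation already used in the proof of Proposition \ref{pro:idempotent}: since $\phi^{-1} f$ is infinite, $f$ occurs as some $\phi_{k+1}$ with $k \geq M$, whence $f F_M = f F_k = F_{k+1} = F_M$. Because the non-finiteness hypothesis is assumed uniformly for every $f \in \S$, this upgrades to $f F_M = F_M$ for every $f \in \S$, and symmetrically $f G_N = G_N$ for every $f \in \S$.

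Next I would unwind the compositions. Writing $F_M = \phi_M \phi_{M-1} \cdots \phi_1$ and $G_N = \psi_N \psi_{N-1} \cdots \psi_1$, the identity $\phi_i G_N = G_N$ for each $\phi_i \in \S$ lets us strip factors off $F_M G_N$ one at a time, starting from the innermost:
\[
F_M G_N a \;=\; \phi_M \cdots \phi_2 (\phi_1 G_N a) \;=\; \phi_M \cdots \phi_2 (G_N a) \;=\; \cdots \;=\; G_N a.
\]
The symmetric argument, stripping each $\psi_j$ off $G_N F_M$ using $\psi_j F_M = F_M$, yields $G_N F_M = F_M$.

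For the conclusion, $F_M$ and $G_N$ both satisfy A.1 and A.2, being finite compositions of maps that do. Hence $a \leq G_N a$ combined with monotonicity of $F_M$ gives $F_M a \leq F_M G_N a = G_N a$, and symmetrically $G_N a \leq G_N F_M a = F_M a$. Therefore $F_M a = G_N a$ for every $a \in P$.

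The one subtle point, and where I would be most careful, is the uniformity of the absorption: the single-map statement behind Proposition \ref{pro:idempotent} must be invoked for \emph{every} $f \in \S$ simultaneously, in order to absorb each factor of the opposing composition. Once this is secured, the argument is routine and bypasses the fixed-point characterization of Subsection \ref{subsec:fp} entirely.
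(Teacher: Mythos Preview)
Your proof is correct and follows essentially the same approach as the paper: establish the absorption identities $F_M G_N = G_N$ and $G_N F_M = F_M$, then sandwich using A.1 and A.2. Your version is in fact more explicit than the paper's terse proof, which simply asserts ``As $F$ and $G$ are idempotent, then $FG = G$ and $GF = F$''---you correctly spell out that the absorption comes from each $f \in \S$ satisfying $fF_M = F_M$ and $fG_N = G_N$, and then strip the factors one by one.
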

\begin{proof} 
Define $F = F_M$ and $G = G_N$. As $F$ and $G$ are idempotent, then $FG = G$ and $GF = F$. Therefore $Fa \leq FGa = Ga$ and $Ga \leq GFa = Fa$.
\end{proof}

The maps governing the dynamics are to be considered as \emph{intrinsic mechanism wired into} the system. The effect of each map should not die out along the evolution of the system, but should rather keep on resurging. Such a consideration hints to an interaction insisting each map to be applied infinitely many times. There is immense variability in the order of application. However, we only want to care about the limiting outcome of the dynamics. By Proposition \ref{pro:equalmaps}, such a variability would then make no difference from our standpoint. We further know, through Proposition \ref{pro:idempotent}, that iterative composition in this setting cannot lead but to idempotent maps. We then impose---with no loss in generality---a third principle (P.3) on $\K$ to contain only idempotent maps. This principle gives rise to a third axiom.

\begin{description}
\item[A.3] For $a \in P$, $ffa = fa$.
\end{description}

We define $\L_P$ to be the set of maps satisfying A.1, A.2 and A.3. The set $\L_P$ is closed under composition and contains each element of $\K$ with P.3 imposed, including the identity map. Furthermore, the principles P.1, P.2 and P.3 remain satisfied if $\K$ is replaced by $\L_P$. We will then extend $\K$ to be equal to $\Ell_P$. This extension offers greater variability in dynamics, and there is no particular reason to consider any different set. We further consider only posets $(P,\leq)$ that are lattices, as opposed to arbitrary posets.

\section{The Lattice of Systems}\label{sec:lattice}

The theory of cascade effects presented in this paper is foremost a theory of combinations and interconnections. As such, functions shall be treated in relation to each other. The notion of desirability on states introduced by the partial order translates to a notion of desirability on systems. We envision that systems combined together should form less desirable systems, i.e. systems that more likely to evolve to less desirable states. Defining an order on the maps is natural to formalize such an intuition. We define the relation $\leq$ on $\L$, where $f \leq g$ if, and only if, $fa \leq ga$ for each $a$. 

\begin{prop}\label{pro:Llattice}
The relation $\leq$ is a partial order on $\Ell$, and the poset $(\L,\leq)$ is a lattice.
\end{prop}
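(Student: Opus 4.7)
The plan is to handle the partial order claim pointwise and then to obtain the lattice structure by splitting the meet and the join into two separate constructions: the meet is produced directly in closed form, and the join is produced indirectly via the fixed-point characterization from Subsection~\ref{subsec:fp}.

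First I would verify the three partial-order axioms. Reflexivity ($f \leq f$) and transitivity transfer immediately from the corresponding properties of $\leq$ on $P$ applied pointwise. Antisymmetry reduces to the observation that if $fa \leq ga$ and $ga \leq fa$ hold for every $a \in P$, then $fa = ga$ for every $a$, so $f = g$ as functions.

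Next I would exhibit the meet. I would define $(f \meet g)(a) := fa \meet ga$ and check that this map lies in $\L$. Axioms A.1 and A.2 are immediate: $a \leq fa$ and $a \leq ga$ yield $a \leq fa \meet ga$, while monotonicity of $f$ and $g$ gives monotonicity of the pointwise meet. The only subtle point is A.3. Since $fa \meet ga \leq fa$, A.2 and A.3 for $f$ give $f(fa \meet ga) \leq f(fa) = fa$, and symmetrically $g(fa \meet ga) \leq ga$, hence $(f \meet g)((f \meet g)(a)) \leq fa \meet ga$; the reverse inequality is A.1. That this is the greatest lower bound is automatic: if $h \leq f$ and $h \leq g$ in $\L$, then $ha \leq fa \meet ga$ for every $a$.

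For the join I would bypass any explicit formula and work through fixed points. A short computation first establishes the order-reversal $f \leq g \iff \Phi g \subseteq \Phi f$: if $f \leq g$ and $a \in \Phi g$, then $a \leq fa \leq ga = a$ forces $a \in \Phi f$; conversely, if $\Phi g \subseteq \Phi f$, then for any $a$ we have $ga \in \Phi g \subseteq \Phi f$, so $fa \leq fga = ga$. Now the set $\Phi f \cap \Phi g$ contains $\hat p$ and is closed under $\meet$ (both $\Phi f$ and $\Phi g$ are), so by the characterization proposition there exists a unique $h \in \L$ with $\Phi h = \Phi f \cap \Phi g$. The inclusions $\Phi h \subseteq \Phi f, \Phi g$ give $f \leq h$ and $g \leq h$, and for any upper bound $k$ we have $\Phi k \subseteq \Phi f \cap \Phi g = \Phi h$, whence $h \leq k$. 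Thus $h$ is the join $f \join g$.

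The only step that demands genuine care is verifying A.3 for the pointwise meet; once that is in hand, the join construction requires no explicit iterative formula because the fixed-point lemma of Subsection~\ref{subsec:fp} does all the heavy lifting.
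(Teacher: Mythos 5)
Your proof is correct, and while the meet half coincides with the paper's (the pointwise map $a \mapsto fa \meet ga$, with you supplying the A.3 verification that the paper leaves as ``it can be checked''), your treatment of the join is genuinely different. The paper gets the join abstractly: the constant map $a \mapsto \hat{p}$ is a top element, so the set of upper bounds of $\{f,g\}$ is nonempty and finite, and---being closed under the already-constructed meet---has a least element. This is the standard ``finite meet-semilattice with top is a lattice'' argument and produces no description of $f \plus g$. You instead construct the join explicitly as the unique $h \in \L$ with $\Phi h = \Phi f \cap \Phi g$, using the earlier characterization of the image of $\Phi$ together with the order-reversal $f \leq g \iff \Phi g \subseteq \Phi f$, which you prove on the spot. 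This buys you more than the proposition asks for: you have effectively pre-proved the identity $\Phi(f \plus g) = \Phi f \cap \Phi g$ and the order-part of the isomorphism $(\L,\leq) \cong (\F,\supseteq)$, both of which the paper only establishes later (in the subsection on the effect of the operators on fixed points and in Theorem~\ref{thm:iso}). The cost is a slightly longer argument and a dependence on the fixed-point characterization propositions of Subsection~\ref{subsec:fp}; since those precede this proposition in the paper, there is no circularity, and the proof stands.
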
 
\begin{proof}
 The reflexivity, antisymmetry and transitivity properties of $\leq$ follow easily from A.1 and A.2. If $f,g \in \L$, then define $h : a \mapsto fa \meet ga$. It can be checked that $h \in \L$. Let $h'$ be any lower bound of $f$ and $g$, then $h'a \leq fa$ and $h'a \leq ga$. Therefore $h'a \leq fa \meet ga = ha$, and so every pair in $\L$ admits a greatest lower bound in $\L$. Furthermore, the map $a \mapsto \hat{p}$ is a maximal element in $\L$. The set of upper bounds of $f$ and $g$ in $\L$ is then non-empty, and necessarily contains a least element by finiteness. Every pair in $\L$ then also admits a least upper bound in $\L$.
\end{proof}

We may then deduce join and meet operations denoted by $\plus$ (combine) and $\proj$ (project) respectively.
The meet of a pair of systems was derived in the proof of Proposition \ref{pro:Llattice}.
\begin{prop}
 If $f,g \in \L$, then $f\proj g$ is $a \mapsto fa \meet ga$.
\end{prop}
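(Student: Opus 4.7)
The plan is to observe that this proposition is essentially a restatement of the meet construction already worked out inside the proof of Proposition \ref{pro:Llattice}. What remains is to present the verification cleanly and note that uniqueness of the meet in a lattice makes this the desired formula.

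First I would define $h : a \mapsto fa \meet ga$ and check the three axioms. A.1 is immediate: $a \leq fa$ and $a \leq ga$, so $a \leq fa \meet ga = ha$. For A.2, if $a \leq b$, then A.2 on each of $f$ and $g$ gives $fa \leq fb$ and $ga \leq gb$, whence $fa \meet ga \leq fb \meet gb$, i.e.\ $ha \leq hb$. The only slightly nontrivial verification is A.3: since $ha \leq fa$, applying A.2 for $f$ and then A.3 for $f$ yields $f(ha) \leq f(fa) = fa$, and symmetrically $g(ha) \leq ga$, so $h(ha) = f(ha) \meet g(ha) \leq fa \meet ga = ha$; combined with A.1 for $h$ this gives $h(ha) = ha$.

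Next I would show $h$ is the greatest lower bound. By construction $ha = fa \meet ga \leq fa$ and $\leq ga$, so $h \leq f$ and $h \leq g$. If $h' \in \L$ satisfies $h' \leq f$ and $h' \leq g$, then $h'a \leq fa$ and $h'a \leq ga$ for every $a$, hence $h'a \leq fa \meet ga = ha$, so $h' \leq h$. Thus $h$ is the meet of $f$ and $g$ in $\L$, and by definition this meet is $f \proj g$.

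There is no real obstacle: every step is a direct application of A.1--A.3 together with the defining universal property of $\meet$ in $P$. The only point worth emphasizing is the A.3 verification, since it requires using the monotonicity and idempotence of the individual maps $f$ and $g$ (rather than any property of $h$ itself) to bound $h(ha)$ above by $ha$.
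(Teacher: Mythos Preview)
Your proposal is correct and follows exactly the approach the paper takes: the paper simply refers back to the proof of Proposition~\ref{pro:Llattice}, where $h:a\mapsto fa\meet ga$ is defined, declared to be in $\L$ (``it can be checked''), and shown to be the greatest lower bound. You have just spelled out the ``it can be checked'' part, which the paper omits.
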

On a dual end,
\begin{prop} \label{pro:join}
 If $f,g \in \L$, then $f\plus g$ is the least fixed point of the map $h \mapsto (fg)h(fg)$. As $P$ is finite, it follows that $f \plus g = (fg)^{|P|}$.
\end{prop}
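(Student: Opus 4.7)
The plan is to let $k = (fg)^{|P|}$ and show (i) $k \in \L$, (ii) $k = f \plus g$, and (iii) $k$ is the least fixed point of $T : h \mapsto (fg) \cdot h \cdot (fg)$ in the appropriate sense. For (i), the composite $fg$ satisfies A.1 (since $a \leq ga \leq fga$, via A.1 of $g$ then A.1 of $f$) and A.2 (composition of monotone maps), and iteration preserves both. For A.3, the chain $a \leq (fg)a \leq (fg)^2 a \leq \cdots$ is non-decreasing in the finite lattice $P$ and therefore stabilizes in at most $|P|$ steps; hence $fg \cdot k = k$, and consequently $k \cdot k = (fg)^{2|P|} = (fg)^{|P|} = k$.

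Next I would show $k = f \plus g$. By A.1 of $g$ and A.2 of $f$, $fa \leq f(ga) = (fg) a \leq ka$, so $f \leq k$; by A.1 of $f$, $ga \leq f(ga) \leq ka$, so $g \leq k$. For leastness, let $h \in \L$ satisfy $f, g \leq h$; I would prove $(fg)^n \leq h$ by induction on $n$. The base $n=0$ is A.1 for $h$. For the step, suppose $(fg)^n a \leq ha$; by A.2 of $g$, $g((fg)^n a) \leq g(ha)$, and from $g \leq h$ together with A.3 of $h$, $g(ha) \leq h(ha) = ha$; the same pattern with $f$ then gives $(fg)^{n+1} a \leq f(ha) \leq ha$. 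Taking $n = |P|$ yields $k \leq h$, establishing $k = f \plus g$.

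Finally I would verify the fixed-point description. $T(k) = k$ because $fg \cdot k = k$ by stabilization, and $k \cdot fg = k$ as well: combining $a \leq fga$ (A.1 for $fg$) with $fga \leq ka$ (from $fg \leq k$), A.2 and A.3 of $k$ give $ka \leq k(fga) \leq k(ka) = ka$, so $k(fga) = ka$. For leastness, $T$ is pointwise monotone on $P^P$, and any $h$ satisfying A.1 with $T(h) = h$ obeys $h = T^n(h) \geq T^n(\text{id}) = (fg)^{2^n}$ for every $n$, so $h \geq k$ once $2^n \geq |P|$. The only delicate step is the $k \cdot fg = k$ identity, whose sandwich argument genuinely exploits A.2 and A.3 of $k$; the remainder is direct bookkeeping with the axioms and the finiteness of $P$.
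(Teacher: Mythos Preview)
Your proof is correct and follows the same route as the paper: set $k=(fg)^{|P|}$, verify $k\in\L$ via stabilization of the increasing chain, and show it is both the least upper bound of $\{f,g\}$ and the least fixed point of $T$. The paper's minimality argument is marginally slicker---from $f,g\leq h'$ with $h'\in\L$ it obtains $fh'=gh'=h'$ directly (A.1, A.3), hence $kh'=(fg)^{|P|}h'=h'$ and $k\leq h'$---whereas you induct on $(fg)^n\leq h$; one small slip: $T^n(\mathrm{id})=(fg)^{2n}$, not $(fg)^{2^n}$, since $T$ pre- and post-composes by $fg$ once each, though this does not affect the conclusion.
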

\begin{proof}
Define $h_0 = (fg)^{|P|}$. Since the map $fg$ satisfies A.1 and A.2, then $h_0$ satisfies A.1 and A.2. Furthermore, iterative composition yields $(fg)^{|P|+1} = (fg)^{|P|}$. Then $h_0$ is idempotent i.e. satisfies A.3. The map $h_0$ is then a fixed point of $h \mapsto (fg)h(fg)$. Moreover, every upperbound on $f$ and $g$ is a fixed point of $h \mapsto (fg)h(fg)$. Let $h'$ be such an upperbound, then $fh'=h'$ and $gh' = h'$. It follows that $h_0h' = h'$ i.e. $h_0 \leq h'$. 
\end{proof}
The lattice $\L_P$ has a minimum and a maximum as it is finite. The minimum element (denoted by $0$ or $0_p$) corresponds to the identity map $a \mapsto a$. The maximum (denoted by $1$ or $1_p$) corresponds to $a \mapsto \hat{p}$.

\subsection{Interpretation and Examples}

The $\plus$ operator yields the most desirable system incorporating the effect of both of its operands. The $\proj$ operator dually yields the least desirable system whose effects are contained within both of its operands. Their use and significance is partially illustrated through the following six examples.

\subsubsection*{Example 0. Intuitive interpretation of the $\plus$ operator}  The $\plus$ operation combines the rules of the systems.  If each of $f$ and $g$ is seen to be described by a set of monotone deduction rules, then $f+g$ is the system that is obtained from the union of these sets of rules. The intuitive picture of combining rules may also found in the characterization $f \plus g = (fg)^{|P|}$.  Both rules of $f$ and $g$ are iteratively applied on an initial state to yield a final state. Furthermore, the order of composition does not affect the final state, as long as each system is applied enough times.  This insight follows from the interaction of A.1 and A.2, and is made formal in Subsection \ref{ap:motivation}. 

In a societal setting, each agent' state is governed by a set of local rules.  Every such set only affects the state of its corresponding agent.  The aggregate (via $\plus$) of all the local rules then defines the whole system.  It allows for an interaction between the rules, and makes way for cascade effects to emerge.  In the context of failures in infrastructure, the $\plus$ operator enables adding new conditions for failure/disconnections in the system.   This direction of aggregating local rules is further pursued in Section \ref{sec:comp} on component realization.  The definition of cascade effects is further expounded in Section \ref{sec:eval}.   The five examples to follow also provide additional insight.

\subsubsection*{Example 1. Overview on M.0} Let $f$ and $f'$ be systems derived from instances $(V,A)$ and $(V,A')$ of M.0. If $A' \subseteq A$, then $f' \leq f$. If $A'$ and $A$ are non-comparable, an inequality may still hold as different digraphs may give rise to the same system. The system $f \plus f'$ is the system derived from $(V, A\cup A')$. The system $f\proj f'$ is, however, not necessarily derived from $(V, A\cap A')$. If $(V,A)$ is a directed cycle and $(V,A')$ is the same cycle with the arcs reversed, then $f = f'$ while $(V, A\cap A')$ is the empty graph and yields the $0$ system.

\subsubsection*{Example 2. Combining Update Rules} Given a set $S$, consider a subset $N_i \subseteq S$ and an integer $k_i$ for each $i \in S$. Construct a map $f_i$ that maps $X$ to $X\cup \{i\}$ if $|X\cap N_i| \geq k_i$ and to $X$ otherwise. Finally, define the map $f = f_1 \plus \cdots \plus f_n$. The map $f$ can be realized by an instance of M.1, and each of the $f_i$ corresponds to a \emph{local evolution rule}.

\subsubsection*{Example 3. Recovering Update Rules} Given the setting of the previous example, define the map $e_i : X \mapsto X \cup \{i\}$. This map enables the extraction of a \emph{local} evolution rule. Indeed, $i \in (f \proj e_i) X_0$ if, and only if, $i \in f X_0$. However, if $j \neq i$, then $j \in (f \proj e_i) X_0$ if, and only if, $j \in X_0$. It will later be proved that $f = f\proj e_1 \plus \cdots \plus f\proj e_n$. The system $f$ can be realized as a combination of evolution rules, each governing the behavior of only one element of $S$.

\subsubsection*{Example 4. An Instance of Boolean Systems}Consider the following two instances of M.4, where $L$ is the Boolean lattice. Iteration indices are dropped in the notation.
\begin{align*}
x_1 &:= x_1 \join (x_2 \meet x_3) & x_1 &= x_1\nonumber\\
x_2 &:= x_2 \join x_3 & x_2 &= x_2 \join x_3\nonumber\\
x_3 &:= x_3 & x_3 &= x_3 \join (x_1 \meet x_2)\nonumber
\end{align*}
Let $f$ and $g$ denote the system maps generated by the right and left instances. The maps $f\plus g$ (left) and $f\proj g$ (right) can then be realized as:
\begin{align*}
x_1 &:= x_1 \join (x_2 \meet x_3) &\quad x_1 &= x_1\nonumber\\
x_2 &:= x_2 \join x_3 &\quad x_2 &= x_2 \join x_3\nonumber\\
x_3 &:= x_3 \join (x_1 \meet x_2) &\quad x_3 &= x_3\nonumber
\end{align*}
The map $f \proj g$ is the identity map.

\subsubsection*{Example 5. Closure under Meet and Join} If $f$ and $g$ are derived from instances of M.1, then neither $f \plus g$ nor $f \proj g$ are guaranteed to be realizable as instances of M.1. If they are derived from instances of M.2, then only $f \plus g$ is necessarily realizable as an instance of M.2. As all systems (over the Boolean lattice) can be realized as instances of M.3, both $f \plus g$ and $f \proj g$ can always be realized as instances of M.3. 

As an example, we consider the case of M.2.  If $(\mathcal{C}_f,k_f)$ and $(\mathcal{C}_g,k_g)$ are realizations of $f$ and $g$ as M.2, then $(\mathcal{C}_f\cup \mathcal{C}_g,k)$ is a realization of $f \plus g$, with $k$ being $k_f$ on $\mathcal{C}_f$ and $k_g$ on $\mathcal{C}_g$.  However, let $S= \{a,b,c\}$ be a set, and consider $\C_f = \{\{a,b\}\}$ with $k_f = 1$, and $\C_g = \{\{b,c\}\}$ with $k_g = 1$.  The set $\{a,c\}$ is not a fixed-point of $f \proj g$.  Thus, if a realization $(\C_{f\proj g},k)$ of $f \proj g$ is possible, then $\{a,b,c\}\in\C_{f\proj g}$ with $k \leq 2$. However, both $\{a,b\}$ and $\{b,c\}$ are fixed-points of $f\proj g$, contradicting such a realization.

\subsection{Effect of the Operators on Fixed Points}

The fixed point characterization uncovered thus far is independent of the order on $\L_P$. The map $\Phi : f \mapsto \{a: fa = a\}$ is also well behaved with respect to the $\plus$ and $\proj$ operations.  For $S, T \subseteq P$, we define their set meet $S \meet T$ to be $\{a\meet b : a\in S \text{ and } b \in T\}$.
\begin{prop}
 If $f,g \in \L$, then $\Phi (f\plus g) = \Phi f \cap \Phi g$ and $\Phi (f \proj g) = \Phi f \meet \Phi g$.
\end{prop}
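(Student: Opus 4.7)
My plan is to prove the two equalities separately, using the characterization $f \plus g = (fg)^{|P|}$ from Proposition \ref{pro:join} for the first and the pointwise formula $f \proj g : a \mapsto fa \meet ga$ for the second.

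For the identity $\Phi(f \plus g) = \Phi f \cap \Phi g$, I would argue both inclusions. For $(\supseteq)$, if $a$ is fixed by both $f$ and $g$, then iterating $fg$ on $a$ leaves $a$ invariant, so $(fg)^{|P|} a = a$ and hence $(f \plus g)a = a$. For $(\subseteq)$, I would use the fact that $f, g \leq f \plus g$ in the lattice $\L$, so from $a = (f\plus g)a$ combined with A.1, namely $a \leq fa \leq (f\plus g)a = a$, one gets $fa = a$, and symmetrically $ga = a$.

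For the identity $\Phi(f \proj g) = \Phi f \meet \Phi g$, I again split into two inclusions, relying on the explicit formula $(f\proj g)a = fa \meet ga$. For $(\supseteq)$, take $c = a \meet b$ with $fa = a$ and $gb = b$; since $c \leq a$ and $c \leq b$, monotonicity (A.2) gives $fc \leq a$ and $gc \leq b$, so $(f \proj g)c = fc \meet gc \leq a \meet b = c$, and A.1 supplies the reverse inequality, so $c$ is a fixed point. For $(\subseteq)$, if $c = fc \meet gc$, then idempotence (A.3) makes $fc$ a fixed point of $f$ and $gc$ a fixed point of $g$, exhibiting $c$ as an element of $\Phi f \meet \Phi g$.

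I do not expect any serious obstacle; the main thing to keep straight is which axiom among A.1, A.2, A.3 is doing work at each step, and in the $\proj$ case that the witnesses for membership in $\Phi f \meet \Phi g$ are obtained by applying $f$ and $g$ to $c$ itself (so that A.3 yields fixed points) rather than by some abstract selection. The $\plus$ case is slightly more subtle in that it invokes Proposition \ref{pro:join} to evaluate $(f\plus g)a$ concretely, but once one uses the lattice inequalities $f, g \leq f\plus g$ the argument is immediate.
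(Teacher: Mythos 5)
Your proof is correct and follows essentially the same route as the paper: the $\proj$ part is identical in substance (A.2 for one inclusion, A.3 to produce the witnesses $fc$, $gc$ for the other), and the $\plus$ part differs only cosmetically in that you derive $fa=a$ from the pointwise inequality $f\leq f\plus g$ together with A.1, where the paper instead uses the absorption identity $(f\plus g)g=f\plus g$. No gaps.
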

\begin{proof}
  If $a \in \Phi f \cap \Phi g$, then $ga \in \Phi f$. As $fga = a$, it follows that $a \in \Phi(f\plus g)$. Conversely, as $(f+g)g = (f+g)$, if $(f+g)a = a$, then $(f+g)ga = a$ and so $ga = a$. By symmetry, if $(f+g)a=a$, then $fa=a$. Thus if  $a \in \Phi (f\plus g)$, then $a\in \Phi f \cap \Phi g$. Furthermore, $(f\proj g)a = a$ if, and only if, $fa \meet ga = a$ and the result $\Phi (f \proj g) = \Phi f \meet \Phi g$ follows.
\end{proof}

Combination and projection lend themselves to simple operations when the maps are viewed as a collection of fixed points. Working directly in $\Phi \L$ will yield a remarkable conceptual simplification.

\subsection{Summary on Fixed Points: The Isomorphism Theorem}

Let $\F$ be the collection of all $S \subseteq P$ such that $\hat{p} \in S$ and $a\meet b \in S$ if $a, b \in S$. Ordering $\F$ by reverse inclusion $\supseteq$ equips it with a lattice structure. The join and meet of $S$ and $T$ in $\F$ are, respectively, set intersection $S \cap T$ and set meet $S \meet T = \{a\meet b : a\in S \text{ and } b \in T\}$. The set $S \meet T$ may also be obtained by taking the union of $S$ and $T$ and closing the set under $\meet$.

\begin{thm}\label{thm:iso}
 The map $\Phi : f \mapsto \{a : fa=a\}$ defines an isomorphism between $(\L,\leq,\plus,\proj)$ and $(\F,\supseteq,\cap,\meet)$. \qed
\end{thm}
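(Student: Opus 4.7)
The plan is to assemble the Isomorphism Theorem from the building blocks already established in the preceding subsections; essentially nothing new needs to be proved, only the pieces collected and the order-compatibility checked.

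First, I would verify that $\Phi$ is a well-defined map $\L \to \F$: by Proposition \ref{pro:maxp} we have $\hat p \in \Phi f$, and by Proposition \ref{pro:closedmeet} the set $\Phi f$ is closed under $\meet$, so $\Phi f \in \F$. Injectivity of $\Phi$ is the first proposition of Subsection \ref{subsec:fp}, and surjectivity is the proposition constructing $f : a \mapsto \inf\{b \in S : a \leq b\}$ from an arbitrary $S \in \F$. So $\Phi$ is already a bijection of the underlying sets.

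Next I would check that $\Phi$ reverses order, i.e.\ $f \leq g$ in $\L$ if and only if $\Phi f \supseteq \Phi g$ in $\F$. For the forward direction, if $f \leq g$ and $ga = a$ then $a \leq fa \leq ga = a$, so $fa = a$. For the converse, suppose $\Phi g \subseteq \Phi f$; for each $a \in P$ the element $ga$ lies in $\Phi g$ hence in $\Phi f$, so $fga = ga$, and by A.1 and A.2 applied to $a \leq ga$ we get $fa \leq fga = ga$, whence $f \leq g$. This establishes that $\Phi$ is an order isomorphism $(\L,\leq) \to (\F,\supseteq)$.

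Finally, preservation of the lattice operations is the content of the proposition immediately preceding the theorem: $\Phi(f \plus g) = \Phi f \cap \Phi g$ and $\Phi(f \proj g) = \Phi f \meet \Phi g$. Since $\cap$ is the join and $\meet$ the meet in $(\F,\supseteq)$, while $\plus$ is the join and $\proj$ the meet in $(\L,\leq)$, the operations on both sides match up under $\Phi$. Combined with the order isomorphism, this delivers the lattice isomorphism claimed. There is no real obstacle; the only point requiring genuine (though brief) verification is the order-compatibility step above, and this is a two-line computation resting on A.1 and A.2.
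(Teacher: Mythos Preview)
Your proposal is correct and matches the paper's intent: the theorem is stated with an immediate \qed, meaning the paper regards it as assembled from the preceding propositions exactly as you do. The one small addition you make---the explicit verification that $\Phi$ reverses order---is not stated separately in the paper but is indeed the only missing link, and your two-line argument for it is sound.
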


Such a result is well known in the study of \emph{closure operators}, and is relatively simple. We refer the reader, for instance, to \cite{BIR1936}, \cite{ORE1943} and \cite{WAR1942} for pieces of this theorem, and to \cite{BIR1967} Ch V and \cite{CAS2003} for a broader overview, more insight and references.  Nevertheless, the implications of it on the theory at hand can be remarkable. Our systems will be interchangeably used as both maps and subsets of $P$. The isomorphism enables a conceptual simplification, that enables emerging objects to be interpreted as systems exhibiting \emph{cascade effects}.

\subsection{Overview Through An Example (Continued)}

We continue the running example.  Our example is realized as a \emph{combination} of three evolution rules: one pertaining to each node. For instance, the rule of node $A$ may be realized as:
\begin{center}
\begin{tikzpicture}
  [scale=.4,auto=center,every node/.style={circle,fill=black!10!white,scale=0.8}]
  \node (n1) at (1,10) {A,2};
  \node (n2) at (4,8)  {B,3};
  \node (n3) at (1,6)  {C,3};

    \foreach \from/\to in {n2/n1,n3/n1}
    \draw (\from) -> (\to);
\end{tikzpicture}
\end{center}

The threshold $3$ is just a large enough integer so that the colors of node $B$ or node $C$ do not change/evolve, regardless of the coloring on the graph. The system derived from such a realization is the map $f_A : 2^{\bf 3} \rightarrow 2^{\bf 3}$ satisfying A.1, A.2 and A.3 such that $BC \mapsto ABC$ and all remaining states are left unchanged.
A fixed point representation yields:
\begin{center}
\begin{tikzpicture}[scale=0.6]
  \node (max) at (0,3) {$\times$};
  \node (a) at (1.5,1.5) {$\circ$};
  \node (b) at (0,1.5) {$\times$};
  \node (c) at (-1.5,1.5) {$\times$};
  \node (d) at (1.5,0) {$\times$};
  \node (e) at (0,0) {$\times$};
  \node (f) at (-1.5,0) {$\times$};
  \node (min) at (0,-1.5) {$\times$};
  \draw (min) -- (d) -- (a) -- (max) -- (b) -- (f)
  (e) -- (min) -- (f) -- (c) -- (max)
  (d) -- (b);
  \draw[preaction={draw=white, -,line width=6pt}] (a) -- (e) -- (c);
\end{tikzpicture}
\end{center}

Similarly the maps $f_B$ and $f_C$ derived for the rules of $B$ and $C$ are represented (respectively from left to right) through their fixed points as:

\begin{center}
\begin{tikzpicture}[scale=0.6]
  \node (max) at (0,3) {$\times$};
  \node (a) at (1.5,1.5) {$\times$};
  \node (b) at (0,1.5) {$\circ$};
  \node (c) at (-1.5,1.5) {$\times$};
  \node (d) at (1.5,0) {$\circ$};
  \node (e) at (0,0) {$\times$};
  \node (f) at (-1.5,0) {$\circ$};
  \node (min) at (0,-1.5) {$\times$};
  \draw (min) -- (d) -- (a) -- (max) -- (b) -- (f)
  (e) -- (min) -- (f) -- (c) -- (max)
  (d) -- (b);
  \draw[preaction={draw=white, -,line width=6pt}] (a) -- (e) -- (c);
  
    \node (maxmax) at (4.5,3) {$\times$};
  \node (aa) at (6,1.5) {$\times$};
  \node (bb) at (4.5,1.5) {$\times$};
  \node (cc) at (3,1.5) {$\circ$};
  \node (dd) at (6,0) {$\times$};
  \node (ee) at (4.5,0) {$\times$};
  \node (ff) at (3,0) {$\times$};
  \node (minmin) at (4.5,-1.5) {$\times$};
  \draw (minmin) -- (dd) -- (aa) -- (maxmax) -- (bb) -- (ff)
  (ee) -- (minmin) -- (ff) -- (cc) -- (maxmax)
  (dd) -- (bb);
  \draw[preaction={draw=white, -,line width=6pt}] (aa) -- (ee) -- (cc);
  
\end{tikzpicture}
\end{center}

Our overall descriptive rule of the dynamics is constructed by a \emph{descriptive} combination of the evolution rules of $A$, $B$ and $C$. With respect to the objects \emph{behind} those rules, the overall system is obtained by a $\plus$ combination of the \emph{local} systems. Indeed we have $f = f_A \plus f_B \plus f_C$, and such a combination is obtained by only keeping the fixed points that are common to all three systems.

\section{Components Realization}\label{sec:comp}

The systems derived from instances of ``models'' \emph{forget} all the componental structure described by the model. Nodes in M.0 and M.1 are bundled together to form the Boolean lattice, and the system is a monolithic map from $2^V$ to $2^V$. We have not discussed any means to recover components and interconnection structures from systems. We might want such a recovery for at least two reasons. First, we may be interested in understanding specific subparts of the modeled system. Second, we may want to realize our systems as instances of other models. In state spaces isomorphic to $2^S$ for some $S$, components may often be identified with the elements of $S$. In the case of M.0 and M.1, the components are represented as nodes in a graph. Yet, two elements of $S$ might also be tightly coupled as to form a single component. It is also less clear what the components can be in non-Boolean lattices as state spaces. We formalize such a flexibility by considering the set $\E$ of all maps $0_q \times 1_{q'}$ in $\L_Q{\times}\L_{Q'} \subseteq \L_{Q{\times}Q'}$ for $Q {\times} Q' = P$. The map $0_q \times 1_{q'}$ sends $(q,q') \in Q\times Q'$ to $(q,\hat{q}')$ where $\hat{q}'$ is the maximum element of $Q'$.  Indeed, the system $0_q$, being the identity, keeps $q$ unchanged in $Q$.  The system $1_{q'}$, being the maximum system, sends $q'$ to the maximum element $\hat{q}'$ of $Q'$. We refer to the maps of $\E$ as \emph{elementary} functions (or systems). A \emph{component realization} of $P$ is a collection of systems $e_A,\cdots,e_H$ in $\E$ where:
\begin{align}
 &e_A \plus \cdots \plus e_H = 1\nonumber\\
 &e_I \proj e_J = 0 \quad \text{for all }I\neq J\nonumber
\end{align}

For a different perspective, we consider a direct decomposition of $P$ into lattices $A,\cdots,H$ such that $A \times \cdots \times H = P$. An element $t$ of $P$ can be written either as a tuple $(t_A,\cdots,t_H)$ or as a string $t_A \cdots t_H$. If $(t_A,\cdots,t_H)$ and $(t'_A,\cdots,t'_H)$ are elements of $P$, then:
\begin{align}
  (t_A,\cdots,t_H) \join (t'_A,\cdots,t'_H) &= (t_A \join t'_A,\cdots,t_H \join t'_H)\label{eq:joinproduct}\\
  (t_A,\cdots,t_H) \meet (t'_A,\cdots,t'_H) &= (t_A \meet t'_A,\cdots,t_H \meet t'_H)\label{eq:meetproduct}.
\end{align}
Indeed, the join (resp.\ meet) in the product lattice, is the product of the joins (resp.\ meets) in the factor lattices.   Maps $e_A,\cdots,e_H$ can be defined as $e_I : ti \mapsto t\hat{i}$, that keeps $t$ unchanged and maps $i$ to the maximum element $\hat{i}$ of $I$. These maps belongs to $\L_P$, and together constitute a component realization as defined above. Conversely, each component realization gives rise to a direct decomposition of $P$.

\begin{thm}\label{thm:decom}
 Let $e_A, \cdots, e_H$ be a component realization of $P$. If $f \in \L_P$, then $f = f\proj e_A \plus \cdots \plus f\proj e_H$.
\end{thm}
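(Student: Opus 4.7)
The plan is to translate the statement through the isomorphism $\Phi$ of Theorem \ref{thm:iso} and verify the corresponding equality of fixed-point sets, namely
\[
\Phi f \;=\; \bigcap_{I}\bigl(\Phi f \meet \Phi e_I\bigr),
\]
since $\Phi(f \proj e_A \plus \cdots \plus f \proj e_H) = \bigcap_I (\Phi f \meet \Phi e_I)$ by the behavior of $\Phi$ on $\plus$ and $\proj$. One direction is essentially formal: because $\proj$ is the lattice meet, $f \proj e_I \leq f$ for every $I$, so the join $f \proj e_A \plus \cdots \plus f \proj e_H$ is bounded above by $f$, giving the inclusion $\Phi f \subseteq \bigcap_I (\Phi f \meet \Phi e_I)$.

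For the substantive direction, I would take $a \in \bigcap_I (\Phi f \meet \Phi e_I)$ and, for each coordinate $I$, choose a witnessing decomposition $a = b^I \meet c^I$ with $b^I \in \Phi f$ and $c^I \in \Phi e_I$. The critical observation is that $\Phi e_I$ consists exactly of those tuples whose $I$-th coordinate equals the maximum $\hat{i}$ of $I$; combining this with the coordinatewise formula for the meet in a product lattice \eqref{eq:meetproduct} forces $a_I = b^I_I \meet \hat{i} = b^I_I$. So on each coordinate some $b^I$ agrees with $a$, while $a \leq b^I$ holds for every $I$ because $a$ is a meet involving $b^I$.

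The heart of the argument is then the identity $a = \bigwedge_I b^I$, checked coordinatewise: on the $K$-th factor, $(\bigwedge_I b^I)_K \leq b^K_K = a_K$, while $a \leq b^I$ for every $I$ yields $a_K \leq (\bigwedge_I b^I)_K$. Since $\Phi f$ is closed under $\meet$ by Proposition \ref{pro:closedmeet}, this gives $a = \bigwedge_I b^I \in \Phi f$, completing the reverse inclusion and hence the theorem. The main obstacle is the bookkeeping on coordinates: one must recognize that each auxiliary element $c^I$ is ``transparent'' on its own coordinate (being maximal there), so that the family $\{b^I\}_I$ records the true coordinates of $a$, and that meeting the $b^I$ together reconstructs $a$ exactly rather than producing a strictly smaller element.
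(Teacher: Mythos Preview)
Your proof is correct. Both your argument and the paper's rest on the same observation: any element of $\Phi e_I$ has $I$-th coordinate equal to $\hat{i}$, so a decomposition $a = b \meet c$ with $c \in \Phi e_I$ forces $b_I = a_I$ together with $a \leq b$. From there the two proofs take slightly different routes. You argue directly, collecting one witness $b^I \in \Phi f$ per coordinate and then invoking closure of $\Phi f$ under $\meet$ (Proposition~\ref{pro:closedmeet}) to conclude $a = \bigwedge_I b^I \in \Phi f$. The paper instead argues the contrapositive using a \emph{single} coordinate: given $t \notin \Phi f$, it selects $I$ with $t_I \neq (ft)_I$ and uses monotonicity of $f$ (axiom A.2) to show that no $s \geq t$ with $s_I = t_I$ can lie in $\Phi f$, whence $t \notin \Phi(f \proj e_I)$. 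Your version is a bit more structural and makes the role of meet-closure explicit; the paper's version avoids the global meet $\bigwedge_I b^I$ at the cost of a short contradiction step and an appeal to A.2 rather than to Proposition~\ref{pro:closedmeet}.
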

\begin{proof}
 It is immediate that $f\proj e_A \plus \cdots \plus f\proj e_H \leq f$. To show the other inequality, consider $t \notin \Phi f$. Then $t_I \neq (ft)_I$ for some $I$. Furthermore, if $t' \geq t$ with $t'_I = t_I$, then $t'\notin \Phi f$. Assume $t \in \Phi (f\proj e_I)$, then $t = s \meet r$ for some $s \in \Phi f$ and $r \in \Phi e_I$. It then follows that $r_I = \hat{i}$, the maximum element of $I$. Therefore $s_I = t_I$ and $s \geq t$ contradicting the fact that $s \in \Phi f$.
\end{proof}

The map $f\proj e_I$ may evolve only the $I$-th \emph{component} of the state space.
\begin{prop}\label{pro:evolveIcomponent}
  If $s\in P$ is written as $ti$, then $(f \proj e_I)s = t(fs)_I$, where $(fs)_I$ is the projection of $fs$ onto the component $I$.
\end{prop}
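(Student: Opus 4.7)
The plan is to unfold the definitions of $f \proj e_I$ and $e_I$ and then compute the meet componentwise using the product-lattice structure recorded in equations~(\ref{eq:joinproduct}) and~(\ref{eq:meetproduct}). From the meet formula, $(f \proj e_I)s = fs \meet e_I s$, and by the definition of $e_I$ recalled just before Theorem~\ref{thm:decom}, we have $e_I(ti) = t\hat{i}$, where $\hat{i}$ denotes the maximum element of the factor lattice $I$.

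Write $s = ti$ with $t$ ranging over the product of the remaining factors and $i \in I$. Likewise write $fs = t' i'$ for suitable $t'$ in the product of the non-$I$ factors and $i' = (fs)_I \in I$. Using (\ref{eq:meetproduct}), the meet $fs \meet e_I s$ is computed coordinate by coordinate: in the $I$-component it equals $i' \meet \hat{i} = i' = (fs)_I$, and in every other component $J$ it equals $t'_J \meet t_J$. The key observation is that $s \leq fs$ by axiom A.1, so projecting this inequality onto each non-$I$ factor yields $t_J \leq t'_J$, and hence $t'_J \meet t_J = t_J$ for every $J \neq I$.

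Assembling the components, the result of the meet has $t$ unchanged in the non-$I$ factors and $(fs)_I$ in the $I$-factor, which is exactly $t(fs)_I$. The only subtlety to handle carefully is the bookkeeping between the tuple notation $(t_A,\ldots,t_H)$ and the string abbreviation $ti$; once this is set up cleanly, the proof reduces to one application of A.1 and one application of the componentwise meet formula, with no real obstacle.
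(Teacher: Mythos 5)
Your proposal is correct and takes essentially the same route as the paper, which computes $(f \proj e_I)s = fs \meet e_I s = f(ti) \meet t\hat{i} = t(fs)_I$ in one line via Equation~(\ref{eq:meetproduct}). Your explicit appeal to A.1 to see that the non-$I$ coordinates of the meet collapse back to $t$ is precisely the detail the paper leaves implicit.
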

\begin{proof}
 We have $(f \proj e_I)s = fs \meet e_Is = f(ti) \meet t\hat{i} = t(fs)_I$.  The last equality follows from Equation \ref{eq:meetproduct}. 
\end{proof}
It is also the evolution rule governing the state of component $I$ as a function of the full system state.

\begin{prop}\label{pro:joincomponent}
  Let $e_A, \cdots, e_H$ be a component realization of $P$. If $f \in \L_P$, then $fa = (f\proj e_A)a \vee \cdots \vee (f\proj e_H)a$ for every $a \in P$.
\end{prop}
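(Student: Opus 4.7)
The plan is to reduce everything to componentwise reasoning in the product lattice $P = A\times\cdots\times H$, using Proposition \ref{pro:evolveIcomponent} to describe each $(f\proj e_I)a$ explicitly and Equation \eqref{eq:joinproduct} to compute the join in $P$ component by component.

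First I would establish the easy inequality $(f\proj e_A)a \join \cdots \join (f\proj e_H)a \leq fa$. Since $f\proj e_I \leq f$ in the lattice order on $\L$, evaluating at $a$ gives $(f\proj e_I)a \leq fa$ pointwise in $P$, and taking the join over $I$ preserves the bound.

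For the reverse inequality, I would fix $a \in P$ and write it as the tuple $a = a_A\cdots a_H$ with $a_I \in I$ the $I$-th component. By Proposition \ref{pro:evolveIcomponent}, $(f\proj e_I)a$ differs from $a$ only in the $I$-th slot, where it takes the value $(fa)_I$. Now compute the join
\begin{equation*}
\bigl[(f\proj e_A)a \join \cdots \join (f\proj e_H)a\bigr]_J
\end{equation*}
componentwise using Equation \eqref{eq:joinproduct}. In the $J$-th slot, the contribution from $(f\proj e_J)a$ is $(fa)_J$ and the contribution from every other $(f\proj e_I)a$ with $I\neq J$ is $a_J$. Thus the $J$-th component of the join equals $(fa)_J \join a_J$, which by A.1 (i.e.\ $a\leq fa$, so $a_J \leq (fa)_J$) simplifies to $(fa)_J$. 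Since this holds for every $J$, the join equals $fa$, completing the proof.

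I do not expect a real obstacle here: the argument is a direct unpacking of the product structure together with Proposition \ref{pro:evolveIcomponent}. The only point that warrants a moment's care is the componentwise calculation for indices $I \neq J$, where one must recall that $(f\proj e_I)a$ keeps the $J$-th component equal to $a_J$; this is precisely where the definition of $e_I$ as $0_q \times 1_{q'}$ (identity on the non-$I$ factor) plays its role.
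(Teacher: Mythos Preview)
Your argument is correct and follows exactly the route the paper takes: the paper's proof simply states that the easy inequality is immediate and that the reverse inequality ``follows from combining Proposition \ref{pro:evolveIcomponent} and Equation \eqref{eq:joinproduct},'' which is precisely the componentwise computation you have written out in detail (your invocation of A.1 to reduce $(fa)_J \vee a_J$ to $(fa)_J$ is the one step the paper leaves implicit).
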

\begin{proof}
  It is immediate that $(f\proj e_A)a \vee \cdots \vee (f\proj e_H)a \leq fa$.  The other inequality follows from combining Proposition \ref{pro:evolveIcomponent} and Equation \ref{eq:joinproduct}.
\end{proof} 
\begin{exa} Let $f$ be the system derived from an instance $(V,A)$ of M.0. We consider the maps $e_i : X \mapsto X\cup\{i\}$ for $i \in V$. The collection $\{e_i\}$ forms a \emph{component realization} where $e_i$ corresponds to node $i$ in the graph. The system $f \proj e_i$ may be identified with the \emph{ancestors} of $i$, namely, nodes $j$ where a directed path from $j$ to $i$ exists. A realization (in the form of M.0) of $f \proj e_i$ then colors $i$ $black$ whenever any ancestor of it is $black$, leaving the color of all other nodes unchanged. Combining the maps $f \proj e_i$ recovers the map $f$.
\end{exa}
  
\emph{Interconnection structures} (e.g. digraphs as used in M.1) may be further derived by defining projection and inclusion maps accordingly and requiring the systems to satisfy some fixed-point conditions. Such structures can be interpreted as systems in $\L_{\L_P}$. They will not be considered in this paper.

\subsection{Defining Cascade Effects}

Given a component realization $e_A, \cdots, e_H$, define a collection of maps $f_A,\cdots,f_H$ where $f_I \leq e_I$ dictates the evolution of the state of component $I$ as a function of $P$. These update rules are typically combined to form a system $f = f_A \plus \cdots \plus f_H$. \emph{Cascade effects} are said to occur when $f \proj e_I \neq f_I$ for some $I$. The behavior governing a certain (sub)system $I$ is \emph{enhanced} as this component is embedded into the greater system.  We should consider the definition provided, in this subsection, as conceptually illustrative rather than useful and complete. The main goal of the paper is to define a class of systems exhibiting cascade effects.  It is not to define what cascade effects are.  We instead refer the reader to \cite{ADAM:Dissertation} for an actionable definition and a study of these effects.  We will however revisit this definition in Section \ref{sec:eval} with more insight.  

The conditions under which such effects occurs depend on the properties of the operations. If $\proj$ distributes over $\plus$, then this behavior is never bound to occur; this will seldom be the case as will be shown in the next section.

\subsection{Overview Through An Example (Continued)}

We continue the running example. On a dual end, if we wish to view the nodes $A$, $B$ and $C$ as distinct entities, we may define a component realization $e_A$, $e_B$ and $e_C$ represented (respectively from left to right) as:

\begin{center}
\begin{tikzpicture}[scale=0.6]
  \node (max) at (0,3) {$\times$};
  \node (a) at (1.5,1.5) {$\circ$};
  \node (b) at (0,1.5) {$\times$};
  \node (c) at (-1.5,1.5) {$\times$};
  \node (d) at (1.5,0) {$\circ$};
  \node (e) at (0,0) {$\circ$};
  \node (f) at (-1.5,0) {$\times$};
  \node (min) at (0,-1.5) {$\circ$};
  \draw (min) -- (d) -- (a) -- (max) -- (b) -- (f)
  (e) -- (min) -- (f) -- (c) -- (max)
  (d) -- (b);
  \draw[preaction={draw=white, -,line width=6pt}] (a) -- (e) -- (c);
  
    \node (maxmax) at (4.5,3) {$\times$};
  \node (aa) at (6,1.5) {$\times$};
  \node (bb) at (4.5,1.5) {$\circ$};
  \node (cc) at (3,1.5) {$\times$};
  \node (dd) at (6,0) {$\circ$};
  \node (ee) at (4.5,0) {$\times$};
  \node (ff) at (3,0) {$\circ$};
  \node (minmin) at (4.5,-1.5) {$\circ$};
  \draw (minmin) -- (dd) -- (aa) -- (maxmax) -- (bb) -- (ff)
  (ee) -- (minmin) -- (ff) -- (cc) -- (maxmax)
  (dd) -- (bb);
 \draw[preaction={draw=white, -,line width=6pt}] (aa) -- (ee) -- (cc);

  \node (maxmaxmax) at (9,3) {$\times$};
  \node (aaa) at (10.5,1.5) {$\times$};
  \node (bbb) at (9,1.5) {$\times$};
  \node (ccc) at (7.5,1.5) {$\circ$};
  \node (ddd) at (10.5,0) {$\times$};
  \node (eee) at (9,0) {$\circ$};
  \node (fff) at (7.5,0) {$\circ$};
  \node (minminmin) at (9,-1.5) {$\circ$};
  \draw (minminmin) -- (ddd) -- (aaa) -- (maxmaxmax) -- (bbb) -- (fff)
  (eee) -- (minminmin) -- (fff) -- (ccc) -- (maxmaxmax)
  (ddd) -- (bbb);
 \draw[preaction={draw=white, -,line width=6pt}] (aaa) -- (eee) -- (ccc);
  
\end{tikzpicture}
\end{center}

Local evolution rules may be recovered through the systems $f\proj e_A$, $f \proj e_B$ and $f \proj e_C$. Those are likely to be different than $f_A$, $f_B$ and $f_C$ as they also take into account the \emph{effects} resulting from their combination. The systems $f\proj e_A$, $f \proj e_B$ and $f \proj e_C$ are generated by considering $\Phi f \cup \Phi e_I$ and closing this set under $\cap$. They are represented (respectively from left to right) as:

\begin{center}
\begin{tikzpicture}[scale=0.6]
  \node (max) at (0,3) {$\times$};
  \node (a) at (1.5,1.5) {$\circ$};
  \node (b) at (0,1.5) {$\times$};
  \node (c) at (-1.5,1.5) {$\times$};
  \node (d) at (1.5,0) {$\circ$};
  \node (e) at (0,0) {$\times$};
  \node (f) at (-1.5,0) {$\times$};
  \node (min) at (0,-1.5) {$\times$};
  \draw (min) -- (d) -- (a) -- (max) -- (b) -- (f)
  (e) -- (min) -- (f) -- (c) -- (max)
  (d) -- (b);
  \draw[preaction={draw=white, -,line width=6pt}] (a) -- (e) -- (c);
  
    \node (maxmax) at (4.5,3) {$\times$};
  \node (aa) at (6,1.5) {$\times$};
  \node (bb) at (4.5,1.5) {$\circ$};
  \node (cc) at (3,1.5) {$\times$};
  \node (dd) at (6,0) {$\circ$};
  \node (ee) at (4.5,0) {$\times$};
  \node (ff) at (3,0) {$\circ$};
  \node (minmin) at (4.5,-1.5) {$\times$};
  \draw (minmin) -- (dd) -- (aa) -- (maxmax) -- (bb) -- (ff)
  (ee) -- (minmin) -- (ff) -- (cc) -- (maxmax)
  (dd) -- (bb);
 \draw[preaction={draw=white, -,line width=6pt}] (aa) -- (ee) -- (cc);

  \node (maxmaxmax) at (9,3) {$\times$};
  \node (aaa) at (10.5,1.5) {$\times$};
  \node (bbb) at (9,1.5) {$\times$};
  \node (ccc) at (7.5,1.5) {$\circ$};
  \node (ddd) at (10.5,0) {$\times$};
  \node (eee) at (9,0) {$\times$};
  \node (fff) at (7.5,0) {$\circ$};
  \node (minminmin) at (9,-1.5) {$\times$};
  \draw (minminmin) -- (ddd) -- (aaa) -- (maxmaxmax) -- (bbb) -- (fff)
  (eee) -- (minminmin) -- (fff) -- (ccc) -- (maxmaxmax)
  (ddd) -- (bbb);
 \draw[preaction={draw=white, -,line width=6pt}] (aaa) -- (eee) -- (ccc);
  
\end{tikzpicture}
\end{center}

The system $f \proj e_A$ captures the fact that node $A$ can become black if only $C$ is colored black. A change in $f_A$ would, however, require both $B$ and $C$ to be black. Recombining the obtained local rules is bound to recover the overall system, and indeed $f = f\proj e_A \plus f\proj e_B \plus f\proj e_C$ as can be checked by keeping only the common fixed points.

\section{Properties of the Systems Lattice}

\emph{Complex} systems will be built out of \emph{simpler} systems through expressions involving $\plus$ and $\proj$. The power of such an expressiveness will come from the properties exhibited by the operators. Those are trivially derived from the properties of the lattice $\L$ itself.

\begin{prop}
The following propositions are equivalent. (i) The set $P$ is linearly ordered. (ii) The lattice $\L_P$ is distributive. (iii) The lattice $\L_P$ is modular.
\end{prop}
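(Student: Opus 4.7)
The plan is to close the cycle (ii) $\Rightarrow$ (iii) $\Rightarrow$ (i) $\Rightarrow$ (ii). The step (ii) $\Rightarrow$ (iii) is free from general lattice theory: every distributive lattice is modular. For the other two directions I would route everything through the isomorphism $\L_P \cong (\F,\supseteq,\cap,\meet)$ of Theorem~\ref{thm:iso}, so that statements about systems become statements about Moore families, where the arithmetic is transparent.

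For (i) $\Rightarrow$ (ii), suppose $P$ is a chain. Then every subset of $P$ is trivially closed under $\meet$, so $\F$ is the collection of \emph{all} subsets of $P$ containing $\hat{p}$. Moreover, since $a \meet b \in \{a,b\}$ on a chain, a short computation gives $S \meet T = S \cup T$ for $S,T \in \F$ (the inclusion $\supseteq$ uses $s = s \meet \hat{p}$, and $\subseteq$ uses $s \meet t \in \{s,t\}$). Hence $\F$ is isomorphic to the Boolean algebra of subsets of $P \setminus \{\hat{p}\}$ under $\cap$ and $\cup$, which is distributive, and the isomorphism transfers distributivity back to $\L_P$.

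For (iii) $\Rightarrow$ (i), I would argue the contrapositive by exhibiting an $N_5$ sublattice. Assuming $P$ is not a chain, pick incomparable $a,b \in P$ and set $c = a \meet b$. Define three systems via the Moore families
\[
\Phi f = \{a,c,\hat{p}\}, \qquad \Phi g = \{b,\hat{p}\}, \qquad \Phi h = \{a,\hat{p}\}.
\]
Each is routinely seen to contain $\hat{p}$ and be closed under $\meet$, so each is a valid fixed-point set, and $\Phi f \supseteq \Phi h$ gives $f \leq h$ in $\L_P$. Translating the modularity identity via the isomorphism: $\Phi(f \plus g) = \Phi f \cap \Phi g = \{\hat{p}\}$, hence $\Phi((f \plus g) \proj h) = \{\hat{p}\} \meet \{a,\hat{p}\} = \{a,\hat{p}\} = \Phi h$; on the other side, $\Phi(g \proj h) = \{b,\hat{p}\} \meet \{a,\hat{p}\} = \{a,b,c,\hat{p}\}$, hence $\Phi(f \plus (g \proj h)) = \{a,c,\hat{p}\} \cap \{a,b,c,\hat{p}\} = \Phi f$. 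Since $a \not\leq b$ forces $c \neq a$, the two sides differ, so $f \leq h$ while $(f \plus g) \proj h \neq f \plus (g \proj h)$, contradicting modularity.

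The main obstacle is pinpointing the three Moore families used in (iii) $\Rightarrow$ (i); once they are written down, every remaining step is mechanical bookkeeping with $\cap$ and $\meet$ in $\F$.
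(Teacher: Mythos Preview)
Your argument is correct. The implication (ii)$\Rightarrow$(iii) and the step (i)$\Rightarrow$(ii) match the paper's reasoning essentially verbatim (the paper also identifies $\L_P$ with a Boolean lattice when $P$ is a chain).

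For (iii)$\Rightarrow$(i) you take a different route from the paper. The paper proves a general characterization: $(f,g)$ is a modular pair in $\L_P$ if and only if $\Phi f\cup\Phi g$ is closed under $\meet$; it then observes that if \emph{every} pair is modular, applying this to the coatoms $\Phi f=\{a,\hat p\}$ and $\Phi g=\{b,\hat p\}$ forces $a\meet b\in\{a,b,\hat p\}$, hence $a$ and $b$ are comparable. Your approach instead manufactures an explicit violation of the modular law from any incomparable pair $a,b$, via the three Moore families $\{a,c,\hat p\}$, $\{b,\hat p\}$, $\{a,\hat p\}$. The paper's version yields a reusable structural fact about modular pairs (later exploited when defining $\mu$-rank); yours is more self-contained and, as you note, reduces to mechanical $\cap/\meet$ bookkeeping once the families are written down. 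Both are short; neither dominates the other.
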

\begin{proof}
Property (ii) implies (iii) by definition. If $P$ is linearly ordered, then $\L$ is a Boolean lattice, as any subset of $P$ is closed under $\meet$. Therefore (i) implies (ii). Finally, it can be checked that $(f,g)$ is a modular pair if, and only if, $\Phi(f\proj g)= \Phi(f)\cup\Phi(g)$ i.e., $\Phi(f)\cup\Phi(g)$ is closed under $\meet$. If $\L_P$ is modular, then each pair of $f$ and $g$ is modular. In that case, each pair of states in $P$ are necessarily comparable, and so (iii) implies (i).
\end{proof}

The state spaces we are interested in are not linearly ordered. Non-distributivity is natural within the interpreted context of cascade effects, and has at least two implications. First, the decomposition of Theorem \ref{thm:decom} cannot follow from distributivity, and relies on a more subtle point. Second, cascade effects (as defined in Section \ref{sec:comp}) are bound to occur in non-trivial cases.

The loss of modularity is suggested by the asymmetry in the behavior of the operator. The $\plus$ operator corresponds to set intersection, whereas the $\proj$ operator (is less convenient) corresponds to a set union followed by a closure under $\meet$. Nevertheless, the lattice will be \emph{half} modular.

\begin{prop}
The lattice $\L_P$ is (upper) semimodular.
\end{prop}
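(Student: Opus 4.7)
The plan is to exploit the Isomorphism Theorem (Theorem \ref{thm:iso}) by showing that $\L_P$ is graded with a submodular rank function. I define $\rho : \L_P \rightarrow \mathbb{N}$ by $\rho(f) = |P| - |\Phi f|$, and aim to establish (i) that $\rho(g) = \rho(f) + 1$ whenever $f \prec g$, so that $\rho$ is a genuine rank function, and (ii) the submodular inequality $\rho(f \plus g) + \rho(f \proj g) \leq \rho(f) + \rho(g)$. Semimodularity then follows by a short direct argument combining (i) and (ii).

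The main obstacle is a covering lemma: if $f \prec g$ in $\L_P$, then $|\Phi f \setminus \Phi g| = 1$. I would assume for contradiction that $x_1, x_2$ are two distinct elements of $\Phi f \setminus \Phi g$. The Moore family $F''$ generated by $\Phi g \cup \{x_1\}$ under $\meet$ can be written explicitly as $\Phi g \cup \{x_1 \meet a : a \in \Phi g\}$; it sits strictly between $\Phi g$ and $\Phi f$ (in the inclusion order on $\F$), and therefore equals $\Phi f$ by the covering hypothesis together with the Isomorphism Theorem. This forces $x_2 = x_1 \meet a$ for some $a \in \Phi g$; a symmetric argument starting from $x_2$ yields $x_1 = x_2 \meet b$, whence $x_1 = x_2$---a contradiction. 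Together with $\rho(0_P) = 0$, this shows $\L_P$ is graded by $\rho$.

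The submodular inequality is immediate from the isomorphism theorem: since $\Phi(f \plus g) = \Phi f \cap \Phi g$ and $\Phi(f \proj g) \supseteq \Phi f \cup \Phi g$, inclusion-exclusion gives
\begin{equation*}
|\Phi(f \plus g)| + |\Phi(f \proj g)| \;\geq\; |\Phi f \cap \Phi g| + |\Phi f \cup \Phi g| \;=\; |\Phi f| + |\Phi g|,
\end{equation*}
which rewrites as $\rho(f \plus g) + \rho(f \proj g) \leq \rho(f) + \rho(g)$.

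To conclude, suppose $f \proj g \prec f$ and $f \proj g \prec g$. Then $\rho(f) = \rho(g) = \rho(f \proj g) + 1$, so submodularity gives $\rho(f \plus g) \leq \rho(f \proj g) + 2 = \rho(f) + 1$. Since $g \not\leq f$ (else $f \proj g = g$, contradicting $f \proj g \prec g$), we have $f < f \plus g$, so gradedness forces $\rho(f \plus g) \geq \rho(f) + 1$. Equality must hold, and since no element can sit strictly between $f$ and $f \plus g$ without violating the rank increment, $f \prec f \plus g$; symmetrically $g \prec f \plus g$.
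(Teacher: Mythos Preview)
Your proof is correct and rests on the same key fact as the paper's: covers in $\L_P$ correspond to fixed-point sets differing by exactly one element. You prove this covering lemma explicitly (your contradiction argument via the Moore family generated by $\Phi g \cup \{x_1\}$), whereas the paper invokes it tacitly. The packaging then differs slightly. You establish that $\L_P$ is graded with submodular rank $\rho(f)=|P|-|\Phi f|$ and deduce the covering relations by rank arithmetic. The paper argues more directly: from $f\proj g \prec f$ and the covering lemma it gets $|\Phi(f\proj g)\setminus\Phi f|=1$, hence $|\Phi g\setminus\Phi f|=1$ (since $\Phi g\setminus\Phi f\subseteq\Phi(f\proj g)\setminus\Phi f$ and $g\not\leq f$), and symmetrically $|\Phi f\setminus\Phi g|=1$; then $\Phi f\setminus\Phi(f\plus g)=\Phi f\setminus\Phi g$ has one element, so $f\prec f\plus g$. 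Your detour through submodularity is not strictly needed for the proposition itself, but it has the side benefit of delivering the rank function and the inequality $\rho(f\plus g)+\rho(f\proj g)\leq\rho(f)+\rho(g)$ up front---precisely the statement the paper records immediately after the proposition as a \emph{consequence} of semimodularity.
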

\begin{proof}
It is enough to prove that if $f \proj g \leqc f$ and $f \proj g \leqc g$, then $f \leqc f\plus g$ and $g \leqc f\plus g$. If $f \proj g$ is covered by $f$ and $g$, then $|\Phi f - \Phi g| = |\Phi g - \Phi f| = 1$. Then necessarily $f \plus g$ covers $f$ and $g$.
\end{proof}

Semi-modularity will be fundamental in defining the $\mu$-rank of a system in Section 7. The lattice $\L$ is equivalently a graded poset, and admits a rank function $\rho$ such that $\rho(f \plus g) + \rho(f \proj g) \leq \rho(f) + \rho(g)$. The quantity $\rho(f)$ is equal to the number of non-fixed points of $f$ i.e. $|P - \Phi f|$. More properties may still be extracted, up to full characterization of the lattice. Yet, such properties are not needed in this paper.

\subsection{Additional Remarks on the Lattice of Systems}\label{sec:addRemarks}

This subsection illustrates some basic lattice theoretic properties on $2^{\bf 2}$, represented through its Hasse diagram below. We follow the notation of the running example (see e.g., Subsection \ref{sec:running}).

\begin{center}
\begin{tikzpicture}[scale=0.6]
  \node (max) at (0,3) {$AB$};
  \node (a) at (1.5,1.5) {$aB$};
  \node (c) at (-1.5,1.5) {$Ab$};
  \node (e) at (0,0) {$ab$};
  \draw (a) -- (max) -- (c)
  (a) -- (e) -- (c);
\end{tikzpicture}
\end{center}

The lattice $\L_{2^{\bf 2}}$ may be represented as follows. The systems are labeled through their set of fixed-points.

\begin{center}
\begin{tikzpicture}[scale=1.2]
  \node (max) at (0,3) {$\{AB\}$};
  \node (a) at (1.5,2) {$\{aB,AB\}$};
  \node (b) at (0,2) {$\{ab,AB\}$};
  \node (c) at (-1.5,2) {$\{Ab,AB\}$};
  \node (d) at (1.5,1) {$\{ab,aB,AB\}$};
 
  \node (f) at (-1.5,1) {$\{ab,Ab,AB\}$};
  \node (min) at (0,0) {$\{ab,aB,Ab,AB\}$};
  \draw (min) -- (d) -- (a) -- (max) -- (b) -- (f)
  (min) -- (f) -- (c) -- (max)
  (d) -- (b);
 
\end{tikzpicture}
\end{center}

A map $f \in \L_P$ will be called \emph{prime} if $P - \Phi f$ is closed under~$\meet$. Those maps will be extensively used in Section \ref{sec:failure}.

All the systems are prime (i.e. have the set of non-fixed points closed under $\cap$) except for $\{ab,AB\}$. The lattice $\L_{2^{\bf 2}}$ is (upper) semimodular as a pair of systems are covered by their join ($\plus$) whenever they cover their meet ($\proj$). All pairs form modular pairs except for the pair $\{Ab,AB\}$ and $\{aB,AB\}$. The lattice $\L_{2^{\bf 2}}$ is graded, and the (uniform) rank of a system is equal to the number of its non-fixed points as can be checked.

\subsubsection*{On Atoms and Join-irreducible elements}

An atom is an element that covers the minimal element of the lattice. In $\L_{2^{\bf 2}}$, those are $\{ab,aB,AB\}$ and $\{ab,Ab,AB\}$. A join-irreducible element is an element that cannot be written as a join of \emph{other} elements. An atom is necessarily a join-irreducible element, however the converse need not be true. The systems $\{aB,AB\}$ and $\{bA,AB\}$ are join-irreducible but are not atoms.

The join-irreducible elements in $\L_P$ may be identified with the pairs $(s,t) \in P\times P$ such that $t$ covers $s$. They can be identified with the edges in the Hasse diagram of $P$. For a covering pair $(s,t)$, define $f_{st}$ to be the least map such that $s \mapsto t$. Then $f_{st}$ is join-irreducible for each $(s,t)$, and every element of $\L_P$ is a join of elements in $\{f_{st}\}$.

\begin{prop}
 The map $f_{st}$ is prime for every $(s,t)$.
\end{prop}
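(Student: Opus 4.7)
The plan is to identify $f_{st}$ explicitly, read off its set of non-fixed points, and then verify closure under $\meet$ directly. First, I would propose the candidate $g : P \to P$ defined by $g(a) = a \join t$ whenever $s \leq a$, and $g(a) = a$ otherwise. A routine check confirms $g$ satisfies A.1 (trivial from $a \leq a \join t$), A.2 (a three-case argument based on whether $a$, $b$ lie above $s$; the case $a \leq b$ with $a \geq s$ and $b \not\geq s$ is vacuous), and A.3 (every element of the image of $g$ is either not above $s$ or already above $t$, and is therefore fixed). Since $g(s) = s \join t = t$, we have $g \in \L_P$ with $g(s) = t$.

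Next, to see $g = f_{st}$, I would check that any $f \in \L_P$ with $f(s) = t$ satisfies $f \geq g$ pointwise: when $a \geq s$, A.2 gives $f(a) \geq f(s) = t$ and A.1 gives $f(a) \geq a$, so $f(a) \geq a \join t = g(a)$; when $a \not\geq s$, A.1 alone yields $f(a) \geq a = g(a)$. Hence $g$ is the least such map and coincides with $f_{st}$.

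With this explicit form in hand, the set of non-fixed points of $f_{st}$ is precisely $N = \{a \in P : s \leq a \text{ and } t \not\leq a\}$. To establish primeness, I take $a, b \in N$: from $s \leq a$ and $s \leq b$ it follows that $s \leq a \meet b$, and if $t \leq a \meet b$ then in particular $t \leq a$, contradicting $a \in N$. Hence $a \meet b \in N$, so $N$ is closed under $\meet$.

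The only real obstacle here is pinning down the explicit form of $f_{st}$; once that is done, closure of $N$ under $\meet$ is immediate. Notice that the cover relation $s \leqc t$ plays no essential role in the primeness argument itself — the same construction and argument would work for any pair $s < t$ — so join-irreducibility of $f_{st}$ and its primeness are really independent features of the construction.
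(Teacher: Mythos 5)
Your proof is correct, and it takes a genuinely different route from the paper's. The paper never writes $f_{st}$ down explicitly: it works from two abstract facts (that $s$ is the least non-fixed point of $f_{st}$ and that every element above $t$ is fixed) and then runs a case analysis on how $a \meet b$ sits relative to $t$, using the covering relation $s \leqc t$ in an essential way (to force $a \meet b = s$, or $(a \meet b) \meet t = s$) together with the meet-closure of $\Phi f_{st}$. You instead exhibit $f_{st}$ in closed form, $a \mapsto a \join t$ for $a \geq s$ and $a \mapsto a$ otherwise, verify the axioms and leastness (a point the paper leaves implicit), and read off the non-fixed-point set exactly as $\{a : s \leq a,\ t \not\leq a\}$, after which meet-closure is a two-line check. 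What your version buys: it is self-contained, it justifies the existence and uniqueness of ``the least map with $s \mapsto t$,'' and, as you correctly observe, it shows the covering hypothesis is irrelevant to primeness --- the same map is prime for any pair $s < t$ (and trivially for $s = t$), which is a mild strengthening of the stated proposition. What the paper's version buys is brevity, at the cost of leaning on the cover relation and on properties of $f_{st}$ that are asserted rather than derived.
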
 

\begin{proof}
 The map $f_{st}$ is the least map such that $s \mapsto t$. It follows that $s$ is the least non-fixed point of $f_{st}$, and that every element greater than $t$ belongs to $\Phi f_{st}$. If $a, b \notin \Phi f_{st}$, then their meet $a\meet b$ is necessarily not greater than $t$, for otherwise we get $a,b \in \Phi f_{st}$. If $a\meet b$ is comparable to $t$, then $a\meet b = s \notin \Phi f_{st}$. If $a\meet b$ is non-comparable to $t$, then $(a \meet b) \meet t = s$, and so again $a \meet b \notin \Phi f_{st}$.
\end{proof}

\subsubsection*{On Coatoms and Meet-irreducible Elements}

A coatom is an element that is covered by the maximal element of the lattice. In $\L_{2^{\bf 2}}$, those are $\{ab,AB\}$, $\{aB,AB\}$ and $\{Ab,AB\}$.
In general, the coatoms of $\L$ are exactly the systems $f$ where $|\Phi f| = 2$. Note that the maximal element $\hat{p}$ of $P$ is always contained in $\Phi f$.

\begin{prop}
 Every $f \in \L_P$ is a meet of coatoms.
\end{prop}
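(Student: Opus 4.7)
The plan is to exploit the isomorphism theorem (Theorem~\ref{thm:iso}) to translate the statement into one about fixed-point sets, where it becomes essentially immediate.

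First I would record two easy facts. By the isomorphism $\Phi$, the meet $f\proj g$ in $\L_P$ corresponds to the set meet $\Phi f \meet \Phi g$ in $\F$. And, as noted just before the statement, a system $g \in \L_P$ is a coatom if and only if $|\Phi g| = 2$; since every fixed-point set contains $\hat p$, this says precisely that a coatom is a map whose fixed-point set has the form $\{a,\hat p\}$ for some $a \neq \hat p$. Conversely, every such two-element set lies in $\F$ (it contains $\hat p$ and is trivially closed under $\meet$), so for each $a \in P$ with $a \neq \hat p$ there is a unique coatom $c_a \in \L_P$ with $\Phi c_a = \{a, \hat p\}$.

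Next, given $f \in \L_P$, I would show that $f$ equals the meet in $\L_P$ of the family $\{c_a : a \in \Phi f,\ a \neq \hat p\}$, adopting the convention that the empty meet is the top element $1_p$ (which handles the degenerate case $\Phi f = \{\hat p\}$, i.e.\ $f = 1_p$). By Theorem~\ref{thm:iso}, this reduces to verifying that the iterated set meet of the sets $\{a, \hat p\}$, taken over $a \in \Phi f \setminus \{\hat p\}$, equals $\Phi f$. The inclusion $\supseteq$ is immediate: each $a \in \Phi f \setminus \{\hat p\}$ arises as $a \meet \hat p \meet \cdots \meet \hat p$, and $\hat p$ itself arises as the all-$\hat p$ meet. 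For $\subseteq$, any element of the iterated set meet has the form $a_{i_1} \meet \cdots \meet a_{i_k}$ with every $a_{i_j} \in \Phi f$, and since $\Phi f$ is closed under $\meet$ by Proposition~\ref{pro:closedmeet}, this element lies in $\Phi f$.

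The proof is essentially a bookkeeping exercise once the isomorphism is in hand, so there is no serious obstacle. The only point requiring any care is the characterization of coatoms as the maps with exactly two fixed points (which the paper has already observed), together with the convention that the empty meet yields $1_p$ and so takes care of the otherwise exceptional case $f = 1_p$.
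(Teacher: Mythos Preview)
Your proposal is correct and follows essentially the same approach as the paper: define for each $a \in P$ the coatom $c_a$ with $\Phi c_a = \{a,\hat p\}$ and observe that the meet of the $c_a$ over $a \in \Phi f$ recovers $f$. The paper's argument is terser and does not separate out the case $a = \hat p$ or $f = 1_p$, whereas you handle this explicitly via the empty-meet convention; otherwise the two arguments are identical.
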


\begin{proof}
 For each $a \in P$, let $c_a \in \L$ be such that $\Phi c_a = \{a,\hat{p}\}$. If $\Phi f = \{a,b,\cdots, h\}$, then $f = c_a \proj c_b \proj \cdots \proj c_h$. 
\end{proof} 

Such lattices are called co-atomistic. The coatoms, in this case, are the only elements that cannot be written as a meet of \emph{other} elements.

\section{On Least Fixed-Points and Cascade Effects} \label{sec:eval}

The systems are defined as maps $P \rightarrow P$ taking in an input and yielding an output.  The interaction of those systems (via the operator $\plus$) however does not depend on functional composition or application.  It is only motivated by them, and the input-ouput functional structure has been discarded throughout the analysis.  It will then also be more insightful to not view $f(a)$ as functional application.  Such a change of viewpoint can be achieved via a good use of least fixed-points.  The change of view will also lead us the a more general notion of cascade effects.

We may associate to every $a \in P$ a system $\free(a): - \mapsto - \vee a$ in $\Ell_P$.  We can then interpret $f(a)$ differently:
\begin{prop}
  The element $f(a)$ is the least fixed-point of $f + \free(a)$.
\end{prop}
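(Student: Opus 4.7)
The plan is to use the characterization $\Phi(f+g) = \Phi f \cap \Phi g$ established earlier (via the isomorphism theorem / proposition on fixed points of combinations), and then exploit the axioms A.1--A.3 to verify that $f(a)$ is both a fixed-point of $f+\free(a)$ and below every other such fixed-point.

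First, I would identify the fixed-points of $\free(a)$: an element $b \in P$ satisfies $b \vee a = b$ if and only if $a \leq b$, so $\Phi(\free(a)) = \{b \in P : a \leq b\}$. This is the essential observation; once it is in place the rest is a direct application of the axioms.

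Next, I would show that $f(a)$ lies in $\Phi(f + \free(a)) = \Phi f \cap \Phi(\free(a))$. Membership in $\Phi f$ is immediate from A.3, which gives $f(f(a)) = f(a)$. Membership in $\Phi(\free(a))$ follows from A.1, which gives $a \leq f(a)$, and hence $f(a) \vee a = f(a)$.

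Finally, for minimality, let $b$ be any fixed-point of $f + \free(a)$. Then $b \in \Phi(\free(a))$ gives $a \leq b$, and applying A.2 followed by $b \in \Phi f$ yields $f(a) \leq f(b) = b$. So $f(a)$ is a lower bound for the set of fixed-points of $f + \free(a)$, completing the proof. There is no real obstacle here; the step that carries all the content is simply recognizing that the fixed-points of $\free(a)$ are exactly the upper set of $a$, after which A.1, A.2, A.3 do the rest line by line.
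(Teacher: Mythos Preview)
Your proof is correct and follows essentially the same route as the paper. Both arguments rest on identifying $\Phi(\free(a))$ with the principal upper set $\{b : a \leq b\}$ and using $\Phi(f + \free(a)) = \Phi f \cap \Phi(\free(a))$; the only cosmetic difference is that the paper invokes the already-established formula $f(a) = \bigwedge\{p \in \Phi f : a \leq p\}$ in one line, whereas you verify membership and minimality directly from A.1--A.3.
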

\begin{proof}
  We have $f(a) = \meet \{p \in \Phi(f) : a \leq p\} = \meet \{p \in \Phi(f) \cap \Phi(\free(a))\}$. The result follows as $\Phi(f) \cap \Phi(\free(a)) = \Phi(f + \free(a))$.
\end{proof}
The map $\free : P \rightarrow \Ell_P$ is order-preserving. It also preserves joins.  Indeed, if $a, b \in P$, then $\free(a) + \free(b) = \free(a \vee b)$.  Conversely, as each map in $\Ell_P$ admits a least fixed-point, we define $\eval : \Ell_P \rightarrow P$ to be the map sending a system to its least fixed-point.  The map $\eval$ is also order-preserving, and we obtain:
\begin{thm}
  If $a \in P$ and $f \in \Ell_P$, then:
  \begin{equation*}
    \free(a) \leq f \quad \text{ if, and only if, } \quad  a \leq \eval(f)
  \end{equation*}
\end{thm}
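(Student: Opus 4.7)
The plan is to unpack both sides using the basic definitions: $\free(a)$ sends $b$ to $b \vee a$, so its fixed points are exactly the upper set $\{p \in P : a \leq p\}$; meanwhile, $\eval(f)$ is by definition the least element of $\Phi(f)$, which exists since $\Phi(f)$ is closed under $\meet$ and contains $\hat{p}$ (Propositions 3.2 and 3.3). The essential observation is that by A.3, for every $b \in P$ the value $f(b)$ is itself a fixed point of $f$, so $\eval(f) \leq f(b)$.

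For the forward direction, suppose $\free(a) \leq f$. Evaluate this inequality at the particular element $\eval(f)$: since $\eval(f) \in \Phi(f)$, we have
\begin{equation*}
  \eval(f) \vee a \;=\; \free(a)(\eval(f)) \;\leq\; f(\eval(f)) \;=\; \eval(f),
\end{equation*}
which forces $a \leq \eval(f)$.

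For the converse, suppose $a \leq \eval(f)$. Fix any $b \in P$. By A.3, $f(b)$ is a fixed point of $f$, so $\eval(f) \leq f(b)$, and hence $a \leq f(b)$. Combining this with A.1, which gives $b \leq f(b)$, we conclude $b \vee a \leq f(b)$, i.e.\ $\free(a)(b) \leq f(b)$. Since $b$ was arbitrary, $\free(a) \leq f$.

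There is no serious obstacle here; the only point requiring care is recognising that A.3 guarantees every output of $f$ lies in $\Phi(f)$, which is what lets $\eval(f)$ act as a lower bound for all such outputs. If a more conceptual rendering were desired, one could instead invoke the isomorphism of Theorem 4.3: $\free(a) \leq f$ is equivalent to $\Phi(f) \subseteq \Phi(\free(a)) = \{p : a \leq p\}$, which says every fixed point of $f$ dominates $a$, which in a lattice closed under $\meet$ is equivalent to $a \leq \meet \Phi(f) = \eval(f)$.
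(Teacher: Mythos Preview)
Your proof is correct. The paper's own argument is precisely the ``conceptual rendering'' you sketch at the end: it shows $\free(a)\leq f$ iff $\Phi(f)\subseteq\{p:a\leq p\}$ iff $a\leq\eval(f)$. Your primary argument is a slightly more hands-on pointwise verification using A.1 and A.3 directly; it is equally valid and arguably more self-contained, since it does not appeal to the isomorphism of Theorem~\ref{thm:iso}.
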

\begin{proof}
  If $\free(a) \leq f$, then $a \leq b$ for every fixed-point $b$ of $f$.  Conversely, if $a \leq \eval(f)$, then $\{b \in P : a \leq b\}$ contains $\Phi(f)$, the set of fixed points of $f$. 
\end{proof}
The pair of maps $\free$ and $\eval$ are said to be adjoints, and form a Galois connection (see e.g., \cite{BIR1967} Ch. V, \cite{EVE1944}, \cite{ORE1944} and \cite{ERN1993} for a treatment on Galois connections).  The intuition of cascading phenomena can be seen to partly emerge from this Galois connection.  By duality, the map $\eval$ preserves meets.  Indeed, the least fixed-point of $f \proj g$ is the meet of the least fixed-points of $f$ and $g$.  The map $\eval$ does not however always preserve joins.  Such a fact causes cascading intuition to arise.  For some pairs $f, g \in \Ell_P$, we get:
\begin{equation}\label{eq:inequality}
  \eval(f + g) \neq \eval(f) \vee \eval(g)
\end{equation}
Generally, two systems interact to yield, combined, something greater than what they yield separately, then combined.  Specifically, consider $f \in \Ell_P$ and $a \in P$ such that $\eval(f) \leq a$.  If $\eval(f + \free(a)) \neq \eval(f) \vee \eval(\free(a))$, then $f(a) \neq a$.  In this case, the point $a$ \emph{expanded} under the map $f$, and cascading effects have thus occured.  The paper will not pursue this line of direction.  This direction is extensively pursued in \cite{ADAM:Dissertation}.  Also, a definition of cascade effects was already introduced in Section \ref{sec:comp}.  We thus briefly revisit it and explain the connection to the inequality presented.   The inequality can be further explained by the semimodularity of the lattice, but such a link will not be pursued.

\subsection{Revisiting Component Realization}  Given a component realization $e_A, \cdots, e_H$ of $P$, we let $f_A,\cdots,f_H$ be a collection of maps where $f_I \leq e_I$ dictates the evolution of the state of component $I$ as a function of $P$.  If $f = f_A \plus \cdots \plus f_H$, then recall from Section \ref{sec:comp} that \emph{cascade effects} are said to occur when $f \proj e_I \neq f_I$ for some $I$.

We will illustrate how this definition links to the inequality obtained from the Galois connection.  For simplicity, we consider only two components $A$ and $B$.  Let $e_A,e_B$ be a component realization of $P$, and consider two maps $f_A,f_B$ where $f_I \leq e_I$.  Define $f = f_A \plus f_B$. If $f \proj e_A \neq f_A$, then $(f \proj e_A)a \neq f_Aa = a$ for some fixed point $a$ of $f_A$.  We then have $fa \neq f_A a \join f_B a$.  As $fa =  (f \proj e_A)a \join (f \proj e_B)a$ by Proposition \ref{pro:joincomponent}, we get:
\begin{equation} \label{eq:connection}
  \eval\big( f_A \plus \free(a) \plus f_B \plus \free(a) \big) \neq \eval\big( f_A \plus \free(a)\big) \vee \eval\big( f_B \plus \free(a)\big) 
\end{equation}
Conversely, if Equation \ref{eq:connection} holds, then either $f_A a \neq (f\proj e_A) a$ or $f_B a \neq (f\proj e_B) a$.

\subsection{More on Galois Connections}

The inequality in Equation \ref{eq:inequality} gives rise to cascading phenomena in our situation.  It is induced by the Galois connection between $\free$ and $\eval$, and the fact that $\eval$ does not preserve joins.  The content of the lattices can however be changed, keeping the phenomenon intact.  Both the lattice of systems $\Ell_P$ and the lattice of states $P$ can be replaced by other lattices.  If we can setup another such inequality for the other lattices, then we would have created cascade effects in a different situation.  We refer the reader to \cite{ADAM:Dissertation} for a thorough study along those lines.  The particular class of systems studied in this paper is however somewhat special.  Indeed, every system itself arises from a Galois connection. Thus, if we focus on a particular system $f$, then we get a Galois connection induced by the inclusion:
\begin{equation*}
  \Phi(f) \rightarrow P
\end{equation*}
And indeed, cascade effects will emerge whenever $a \vee_{\Phi(f)} b \neq a \vee_P b$.  This direction will not be further discussed in the paper.  

This double presence of Galois connections seems to be merely a coincidence. It implies however that we can recover cascading phenomena in our situation at two levels:  either at the level of systems interacting or at the level of a unique system with its states interacting.

\subsection{Higher-Order Systems}

For a lattice $P$, we constructed the lattice $\Ell_P$.  By iterating the construction once, we may form $\Ell_{\Ell_P}$.  Through several iterations, we may recursively form $\Ell^{m+1}_P = \Ell_{\Ell^m_P}$ with $\Ell^0_P = P$.  Systems in $\Ell^m_P$ take into account nested if-then statements.  The construction induces a map $\eval: \Ell^{m+1}_P \rightarrow \Ell^{m}_P$, sending a system to its least fixed-point.  We thus recover a sequence:
\begin{equation*}
  \cdots \rightarrow \Ell^{3}_P \rightarrow \Ell^{2}_P \rightarrow \Ell_P \rightarrow P
\end{equation*}
The $\free$ map construction induces an inclusion $\Ell^{m}_P \rightarrow \Ell^{m+1}_P$ for every $m$.  We may then define an infinite lattice $\Ell_P^{\infty} = \bigcup^\infty_{m=1} \Ell^m_P$ that contains all finite higher-order systems.  We may also decide to complete $\Ell_P^{\infty}$ in a certain sense to take into account infinite recursion.  Such an idea have extensively recurred in denotational semantics and domain theory (see e.g., \cite{SCO1972}, \cite{SCO1972A} and \cite{SCO1972B}) to yield semantics to programming languages, notably the $\lambda$-calculus.  This idea will however not be further pursued in this paper.

\section{Connections to Formal Methods}\label{sec:formal}

The ideas developed in this paper intersect with ideas in formal methods and semantics of languages. To clarify some intersections, we revisit the axioms. A map $f: P \rightarrow P$ belongs to $\Ell_P$ if it satisfies:
\begin{description}
 \item[A.1] If $a \in P$, then $a \leq fa$.
 \item[A.2] If $a, b \in P$ and $a \leq b$, then $fa \leq fb$.
 \item[A.3] If $a \in P$, then $ffa = fa$.
\end{description}
The axiom A.2 may generally be replaced by one requiring the map to be \emph{scott-continuous}, see e.g. \cite{SCO1972B} for a definition.  Every scott-continuous function is order-preserving, and in the case of finite lattices (as assumed in this paper) the converse is true.  The axiom A.3 may then be discarded, and fixed points can generally be recovered by successive iterations of the map (ref. the Kleene fixed-point theorem). The axiom A.1 equips the systems with their expansive nature.  The more important axiom is A.2 (or potentially scott-continuity) which is adaptive to the underlying order.  Every map satisfying A.2 can be \emph{closed} into a map satisfying A.1 and A.2, by sending $f(-)$ to $- \vee f(-)$.  The least fixed-points of both coincide.

The interplay of A.1 and A.2 ensures that concurrency of update rules in the systems does not produce any conflicts.  The argument is illustrated in Proposition \ref{pro:join}, and is further fully refined in Subsection \ref{ap:motivation}.  The systems can however capture concurrency issues by considering power sets. As an example, given a Petri net, we may construct a map sending a set of initial token distribution, to the set of all possible token distributions that can be \emph{caused} by such an initial set.  This map is easily shown to satisfy the axioms A.1, A.2 and A.3.  A more elaborate interpretation of the state space, potentially along the lines of event structures as described in \cite{PLO1981}, may lead to further connections for dealing with concurrency issues.

The interplay of lattices and least fixed-point appears throughout efforts in formal methods and semantics of languages.  We illustrate the relevance of A.1 and A.2 via the simple two-line program \texttt{Prog}:
\begin{verbatim} 
1. while ( x > 5 ) do 
2. x := x - 1;
\end{verbatim}
We define a state of this program to be an element of $\Sigma := \mathbb{N} \times \{in_1,out_1,in_2,out_2\}$.  A number in $\mathbb{N}$ denotes the value assigned to \texttt{x}, and $in_i$ (resp.\ $out_i$) indicates that the program is entering (resp.\ exiting) line $i$ of the program.  For instance, $(7,out_2)$ denotes the state where \texttt{x} has value $7$ right after executing line $2$.  We define a finite execution trace of a program to be a sequence of states that can be reached by some execution of the program in finite steps.  A finite execution trace is then an element of $\Sigma^*$, the semigroup of all finite strings over the alphabet $\Sigma$. Two elements $s, s' \in \Sigma^*$ can be concatenated via $s \circ s'$.

We then define $f : 2^{\Sigma^*} \rightarrow 2^{\Sigma^*}$ such that:
\begin{align}\label{eq:trace}
  B \mapsto  f(B) := & \big\{(n,in_1) : n \in \mathbb{N} \big\}\nonumber\\
  \cup & \big\{ tr\circ(n,out_1) : tr \in B \text{ and } tr \in \Sigma^* \circ (n,in_1) \big\}\nonumber\\
  \cup & \big\{ tr\circ(n,in_2) : tr \in B \text{ and } tr \in \Sigma^* \circ(n,out_1) \text{ and } n>5  \big\}\\
  \cup & \big\{ tr\circ(n,out_2): tr \in B \text{ and } tr \in \Sigma^* \circ(n+1,in_2)\big\}\nonumber \\
  \cup & \big\{ tr\circ(n,in_1) : tr \in B \text{ and } tr \in \Sigma^* \circ(n,out_2)\big\}\nonumber
\end{align}
The map $f$ satisfies A.1 and A.2.  If $B_{sol} \subseteq \Sigma^*$ is the set of finite excution traces, then $B_{sol} \supseteq f(B_{sol})$.  Furthermore, $B_{sol}$ is the least fixed of $f$.  This idea is pervasive in obtaining semantics of programs.  The maps $f$, in deriving semantics, are however typically only considered to be order-preserving (or Scott-continuous).  The connection to using maps satisfying both A.1 and A.2 somewhat hinges on the fact that for every order-preserving map $h$, the least fixed-point of $h(-)$ and $-\vee h(-)$ coincide.  The map $f$ may also be closed under A.3 via successive iterations, without modifying the least fixed-point, to yield a map in $\Ell_{2^{\Sigma^*}}$.  We refer the reader to \cite{NIE1999} Ch 1 for an overview of various methods along the example we provide, the work on abstract interpretation (see e.g., \cite{COU1977} and \cite{COU2000}) for more details on traces and semantics, and the works \cite{SCO1972}, \cite{SCO1972A} and \cite{SCO1972B} for the relevance of A.2 (or Scott-continuity) in denotational semantics.  In a general poset, non-necessarily boolean, we recover the form of M.4.  Galois connections also appear extensively in abstract interpretation.  The methods of abstract interpretation can be enhanced and put to use in approximating (and further understanding) the systems in this paper. 

Various ideas present in this paper may be further linked to other areas.  That ought not be surprising as the axioms are very minimal and natural.  From this perspective, the goal of this work is partly to guide efforts, and very effective tools, in the formal methods community into dealing with cascade-like phenomena. 

\subsection{Cascading Phenomena in this Context}

We also illustrate cascade effects, as described in Section \ref{sec:eval}, in the context of programs.  Consider another program \texttt{Prog'}:
\begin{verbatim} 
1. while ( x is odd ) do 
2. x := x - 1;
\end{verbatim}
Each of \texttt{Prog} and \texttt{Prog'} ought to be thought of as a partial description of a \emph{larger} program.  Their interaction yields the simplest program allowing both descriptions, namely:
\begin{verbatim}
1. while ( x > 5 ) or ( x is odd ) do 
2. x := x - 1;
\end{verbatim}
Let $f$ and $g$ be the maps (satisfying A.1 and A.2) attributed to \texttt{Prog} and \texttt{Prog'} respectively, as done along the lines of Equation \ref{eq:trace}.  The set of finite execution traces of the combined program is then the least fixed-point of $f \vee g$, where $(f \vee g)B = fB \cup gB$. Note that $f \vee g$ then satisfies both A.1 and A.2. Cascade effects then appear upon interaction.  The interaction of the program descriptions is bound to produce new traces that cannot be accounted for by the traces of the separate programs.  Indeed, every trace containing:
$$(5,out_2)\circ(5,in_1)\circ(5,out_1)\circ(5,in_2)$$
allowed in the combined program is not allowed in neither of the separate programs.  Formally, define a map $\eval$ that sends a function $2^{\Sigma^*} \rightarrow 2^{\Sigma^*}$ satisfying A.1 and A.2 to its least fixed point.  The map $\eval$ is well defined as $2^{\Sigma^*}$ is a complete lattice.  We then get an inequality:
\begin{equation*}
  \eval( f \vee g) \neq \eval(f) \cup \eval(g)
\end{equation*}
We may also link back to systems in $\Ell$ and the cascade effects' definition provided for them.  If $\bar{f}$ and $\bar{g}$ denote the closure of $f$ and $g$ in $2^{\Sigma^*}$ to satisfy A.3 (e.g. via iterative composition in the case of scott-continuous functions), then the closure of $f \vee g$ corresponds to $\bar{f} \plus \bar{g}$.  Of course, for every $h$ satisfying A.1 and A.2, both $h$ and $\bar{h}$ have the same least fixed-point. We then have:
\begin{equation*}
   \eval( \bar{f} \plus \bar{g}) \neq \eval( \bar{f}) \cup \eval(\bar{g})
\end{equation*}

The paper will mostly be concerned with properties of the systems in $\Ell$. The direction of directly studying the inequality will not be pursued in the paper.  It is extensively pursued in \cite{ADAM:Dissertation}.

\section{Shocks, Failure and Resilience} \label{sec:failure}

The theory will be interpreted within cascading failure. The informal goal is to derive conditions and insight determining whether or not a system hit by a shock would fail. Such a statement requires at least three terms---\emph{hit}, \emph{shock} and \emph{fail}---to be defined.

The situation, in the case of the models M.i, may be interpreted as follows.  Some components (or agents) initially fail (or become infected).  The dynamics then lead other components (or agents) to fail (or become infected) in turn.  The goal is to assess the conditions under which a large fraction of the system's components fail.  Such a state may be reached even when a very small number of components initially fail.  This section aims to quantify and understand the resilence of the system to initial failures.  Not only may targeted componental failures be inflicted onto the system, but also external (exogenous) rules may act as shocks providing conditional failures in the systems.  A shock in this respect is to be regarded as a system.  This remark is the subject of the next subsection.

\subsection{A Notion of Shock} 

Enforcing a \emph{shock} on a system would intuitively yield an evolved system incorporating the effects of the shock. Forcing such an intuition onto the identity system leads us to consider shocks as systems themselves. Any shock $s$ is then an element of $\L_P$. Two types of shocks may further be considered. \emph{Push shocks} evolve state $\check{p}$ to some state $a$. \emph{Pull shocks} evolve some state $a$ directly to $\hat{p}$. Allowing arbitrary $\plus$ and $\proj$ combinations of such systems generates $\L$. The set of shocks is then considered to be the set $\L$.

Shocks trivially inherit all properties of systems, and can be identified with their fixed points as subsets of $P$. Finally, a shock $s$ \emph{hits} a system $f$ to yield the system $f\plus s$.

\begin{exa} One example of shocks (realized through the form of M.i) inserts element to the initial set $X_0$ to obtain $X_0'$. This shock corresponds to the (least) map in $\L$ that sends $\emptyset$ to $X_0'$. Equivalently, this shock has as a set of fixed points the principal (upper) order filter of the lattice $P$ generated by the set $X_0'$ (i.e. the fixed points are all, and only, the sets containing $X_0'$). Further shocks may be identified with decreasing $k_i$ or adding an element $j$ to $N_i$ for some $i$.
\end{exa}
  
\subsubsection*{Remark} It will often be required to restrict the space of shocks. There is no particular reason to do so now, as any shock can be well justified, for instance, in the setting of M.3. We may further wish to keep the generality to preserve symmetry in the problem, just as we are not restricting the set of systems.

\subsection{A Notion of Failure}

A shock is considered to fail a system if the mechanisms of the shock combined with those of the system evolve the most desirable state to the least desirable state. Shock $s$ fails system $f$ if, and only if, $s+f = 1$.

In the context of M.i, failure occurs when $X_{|S|}$ contains all the elements of $S$. This notion of failure is not restrictive as it can simulate other notions. As an example, for $C \subseteq P$, define $u_C \in \L$ to be the least system that maps $a$ to $\hat{p}$ if $a\in C$. Suppose shock $s$ ``fails'' $f$ if $(f+s) a \geq c$ for some $c \in C$ and all $a$. Then $s$ ``fails'' $f$ if, and only if, $f + s + u_c = 1$. The notion may further simulate notions of failure arising from monotone propositional sentences. If we suppose that $(s_1,s_2,s_3)$ ``fails'' $(f_1,f_2,f_3)$ if ($s_1$ fails $f_1$) and (either $s_2$ fails $f_2$ or $s_3$ fails $f_3$), then there is a map $\psi$ into $\L$ such that $(s_1,s_2,s_3)$ ``fails'' $(f_1,f_2,f_3)$ if, and only if, $\psi(s_1,s_2,s_3) \plus \psi(f_1,f_2,f_3) = 1$. We can generally construct a monomorphism $\psi : \L_P {\times} \L_Q \rightarrow \L_{P{\times}Q}$ such that $s \plus f = 1$ and (or) $t \plus g = 1$ if, and only if, $\psi(s,t) \plus \psi(f,g) =~1$.

\subsection{Minimal Shocks and Weaknesses of Systems}

We set to understand the class of shocks that fail a system. We define the collection $\S_f$:
\begin{equation*}
 \S_f = \{ s \in \L : f + s = 1\}
\end{equation*}
 As a direct consequence of Theorem \ref{thm:iso}, we get:
\begin{cor}\label{pro:failcond}
 Shock $s$ belongs to $S_f$ if, and only if, $\Phi f \cap\Phi s = \{\hat{p}\}$
\end{cor}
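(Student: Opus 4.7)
The plan is to invoke the isomorphism of Theorem \ref{thm:iso} and translate the defining condition $f \plus s = 1$ into the fixed-point language. First I would rewrite $s \in \S_f$ by its definition as the identity $f \plus s = 1$ in $\L_P$. Since $\Phi$ is an order-isomorphism carrying $\plus$ to $\cap$, this identity holds if, and only if, $\Phi(f\plus s) = \Phi(1)$, equivalently $\Phi f \cap \Phi s = \Phi(1)$.

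Next I would identify $\Phi(1)$ explicitly. Recall that $1 \in \L_P$ is the map $a \mapsto \hat{p}$, so its only fixed point is $\hat{p}$ itself, i.e. $\Phi(1) = \{\hat{p}\}$. Substituting yields exactly the stated equivalence $\Phi f \cap \Phi s = \{\hat{p}\}$.

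There is no real obstacle here: the work has been done in establishing Theorem \ref{thm:iso} and the explicit formula $\Phi(f \plus g) = \Phi f \cap \Phi g$. The only small point worth stating, for clarity, is that $\hat{p}$ automatically lies in both $\Phi f$ and $\Phi s$ by Proposition \ref{pro:maxp}, so the condition $\Phi f \cap \Phi s = \{\hat{p}\}$ is really the statement that $\hat{p}$ is the \emph{only} common fixed point.
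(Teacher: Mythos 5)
Your proof is correct and follows exactly the route the paper intends: the corollary is stated as a direct consequence of Theorem \ref{thm:iso}, and your translation of $f \plus s = 1$ through the isomorphism $\Phi$ together with the observation $\Phi(1) = \{\hat{p}\}$ is precisely that argument. Nothing is missing.
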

For instances of M.i, it is often a question as to whether or not there is some $X_0$ with at most $k$ elements, where the final set $X_{|S|}$ contains all the elements of $S$. Such a set exists if, and only if, for some set $X$ of size $k$, all sets containing it are non-fixed points (with the exception of $S$).

If $s \leq s'$ and $s \in S_f$, then $s' \in S_f$. Thus, an understanding of $\S_f$ may come from an understanding of its minimal elements. We then focus on the \emph{minimal shocks} that fail a system $f$, and denote the set of those shocks by $\check{S}_f$:
\begin{equation*}
 \check{\S}_f = \{ s \in \S_f : \text{for all }t \in S_f, \text{ if }t \leq s \text{ then } t=s\}
\end{equation*}

A map $f \in \L_P$ will be called \emph{prime} if $P - \Phi f$ is closed under~$\meet$. A prime map $f$ is naturally complemented in the lattice, and we define $\neg f$ to be (the prime map) such that $\Phi (\neg f) = P - (\Phi f - \{\hat{p}\})$. If $f$ is prime, then $\neg \neg f = f$.

\begin{prop}
 The system $f$ admits a unique minimal shock that fails it, i.e. $|\check{\S}_f| = 1$  if, and only if, $f$ is prime.
\end{prop}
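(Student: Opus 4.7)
The approach is to argue both directions through the fixed-point characterization: by Corollary \ref{pro:failcond}, a shock $s$ fails $f$ iff $\Phi s \cap \Phi f = \{\hat p\}$, and $s \leq s'$ iff $\Phi s \supseteq \Phi s'$. Thus minimal shocks in $\check{\S}_f$ correspond to $\supseteq$-maximal meet-closed subsets of $P$ containing $\hat p$ and disjoint from $\Phi f \setminus \{\hat p\}$. The claim becomes: there is a unique such maximum iff $P - \Phi f$ is closed under $\meet$.

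For the forward direction, assume $f$ is prime. I would first verify that $\Phi(\neg f) = (P - \Phi f) \cup \{\hat p\}$ is a valid fixed-point set, i.e.\ it contains $\hat p$ (trivially) and is closed under $\meet$: if $a, b \in \Phi(\neg f)$, either one equals $\hat p$ (and the meet is the other element), or both lie in $P - \Phi f$, in which case primeness gives $a \meet b \in P - \Phi f \subseteq \Phi(\neg f)$. Clearly $\Phi(\neg f) \cap \Phi f = \{\hat p\}$, so $\neg f \in \S_f$. For any other $s \in \S_f$, we have $\Phi s \cap \Phi f = \{\hat p\}$, so $\Phi s \subseteq (P - \Phi f) \cup \{\hat p\} = \Phi(\neg f)$, giving $s \geq \neg f$. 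Hence $\neg f$ is the unique minimum of $\S_f$, and in particular $\check{\S}_f = \{\neg f\}$.

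For the converse, I would argue the contrapositive. Suppose $f$ is not prime, so there exist $a, b \in P - \Phi f$ with $c := a \meet b \in \Phi f$. Note $c \neq \hat p$ since $c \leq a \neq \hat p$. The sets $\{\hat p, a\}$ and $\{\hat p, b\}$ are each meet-closed (trivially), contain $\hat p$, and meet $\Phi f$ only in $\hat p$. By finiteness, each extends to a $\supseteq$-maximal meet-closed subset $\Phi s_1 \supseteq \{\hat p, a\}$ and $\Phi s_2 \supseteq \{\hat p, b\}$ with $\Phi s_i \cap \Phi f = \{\hat p\}$. The key observation is that $\Phi s_1$ cannot contain $b$: otherwise, meet-closure would force $c = a \meet b \in \Phi s_1 \cap \Phi f$, contradicting $c \neq \hat p$. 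Symmetrically, $a \notin \Phi s_2$. Hence $s_1 \neq s_2$ and both are minimal elements of $\S_f$, so $|\check{\S}_f| \geq 2$.

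The main technical step is the converse construction; everything hinges on the observation that a single pair $a, b$ witnessing failure of primeness produces two \emph{separating} maximal extensions because any meet-closed superset of $\{a,b\}$ must pull $a \meet b \in \Phi f$ in. Once this separation lemma is in hand, the rest is a straightforward translation between systems and their fixed-point sets via Theorem \ref{thm:iso}.
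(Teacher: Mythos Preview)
Your argument is correct. The forward direction is essentially identical to the paper's: build $\neg f$, check it fails $f$, and observe that any $s\in\S_f$ must have $\Phi s\subseteq\Phi(\neg f)$.

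For the converse you take a genuinely different route. The paper, given $a\in\Phi f$ with $a=b\meet c$ and $b,c\notin\Phi f$, assembles two explicit failing shocks out of push/pull primitives (the least maps $\check p\mapsto a$, $a\mapsto b$, $fb\mapsto\hat p$, and symmetrically for $c$), then shows their $\proj$-meet has $a$ as a fixed point and so lies outside $\S_f$; hence $\S_f$ has no minimum. Your argument instead works purely on the fixed-point side: start from the two-element sets $\{\hat p,a\}$ and $\{\hat p,b\}$ (your $a,b$ are the paper's $b,c$), extend each by finiteness to a $\supseteq$-maximal meet-closed set avoiding $\Phi f\setminus\{\hat p\}$, and observe that no such maximal set can contain both witnesses because meet-closure would drag in $a\meet b\in\Phi f$. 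This immediately gives two distinct elements of $\check\S_f$.

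Your approach is shorter and more self-contained: it uses only the isomorphism of Theorem~\ref{thm:iso} and a one-line separation observation, and it produces two minimal shocks directly rather than two shocks whose meet falls outside $\S_f$. The paper's construction, by contrast, is more explicit and ties back to the push/pull shock vocabulary introduced earlier in the section, which has some expository value there but is not needed for the bare proposition.
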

\begin{proof}
 If $f$ is prime, then $\neg f \in \S_f$. The map $\neg f$ is also the unique minimal shock as if $s \in \S_f$, then $\Phi s \subseteq \Phi \neg f$ by Proposition \ref{pro:failcond}. Conversely, suppose $f$ is not prime. Then $a = b \meet c$ for some $a \in \Phi f$ and $b, c \notin \Phi f$. Define $b' = fb$ and $c' = fc$ and consider the least shocks $s_0, s_{b'}$ and $s_{c'}$ such that $s_0\check{p} = a, s_{b'} b' = \hat{p}$ and $s_{c'} c' = \hat{p}$. Furthermore, define $s_b$ and $s_c$ such that $s_b a = b$ and $s_c a = c$. Then $b \in \Phi s_b$ and $c \in \Phi s_c$. Finally, $s_0 + s_b + s_{b'}$ and $s_0 + s_c + s_{c'}$ belong to $\S_f$, but their meet is not in $\S_f$ as $a$ is a fixed point of $(s_0 + s_b + s_{b'}) \proj (s_0 + s_c + s_{c'})$. This contradicts the existence of a minimal element in $\S_f$. 
\end{proof}

As an example, consider an instance of M.1 where ``the underlying graph is undirected'' i.e. $i \in N_j$ if, and only if, $j \in N_i$. Define $f$ to be the map $X_0 \mapsto X_{|S|}$. If $f(\emptyset) = \emptyset$ and $f(S-\{i\}) = S$ for all $i$, then $|\check{S}_f|\neq 1$ i.e. there are at least two minimal shock that fail $f$. Indeed, consider a minimal set $X$ such that $fX \neq X$. If $Y = (X \cup N_i) - \{i\}$ for some $i \in X$, then $fY \neq Y$. However, $f (X\cap Y) = X \cap Y$ by minimality of~$X$.

\begin{thm} 
 If $s$ belongs to $\check{\S}_f$, then $s$ is prime.
\end{thm}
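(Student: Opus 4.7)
The plan is to argue by contradiction: assume $s \in \check{\S}_f$ is not prime, exhibit two fixed-point-closures of $\Phi s$ that each fail to fail $f$, and then derive an impossible element in $\Phi f \cap \Phi s$.

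Suppose $P - \Phi s$ is not meet-closed. Then there exist $a, b \notin \Phi s$ with $a \meet b \in \Phi s$. Define
\[
\Phi s_a = \Phi s \cup \{a \meet x : x \in \Phi s\}, \qquad \Phi s_b = \Phi s \cup \{b \meet y : y \in \Phi s\}.
\]
A routine check shows each set is closed under $\meet$ and contains $\hat{p}$, so by the isomorphism of Theorem~\ref{thm:iso} they determine systems $s_a, s_b \in \L$. Since $a \in \Phi s_a \setminus \Phi s$ and $b \in \Phi s_b \setminus \Phi s$, we have $s_a \lneq s$ and $s_b \lneq s$ in $\L$. By minimality of $s$ in $\S_f$, neither $s_a$ nor $s_b$ belongs to $\S_f$; by Corollary~\ref{pro:failcond}, there exist $c, d \neq \hat{p}$ with $c \in \Phi f \cap \Phi s_a$ and $d \in \Phi f \cap \Phi s_b$.

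Because $\Phi f \cap \Phi s = \{\hat p\}$ and $c, d \neq \hat{p}$, neither $c$ nor $d$ lies in $\Phi s$. By the description of $\Phi s_a$ and $\Phi s_b$, we can therefore write $c = a \meet x$ and $d = b \meet y$ for some $x, y \in \Phi s$ (allowing $x = \hat{p}$ or $y = \hat{p}$). Now consider
\[
c \meet d \;=\; (a \meet x) \meet (b \meet y) \;=\; (a \meet b) \meet (x \meet y).
\]
The element $a \meet b$ lies in $\Phi s$ by assumption, and $x \meet y \in \Phi s$ since $\Phi s$ is meet-closed, so $c \meet d \in \Phi s$. On the other hand $c, d \in \Phi f$, and $\Phi f$ is likewise meet-closed, so $c \meet d \in \Phi f$. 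Hence $c \meet d \in \Phi f \cap \Phi s = \{\hat{p}\}$, forcing $c = d = \hat{p}$, the desired contradiction. Therefore $s$ must be prime.

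The main obstacle is the construction of $s_a$ and $s_b$ and the verification that the witnesses $c, d$ can be taken of the specific form $a \meet x$ and $b \meet y$ with $x, y \in \Phi s$; once that form is established, the final meet computation is immediate from the hypothesis $a \meet b \in \Phi s$. The step essentially exploits the interplay between the meet-closure of $\Phi f$ (which comes from $f \in \L$) and the meet-closure of $\Phi s$, mediated by the non-meet-closedness of $P - \Phi s$.
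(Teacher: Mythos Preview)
Your proof is correct and in fact cleaner than the paper's. Both arguments proceed by contradiction from a non-prime $s\in\check{\S}_f$, but the mechanics differ. The paper selects a \emph{minimal} element $a\in\Phi s$ that is the meet of two non-fixed points $b,c$, then selects a minimal such pair $(b,c)$; it argues that one of $b,c$ (say $b$) lies outside $\Phi f$ and, using the two minimality hypotheses, shows directly that $\Phi s\cup\{b\}$ is already meet-closed, producing a single strictly smaller shock still in $\S_f$. Your route avoids all minimality choices on the witnesses: you take \emph{any} pair $a,b\notin\Phi s$ with $a\meet b\in\Phi s$, form the meet-closures $\Phi s_a=\Phi s\cup\{a\meet x:x\in\Phi s\}$ and $\Phi s_b$ symmetrically, and then use the two resulting non-failing shocks together to manufacture an element of $\Phi f\cap\Phi s\setminus\{\hat p\}$ via the identity $(a\meet x)\meet(b\meet y)=(a\meet b)\meet(x\meet y)$. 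The paper's argument buys a sharper structural statement (one can enlarge $\Phi s$ by a single element), at the cost of a more delicate case analysis; your argument trades that for a short, symmetric computation that needs no special choice of witnesses.
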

\begin{proof}
 Suppose $s$ is not prime. Then, there exists a minimal element $a = b \meet c$ such that $a \in \Phi s$ and $b,c \notin \Phi s$. We consider $(b,c)$ to be \emph{minimal} in the sense that for $(b',c')\neq (b,c)$, if $b' \meet c' = a$, $b'\leq b$ and $c' \leq c$ then either $b' \in \Phi s$ or $c' \in \Phi s$. As $a \in \Phi s$ and $s \in \S_f$, it follows that $a \notin \Phi f$. Therefore, at least one of $b$ or $c$ is not in $\Phi f$. Without loss of generality, suppose that $b \notin \Phi f$. If for each $x \in \Phi s$ non-comparable to $b$, we show that $b \meet x \in \Phi s$, then it would follow that $s$ is not minimal as $\Phi s \cup \{b\}$ is closed under $\meet$ and would constitute a shock $s' \leq s$ that fails $f$. Consider $x \in \Phi s$, and suppose $b \meet x \notin \Phi s$. If $a \leq x$, then we get $(b \meet x) \meet c = a$ contradicting the minimality of $(b,c)$. If $a$ and $x$ are not comparable, then $a \meet x \neq a$. But $a \meet x \in \Phi s$ and $a \meet x = (b \meet x) \meet c$ with both $(b \meet x)$ and $c$ not in $\Phi s$, contradicting the minimality of $a$.
\end{proof}

Dually, we define the set of prime systems \emph{contained} in $f$.
\begin{equation*}
  \W_f = \{ w \leq f : w \text{ is prime}\}
\end{equation*}
\begin{prop}
 If $f \in \L$ and $\W_f = \{w_1, \cdots, w_m\}$, then $f = w_1 \plus \cdots \plus w_m$. \nonumber
\end{prop}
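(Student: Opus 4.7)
My plan is to use the isomorphism of Theorem \ref{thm:iso} and translate the equality $f = w_1 \plus \cdots \plus w_m$ into the set-theoretic equality $\Phi f = \Phi w_1 \cap \cdots \cap \Phi w_m$. The inclusion $\Phi f \subseteq \bigcap_i \Phi w_i$ is immediate: each $w_i$ belongs to $\W_f$ and hence $w_i \leq f$, which under the isomorphism means $\Phi w_i \supseteq \Phi f$; intersecting yields the desired containment.

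The substantive direction is $\bigcap_i \Phi w_i \subseteq \Phi f$. I will prove the contrapositive: for each $a \notin \Phi f$, I will exhibit a prime system $w \in \W_f$ such that $a \notin \Phi w$. Once this is done, $a$ is excluded from the intersection $\bigcap_i \Phi w_i$, and so $a \notin \Phi(w_1 \plus \cdots \plus w_m)$, finishing the proof.

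To construct this witness, I will draw on the join-irreducible prime maps $f_{st}$ introduced in Subsection \ref{sec:addRemarks}. Fix $a \notin \Phi f$, so $a < fa$; by finiteness of $P$ there exists $b$ with $a \leqc b \leq fa$. The map $f_{ab}$, defined as the least system sending $a$ to $b$, was shown to be prime. Moreover, since $fa \geq b$, the system $f_{ab}$ is dominated by $f$ (it is the least such system sending $a$ to at least $b$), giving $f_{ab} \leq f$. Hence $f_{ab} \in \W_f$, and by construction $f_{ab}(a) = b \neq a$, so $a \notin \Phi f_{ab}$. This is the required witness.

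The main obstacle is conceptual rather than technical: one needs to realize that the prime systems below $f$ are rich enough to separate every non-fixed point of $f$ from the set of all fixed points. Once the join-irreducible primes $f_{ab}$ are in hand (already available from Subsection \ref{sec:addRemarks}), the separation is immediate. Note that this also shows $\W_f$ can always be taken to be finite (as $P$ is finite), so the combination $w_1 \plus \cdots \plus w_m$ is well-defined.
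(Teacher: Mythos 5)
Your proof is correct and follows essentially the same route as the paper: both arguments rest on the join-irreducible prime maps $f_{st}$ from Subsection \ref{sec:addRemarks} and the observation that every such map below $f$ lies in $\W_f$. The only difference is that the paper simply cites the standard fact that an element of a finite lattice is the join of the join-irreducibles below it, whereas you verify that step explicitly through the fixed-point isomorphism by exhibiting, for each $a \notin \Phi f$, the witness $f_{ab}$ with $a \leqc b \leq fa$ --- a correct and slightly more self-contained rendering of the same idea.
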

\begin{proof}
 All join-irreducible elements of $\L$ are prime (see Subsection \ref{sec:addRemarks}). Therefore $\W_f$ contains all join-irreducible elements less than $f$, and $f$ is necessarily the join of those elements.
\end{proof}
Keeping only the maximal elements of $\W_f$ is enough to reconstruct $f$. We define:
\begin{equation*}
 \hat{\W}_f = \{ w \in \W_f : \text{for all }v \in \W_f, \text{ if }w \leq v \text{ then } v=w\}
\end{equation*}
\begin{prop}
 The operator $\neg$ maps $\check{\S}_f$ to~$\hat{\W}_f$ bijectively.
\end{prop}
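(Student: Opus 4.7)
My plan is to exploit the fact that the involution $\neg$ is defined on prime systems and is an order-reversing bijection on them, then check that it maps $\S_f$ and $\W_f$ to each other via the fixed-point characterizations available.

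First, I would observe that $\neg$ makes sense on every element of both $\check{\S}_f$ and $\hat{\W}_f$: elements of $\hat{\W}_f$ are prime by definition, and elements of $\check{\S}_f$ are prime by the preceding theorem. Moreover $\neg\neg w = w$ whenever $w$ is prime, as already noted in the text, so $\neg$ is an involution on the set of prime systems and is automatically injective on any subset of primes. The bijectivity will therefore follow once I verify that $\neg$ sends $\check{\S}_f$ into $\hat{\W}_f$ and $\hat{\W}_f$ into $\check{\S}_f$.

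Next I would establish the key equivalence: for a prime system $s$, $s \in \S_f$ if and only if $\neg s \in \W_f$. Using Corollary \ref{pro:failcond}, $s \in \S_f$ is equivalent to $\Phi f \cap \Phi s = \{\hat{p}\}$, i.e.\ $\Phi f - \{\hat{p}\} \subseteq P - \Phi s$. Unwinding the definition $\Phi(\neg s) = P - (\Phi s - \{\hat{p}\})$ and recalling that $f \geq \neg s$ is the same as $\Phi f \subseteq \Phi(\neg s)$, this is precisely the statement $\neg s \leq f$; together with primality of $\neg s$ (which follows from primality of $s$), this gives $\neg s \in \W_f$.

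I would then check that $\neg$ is order-reversing on primes: for primes $s,t$, the equivalence $s \leq t \Leftrightarrow \Phi t \subseteq \Phi s \Leftrightarrow \Phi(\neg s) \subseteq \Phi(\neg t) \Leftrightarrow \neg t \leq \neg s$ is routine from the definition of $\neg$ and the fact that $\hat{p}$ lies in every fixed-point set. Combined with the previous step, this immediately yields the correspondence between minimal and maximal elements: if $s \in \check{\S}_f$, then any $w' \in \W_f$ with $w' \geq \neg s$ gives $\neg w' \in \S_f$ with $\neg w' \leq s$, forcing $\neg w' = s$ by minimality and hence $w' = \neg s$, so $\neg s \in \hat{\W}_f$; symmetrically, given $w \in \hat{\W}_f$, pick (using finiteness) any minimal element $s_0 \in \S_f$ with $s_0 \leq \neg w$, note $s_0$ is prime so $\neg s_0 \in \W_f$ with $\neg s_0 \geq w$, and maximality forces $\neg s_0 = w$, i.e.\ $\neg w = s_0 \in \check{\S}_f$.

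The only mild subtlety is the last direction, where a minimal element of $\S_f$ below $\neg w$ need not be itself below $w$ in $\W_f$ until we pass through the involution; having the earlier theorem guarantee that such minimal elements are prime is what makes the argument go through. Everything else is a clean manipulation of fixed-point sets.
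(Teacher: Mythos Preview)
Your proof is correct and follows essentially the same route as the paper's own argument: use the involution $\neg$ on primes, the equivalence $s\in\S_f \Leftrightarrow \neg s\in\W_f$, and order-reversal to swap minimality in $\S_f$ with maximality in $\W_f$. The paper compresses the two directions into one short computation plus the phrase ``by symmetry'', whereas you spell out both directions and correctly flag that the symmetric direction genuinely relies on the preceding theorem (minimal shocks are prime) to ensure $\neg$ is defined on the chosen $s_0$; this is a point the paper leaves implicit.
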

\begin{proof}
 If $f$ is prime, then $\neg\neg f = f$. It is therefore enough to show that if $s \in \check{\S}_f$, then $\neg s \in \hat{\W}_f$ and that if $w \in \hat{\W}_f$, then $\neg w \in \check{\S}_f$. For each $s \in \check{\S}_f$, as $\neg s \leq f$, there is a $w \in \hat{\W_f}$ such that $\neg s \leq w$. Then $\neg w \leq s$, and so $s = \neg w$ as $s$ is minimal. By symmetry we get the result.
\end{proof}

We will term prime functions in $\W_f$ as \emph{weaknesses} of $f$. Every system can be decomposed injectively into its maximal weaknesses, and to each of those weaknesses corresponds a unique minimal shock that leads a system to failure. A minimal shock fails a system because it complements one maximal weakness of the system. Furthermore, whenever an arbitrary shock $s$ fails $f$ that is because a prime subshock $s'$ of $s$ complements a weakness $w$ in $f$.

\subsection{\texorpdfstring{$\mu$-}Rank, Resilience and Fragility}

We may wish to quantify the \emph{resilience} of a system. One interpretation of it may be the minimal amount of \emph{effort} required to fail a system. The word \emph{effort} presupposes a mapping that assigns to each shock some magnitude (or energy). As shocks are systems, such a mapping should coincide with one on systems.

Let $\Rnneg$ denote the non-negative reals. We expect a notion of magnitude $r : \L \rightarrow \Rnneg$ on the systems to satisfy two properties.
\begin{description}
 \item[R.1] $r(f) \leq r(g)$ if $f \leq g$ 
 \item[R.2] $r(f \plus g) = r(f) + r(g) - r(f \proj g)$ if $(f,g)$ are modular.
\end{description}
The less desirable a system is, the higher the magnitude the system has. It is helpful to informally think of a modular pair $(f,g)$ as a pair of systems that do not \emph{interfere} with each other. In such a setting, the magnitude of the combined system adds up those of the subsystems and removes that of the common part.

The rank function $\rho$ of $\L$ necessarily satisfies R.1 and R.2 as $\L$ is semi-modular. It can also be checked that, for any additive map $\mu : 2^P \rightarrow \Rnneg$, the map $f \mapsto \mu(P - \Phi f)$ satisfies the two properties. Thus, measures $\mu$ on $P$ can prove to be a useful source for maps capturing magnitude. However, any notion of magnitude satisfying R.1 and R.2 is necessarily induced by a measure on the state space.
\begin{thm}
 Let $r$ be a map satisfying R.1 and R.2, then there exists an additive map $\mu : 2^P \rightarrow \Rnneg$ such that $r(f) = \mu(P - \Phi f) + r(0)$.
\end{thm}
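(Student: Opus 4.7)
The plan is to construct $\mu$ from its values on the coatoms of $\L_P$, and then verify the desired formula by decomposing each system as a meet of coatoms in a carefully chosen order. For each $a \in P$ with $a \neq \hat{p}$, let $c_a$ denote the coatom with $\Phi c_a = \{a, \hat{p}\}$ (cf.\ Subsection~\ref{sec:addRemarks}). Since $c_a \leq 1$, property R.1 gives $r(c_a) \leq r(1)$, so we may define
\[
\mu(\{a\}) := r(1) - r(c_a) \geq 0 \quad (a \neq \hat{p}), \qquad \mu(\{\hat{p}\}) := 0,
\]
and extend $\mu$ additively on $2^P$. The whole proof then reduces to establishing $r(f) = \mu(P - \Phi f) + r(0)$ for every $f \in \L_P$.

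To exploit R.2 iteratively, I would fix a linear extension $a_1, \ldots, a_k$ of $\leq$ restricted to $\Phi f \setminus \{\hat{p}\}$, smallest elements first, and form the partial products $f_0 := 1$ and $f_j := c_{a_1} \proj \cdots \proj c_{a_j}$. The key structural lemma is that $\Phi f_j = \{a_1, \ldots, a_j, \hat{p}\}$ exactly: by iterating $\Phi(g \proj h) = \Phi g \meet \Phi h$, this set is the $\meet$-closure of $\{a_1, \ldots, a_j, \hat{p}\}$, and for $p, q \leq j$ the element $a_p \meet a_q$ lies in $\Phi f$ and is dominated by both $a_p$ and $a_q$, hence sits at position $\leq \min(p,q) \leq j$ in the linear extension. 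Consequently $f_k = f$, the union $\Phi f_j \cup \Phi c_{a_{j+1}} = \{a_1, \ldots, a_{j+1}, \hat{p}\}$ is $\meet$-closed (so $(f_j, c_{a_{j+1}})$ is a modular pair, using the characterization earlier in the excerpt), and $\Phi(f_j \plus c_{a_{j+1}}) = \Phi f_j \cap \Phi c_{a_{j+1}} = \{\hat{p}\}$, giving $f_j \plus c_{a_{j+1}} = 1$.

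Applying R.2 at each step yields the recursion $r(f_{j+1}) = r(f_j) + r(c_{a_{j+1}}) - r(1)$, which telescopes to $r(f) = (1-k)\, r(1) + \sum_{j=1}^k r(c_{a_j})$. Specializing this identity to $f = 0$ (where the enumeration runs through all of $P \setminus \{\hat{p}\}$) gives the normalization $\sum_{a \neq \hat{p}} r(c_a) = r(0) + (|P| - 2)\, r(1)$, which combined with the definition of $\mu$ produces $\sum_{a \neq \hat{p}} \mu(\{a\}) = r(1) - r(0)$. A short arithmetic check then rewrites
\[
\mu(P - \Phi f) + r(0) = \bigl(r(1) - r(0)\bigr) - \sum_{j=1}^k \mu(\{a_j\}) + r(0) = (1-k)\, r(1) + \sum_{j=1}^k r(c_{a_j}),
\]
which matches the telescoping formula for $r(f)$, completing the proof.

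The crux of the argument is the second paragraph: ensuring simultaneously that the iterated meet $f_j$ carries no extra fixed points beyond $\{a_1, \ldots, a_j, \hat{p}\}$, and that $(f_j, c_{a_{j+1}})$ is modular so that R.2 actually applies at each step. Both facts follow from a single observation about the linear extension, so the difficulty truly lies in picking the right order of coatoms; once this is secured, the remainder is inclusion--exclusion bookkeeping.
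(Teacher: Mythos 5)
Your proof is correct and takes essentially the same route as the paper: define $\mu(\{a\}) = r(1) - r(c_a)$ from the coatoms, write $f$ as a meet of coatoms so that R.2 telescopes, and normalize via $f = 0$. In fact your argument is slightly more complete than the paper's, which merely asserts the existence of a coatom sequence $c_{a_1}, \ldots, c_{a_k}$ with $(f_j, c_{a_{j+1}})$ modular and $f_j \plus c_{a_{j+1}} = 1$ at every step, whereas your linear-extension ordering of $\Phi f \setminus \{\hat{p}\}$ actually exhibits one and verifies that $\Phi f_j = \{a_1,\ldots,a_j,\hat{p}\}$ stays $\meet$-closed.
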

\begin{proof}
 A co-atom in $\L$ is an element covered by the system $1$. For each $f$, there is a sequence of co-atoms $c_1,\cdots,c_m \in \L$ such that if $f_i = c_1 \proj \cdots \proj c_i$, then $(f_i,c_{i+1})$ is a modular pair, $f_i \plus c_{i+1} = 1$ and $f_m = f$. It then follows by R.2 that $r(f_{i} \plus c_{i+1}) = r(f_i) + r(c_{i+1}) - r(f_i \proj c_{i+1})$. Therefore $r(f) = r(1) - \sum_{i=1}^m r(1) - r(c_i)$. Let $c_a$ be the co-atom with $a \in \Phi c_a$, and define $\mu(\{a\}) = r(1) - r(c_a)$ and $\mu(\{\hat{p}\}) = 0$. It follows that $r(0) = r(1) - \mu(P)$ and so $r(f) = r(0) + \mu(P) - \mu(\Phi f)$. Equivalently $r(f) = \mu(P - \Phi f) + r(0)$.
\end{proof}

As it is natural to provide the identity system $0$ with a zero magnitude, we consider only maps $r$ additionally satisfying:
\begin{description}
 \item[R.3] $r(0) = 0.$
\end{description}

Let $r$ be a map satisfying R.1, R.2 and R.3 induced by the measure $\mu$. If $\mu S = |S|$, then $r$ is simply the rank function $\rho$ of $\L$. We thus term $r$ (for a general $\mu$) as a $\mu$-rank on $\L$. The notion of $\mu$-rank is \emph{similar} to that of a norm as defined on Banach spaces. Scalar multiplication is not defined in this setting, and does not translate (directly) to the algebra presented here. However, the $\mu$-rank does give rise to a metric on $\L$.

\begin{exa} Let $f$ be the system derived from an instance $(V,A)$ in M.0, and let $\mu$ be the counting measure on $2^V$ i.e. $\mu S = |S|$. If $A$ is symmetric, then the system $f$ has $2^c$ fixed points where $c$ is the number of connected components in the graph. The $\mu$-rank of $f$ is then $2^{|V|} - 2^c$.
\end{exa}
  
Let $r$ be a $\mu$-rank. The quantity we wish to understand (termed \emph{resilience}) would be formalized as follows:
\begin{equation*}
 \resilience(f) = \min_{s \in \S_f} r(s)
\end{equation*}
We may dually define the following notion (termed \emph{fragility}):
\begin{equation*}
 \fragility(f) = \max_{w \in \W_f} r(w)
\end{equation*}
\begin{prop}
 We have $\fragility (f) + \resilience(f) = r(1)$
\end{prop}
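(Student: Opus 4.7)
The plan is to exploit the bijection $\neg : \check{\S}_f \to \hat{\W}_f$ established in the previous proposition, together with monotonicity of $r$, to reduce the min/max over $\S_f$ and $\W_f$ to min/max over their prime/extremal subsets, where the complement relation gives a clean additive identity.

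First I would observe that the min in $\resilience$ and the max in $\fragility$ are attained on $\check{\S}_f$ and $\hat{\W}_f$ respectively: by R.1, if $s \leq s'$ then $r(s) \leq r(s')$, so any minimizer of $r$ over $\S_f$ can be replaced by a minimal shock below it, and dually for weaknesses. Hence
\begin{equation*}
  \resilience(f) = \min_{s \in \check{\S}_f} r(s), \qquad \fragility(f) = \max_{w \in \hat{\W}_f} r(w).
\end{equation*}

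Next I would prove the pointwise identity $r(s) + r(\neg s) = r(1)$ for every prime $s$. By the characterization theorem, $r(h) = \mu(P - \Phi h)$, so $r(1) = \mu(P - \{\hat{p}\})$. For a prime $s$, the definition of $\neg$ gives $\Phi(\neg s) = P - (\Phi s - \{\hat{p}\})$, so $P - \Phi(\neg s) = \Phi s - \{\hat{p}\}$. Since $P - \Phi s$ and $\Phi s - \{\hat{p}\}$ partition $P - \{\hat{p}\}$, additivity of $\mu$ yields
\begin{equation*}
  r(s) + r(\neg s) \;=\; \mu(P - \Phi s) + \mu(\Phi s - \{\hat{p}\}) \;=\; \mu(P - \{\hat{p}\}) \;=\; r(1).
\end{equation*}

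Finally I would combine these two ingredients via the bijection: for $s$ ranging over $\check{\S}_f$, $\neg s$ ranges over $\hat{\W}_f$, and
\begin{equation*}
  \resilience(f) \;=\; \min_{s \in \check{\S}_f} r(s) \;=\; \min_{s \in \check{\S}_f} \bigl(r(1) - r(\neg s)\bigr) \;=\; r(1) - \max_{w \in \hat{\W}_f} r(w) \;=\; r(1) - \fragility(f),
\end{equation*}
which is the desired identity. The only subtle step is the first one, where I must justify that the extrema move to $\check{\S}_f$ and $\hat{\W}_f$; this uses monotonicity of $r$ together with the finiteness of $\Ell_P$, and for fragility it uses the fact (implicit in the earlier proposition giving $f = w_1 \plus \cdots \plus w_m$) that every element of $\W_f$ lies below some maximal weakness. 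Everything else is a routine bookkeeping calculation.
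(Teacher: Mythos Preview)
Your proof is correct and follows essentially the same approach as the paper: reduce to $\check{\S}_f$ and $\hat{\W}_f$, use the bijection $\neg$, and invoke the complement identity $r(w)+r(\neg w)=r(1)$. The only cosmetic difference is that you derive this last identity from the measure characterization, whereas the paper simply asserts it (it also follows directly from R.2 and R.3 since $(w,\neg w)$ is a modular pair with $w\plus\neg w=1$ and $w\proj\neg w=0$).
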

\begin{proof}
 We have $\min_{s \in \check{S}_f} r(s) = \min_{w \in \hat{W}_f} r(\neg w)$ and $r(\neg w) = r(1) - r(w)$ for $w \in \hat{W_f}$.
\end{proof}
\begin{exa} Let $f$ be the system derived from an instance $(V,A)$ in M.0, and let $\mu$ be the counting measure on $2^V$ i.e. $\mu S = |S|$. If $A$ is symmetric, then the resilience/fragility of $f$ is tied to the size of the largest connected component of the graph. Let us define $n = |V|$. If $(V,A)$ had one component, then $\resilience(f) = 2^{n-1}$. If $(V,A)$ had $m$ components of sizes $c_1 \geq \cdots \geq c_m$, then $\resilience(f) =  2^{n - 1} + 2^{n - c_1 - 1} + \cdots + 2^{n - (c_1 + \cdots + c_{m-1}) - 1}$. As $r(1) = 2^n - 1$, it follows that $\fragility(f) = 2^n - 1 - \resilience(f)$.
\end{exa}
  
The quantity we wish to understand may be either one of $\resilience$ or $\fragility$. However, the dual definition $\fragility$ puts the quantity of interest on a comparable ground with the $\mu$-rank of a system. It is always the case that $\fragility(f) \leq r(f)$. Furthermore, equality is not met unless the system is prime. It becomes essential to quantify the inequality gap. Fragility arises only from a certain \emph{alignment} of the non-fixed points of the systems, formalized through the \emph{prime} property. Not all high ranked systems are fragile, and combining systems need not result in fragile systems although rank is increased. It is then a question as to whether or not it is possible to combine resilient systems to yield a fragile systems. To give insight into such a question, we note the following:
\begin{prop}
If $w \in \W_{f \plus g}$, then $w \leq u \plus v$ for some $u \in \W_f$ and $v \in \W_g$.
\end{prop}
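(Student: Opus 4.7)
The plan is to translate the inequality $w \leq u \plus v$ into fixed-point language via Theorem~\ref{thm:iso}, where it becomes $N_w \subseteq N_u \cup N_v$ with $N_h := P - \Phi h$. The hypothesis $w \leq f \plus g$ likewise reads $N_w \subseteq N_f \cup N_g$, and $w \in \W_{f \plus g}$ additionally means $N_w$ is meet-closed. The task thus reduces to producing prime $u \leq f$ and prime $v \leq g$ whose non-fixed-point sets together cover $N_w$.

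The candidates will come from the $\neg$ bijection between minimal shocks and maximal weaknesses. First, $g \plus \neg w \in \S_f$: a direct computation gives $\Phi(g \plus \neg w) \cap \Phi f = \Phi f \cap \Phi g \cap (N_w \cup \{\hat p\}) = \{\hat p\}$, since $\Phi f \cap \Phi g \subseteq \Phi w$ is disjoint from $N_w$. Pick a minimal shock $t \in \check\S_f$ with $t \leq g \plus \neg w$; by the preceding theorem $t$ is prime, so $u := \neg t$ is a prime system, and $u \leq f$ because $t \in \S_f$ forces $\Phi t \subseteq N_f \cup \{\hat p\}$ and hence $\Phi u = N_t \cup \{\hat p\} \supseteq \Phi f$. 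Symmetrically, pick $t' \in \check\S_g$ with $t' \leq f \plus \neg w$ and set $v := \neg t' \in \W_g$.

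It remains to verify $N_w \subseteq N_u \cup N_v$. Partitioning $N_w$ by intersection with $\Phi f, \Phi g, N_f, N_g$: the inclusion $t \leq g \plus \neg w$ forces $\Phi t \supseteq (\Phi g \cap N_w) \cup \{\hat p\}$, so $\Phi g \cap N_w \subseteq \Phi t - \{\hat p\} = N_u$, and symmetrically $\Phi f \cap N_w \subseteq N_v$. Together these cover all of $N_w$ outside $X := N_f \cap N_g \cap N_w$. The main obstacle is covering $X$: this requires coordinating the choices of $t$ and $t'$ rather than picking each minimally in isolation. The key sub-claim is that for each $a \in X$, at least one of the sets $(\Phi g \cap N_w) \cup \{\hat p, a\}$ or $(\Phi f \cap N_w) \cup \{\hat p, a\}$ has its meet-closure inside $N_f \cup \{\hat p\}$ or $N_g \cup \{\hat p\}$, respectively. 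If neither did, there would exist $b \in \Phi g \cap N_w$ with $a \meet b \in \Phi f \setminus \{\hat p\}$ and $b' \in \Phi f \cap N_w$ with $a \meet b' \in \Phi g \setminus \{\hat p\}$; meet-closure of $\Phi f$ and $\Phi g$ would then give $a \meet b \meet b' \in \Phi f \cap \Phi g \subseteq \Phi w$, while meet-closure of $N_w$ would give $a \meet b \meet b' \in N_w$, contradicting $\Phi w \cap N_w = \emptyset$. Using this sub-claim, I would assign each $a \in X$ to whichever side admits it and take $t, t'$ minimal subject to these assignments, whereupon the construction yields $N_w \subseteq N_u \cup N_v$ and hence $w \leq u \plus v$.
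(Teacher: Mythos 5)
Your setup is sound and your sub-claim is a correct and genuinely nice observation, but the final step has a real gap. The sub-claim only controls meets of a \emph{single} $a\in X$ with elements of $\Phi g\cap N_w$ (resp.\ $\Phi f\cap N_w$). Once you ``assign each $a\in X$ to whichever side admits it and take $t$ minimal subject to these assignments,'' the fixed-point set of $t$ must be the meet-closure of $(\Phi g\cap N_w)\cup\{\hat p\}\cup A$ for the \emph{whole} assigned subset $A\subseteq X$, and for $t$ to lie in $\S_f$ (so that the primality theorem applies and $\neg t\in\W_f$) this entire closure must avoid $\Phi f-\{\hat p\}$. That requires controlling meets such as $a_1\meet a_2$ and $a_1\meet\cdots\meet a_k\meet b$ for several assigned elements at once, which the singleton sub-claim says nothing about; nor do you say how to break ties for elements admitted by both sides, even though $a_1\meet a_2$ may land in $\Phi f$ and thereby forbid placing $a_1$ and $a_2$ on the same side. (A partial repair exists: when every $a_i$ in the meet has a witness $b_i'\in\Phi f\cap N_w$ with $a_i\meet b_i'\in\Phi g$, your three-way contradiction extends by meeting with all the witnesses simultaneously; but this does not handle the elements admitted by both sides, and none of it is in your write-up.)

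The paper avoids the combinatorial assignment entirely by choosing the two weaknesses \emph{sequentially} rather than independently. Since $\neg w\plus f\plus g=1$, it picks $u\leq f$ minimal in $\S_{\neg w\plus g}$ (prime by the theorem, hence in $\W_f$); then, because the condition $\Phi(\neg w)\cap\Phi u\cap\Phi g=\{\hat p\}$ is symmetric in $u$ and $g$, one also has $g\in\S_{\neg w\plus u}$, so it picks $v\leq g$ minimal in $\S_{\neg w\plus u}$. The second weakness is chosen \emph{relative to the first} and therefore automatically covers $N_w\cap\Phi u$, i.e.\ exactly the part of $N_w$ (including your troublesome set $X$) that $u$ missed; the conclusion $\neg w\plus u\plus v=1$ then yields $w\leq u\plus v$ directly. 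I recommend restructuring your argument along these adaptive lines rather than trying to complete the global assignment.
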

\begin{proof}
 As $\neg w \plus f \plus g = 1$, it follows that $f \in \S_{\neg w \plus g}$. Then there is a $u \leq f$ in $\check{\S}_{\neg w \plus g}$. As $\neg w \plus u \plus g = 1$, it follows that $g \in {\S}_{\neg w \plus u}$. Then is a $v \leq f$ in $\check{\S}_{\neg w \plus u}$. Finally, we have $\neg w \plus u \plus v = 1$, therefore $w \leq u \plus v$.
\end{proof}
Thus a weakness can only form when combining systems through a combination of weaknesses in the systems. The implication is as follows:
\begin{cor}\label{cor:fragilityBound}
We have $\fragility(f \plus g) \leq \fragility(f) + \fragility(g)$.
\end{cor}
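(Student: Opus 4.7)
The plan is to take a maximal weakness of $f \plus g$, decompose it using the previous proposition into weaknesses of $f$ and $g$, and then bound the rank of a join by the sum of ranks. First, pick $w^\star \in \W_{f\plus g}$ attaining the maximum, so that $\fragility(f\plus g) = r(w^\star)$. By the preceding proposition, there exist $u \in \W_f$ and $v \in \W_g$ with $w^\star \leq u \plus v$, and by R.1 this yields $r(w^\star) \leq r(u\plus v)$.

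The crux of the argument is thus to show the subadditivity inequality $r(u \plus v) \leq r(u) + r(v)$ for arbitrary $u,v \in \Ell$. This does not follow directly from R.2, since $(u,v)$ need not be a modular pair; this is the step I expect to be the main obstacle. The way around it is to invoke the characterization theorem for $\mu$-ranks: combined with R.3, every $r$ satisfying R.1, R.2, R.3 takes the form $r(h) = \mu(P - \Phi h)$ for some additive $\mu : 2^P \rightarrow \Rnneg$. Since $\Phi(u \plus v) = \Phi u \cap \Phi v$ by Theorem~\ref{thm:iso}, we have
\begin{equation*}
P - \Phi(u\plus v) = (P - \Phi u) \cup (P - \Phi v),
\end{equation*}
and additivity of $\mu$ together with nonnegativity gives $\mu(A \cup B) = \mu(A) + \mu(B) - \mu(A \cap B) \leq \mu(A) + \mu(B)$. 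Applied to $A = P - \Phi u$ and $B = P - \Phi v$, this yields $r(u\plus v) \leq r(u) + r(v)$ as desired.

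Finally, since $u \in \W_f$ and $v \in \W_g$, we have $r(u) \leq \fragility(f)$ and $r(v) \leq \fragility(g)$ by definition of fragility as the maximum of $r$ over the respective weakness sets. Chaining the inequalities,
\begin{equation*}
\fragility(f\plus g) = r(w^\star) \leq r(u\plus v) \leq r(u) + r(v) \leq \fragility(f) + \fragility(g),
\end{equation*}
which is the claim.
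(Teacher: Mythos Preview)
Your proof is correct and follows the same approach as the paper: use the preceding proposition to bound a weakness of $f\plus g$ by $u\plus v$ with $u\in\W_f$, $v\in\W_g$, then pass to $r(u)+r(v)$ and compare with the fragilities. The paper's proof is terser and leaves the subadditivity step $r(u\plus v)\leq r(u)+r(v)$ implicit, whereas you justify it explicitly via the $\mu$-rank characterization theorem; that is a welcome clarification rather than a different route.
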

\begin{proof}
 For every $w \in \W_{f\plus g}$, we have $r(w) \leq max_{(u,v) \in W_{f}{\times}\W_g} r(u) + r(v)$ as $w \leq u \plus v$ for some $u \in W_f$ and $v \in W_g$. 
\end{proof}

It is not possible to combine two systems with low fragility and obtain a system with a significantly higher fragility. Furthermore, we are interested in the gap $r(f \plus g) - \fragility(f \plus g)$. If $\fragility(f) \geq \fragility(g)$, then $r(f) - 2\fragility(f)$ is a lower bound on the gap. One should be careful as such a lowerbound may be trivial in some cases. If $P$ is linearly ordered, then $\fragility(f) = r(f)$ for all $f$. The bound in this case is negative. However, if $P$ is a Boolean lattice and $\mu S = |S|$, then $r(f) - \fragility(f)$ may be in the order of $|P| = r(1)$ with $\fragility(f) \leq 2^{(-\log|P|)/2} r(f)$.

Other notions of resilience (eq. fragility) may be introduced. One notion can consider a convex combination of the $\mu$-rank of the $k$ highest-ranked shocks failing a system. The notion introduced in the paper primarily serves to illustrate the type of insight our approach might yield. Any function on the minimal shocks (failing a system) is bound to translate to a dual function on weaknesses.

\subsubsection*{Remark}    The statement of Corollary \ref{cor:fragilityBound} may be perceived to be counterintuitive.  This may be especially true in the context of cascading failure.   The statement however should not be seen to indicate that the axioms defining a system and the dynamics preclude interesting phenomena.  Indeed, it is the definition of fragility (and specifically the choice of the set of shocks over which we maximize) that gives rise to such a statement.  The statement does not imply that fragility does not emerge from the combination of resilient systems, but only that we have a bound on how much fragility increases through combinations.  The statement should not also diminish the validity of the definition of fragility, as it naturally arises from the mathematical structure of the problem.  Another, potentially more intuitive, statement on \emph{fragility} may however be recovered by a modification of the notion of fragility (or dually the notion of resilience) as follows.

We have considered so far every system to be a possible shock.  Variations on the notion of resilience may be obtained by restricting the set of possible shocks.  For instance, let us suppose that only systems of the form $s_a: p \mapsto p \vee a$ with $a\in P$ are possible shocks.  In the case of boolean lattices, these shocks can be interpreted as initially marking a subset of components (or agents) as failed (or infected).  These systems correspond, via their set of fixed-points, to the principal upper order filters of the lattice $P$.  The notion of resilience then relates to the minimum number of initial failures (on the level of components) that lead to the failure of the whole system (i.e., all components).  It is then rarely the case that two resilient systems when combined yield a resilient system.  Indeed, if $a \vee b = \hat{p}$ with $a$ and $b$ distinct from $\hat{p}$, the maximum element of $P$, then both $s_a$ and $s_b$ have some resilence. The system $s_a \plus s_b$ has however no resilience at all, as it maps every $p$ to the maximum element $\hat{p}$.

The space of possible shocks may be modified, changing the precise definition of fragility and yielding different statements.  In case there are no restrictions on shocks, we obtain Corollary \ref{cor:fragilityBound}.  We do not restrict shocks in the paper, as a first analysis, due to the lack of a  good reason to destroy symmetry between shocks and systems.  The non-restriction allows us to capture the notion of a prime system and attain a characterization of fragility in terms of maximal weaknesses.

\subsection{Overview Through An Example (Continued)}

We continue the running example. The maximal weaknesses of the system $f$ are the maximal subsystems of $f$ where the set of non-fixed points is closed under $\cap$. The system $f$ has two maximal weaknesses, represented as:

\begin{center}
\begin{tikzpicture}[scale=0.6]
  \node (max) at (0,3) {$\times$};
  \node (a) at (1.5,1.5) {$\times$};
  \node (b) at (0,1.5) {$\circ$};
  \node (c) at (-1.5,1.5) {$\circ$};
  \node (d) at (1.5,0) {$\times$};
  \node (e) at (0,0) {$\times$};
  \node (f) at (-1.5,0) {$\circ$};
  \node (min) at (0,-1.5) {$\times$};
  \draw (min) -- (d) -- (a) -- (max) -- (b) -- (f)
  (e) -- (min) -- (f) -- (c) -- (max)
  (d) -- (b);
  \draw[preaction={draw=white, -,line width=6pt}] (a) -- (e) -- (c);
  
    \node (maxmax) at (4.5,3) {$\times$};
  \node (aa) at (6,1.5) {$\circ$};
  \node (bb) at (4.5,1.5) {$\circ$};
  \node (cc) at (3,1.5) {$\times$};
  \node (dd) at (6,0) {$\circ$};
  \node (ee) at (4.5,0) {$\times$};
  \node (ff) at (3,0) {$\times$};
  \node (minmin) at (4.5,-1.5) {$\times$};
  \draw (minmin) -- (dd) -- (aa) -- (maxmax) -- (bb) -- (ff)
  (ee) -- (minmin) -- (ff) -- (cc) -- (maxmax)
  (dd) -- (bb);
  \draw[preaction={draw=white, -,line width=6pt}] (aa) -- (ee) -- (cc);
  
\end{tikzpicture}
\end{center}

The left (resp. right) weakness corresponds to the system failing when A (resp. C) is colored black. The left weakness is the map where $A \mapsto ABC$ leaving remaining states unchanged; the right weakness is the map where $C \mapsto ABC$ leaving remaining states unchanged. The system $f$ then admits two corresponding minimal shocks that fail it. Those are complements to the weaknesses in the lattice.
\begin{center}
\begin{tikzpicture}[scale=0.6]
  \node (max) at (0,3) {$\times$};
  \node (a) at (1.5,1.5) {$\circ$};
  \node (b) at (0,1.5) {$\times$};
  \node (c) at (-1.5,1.5) {$\times$};
  \node (d) at (1.5,0) {$\circ$};
  \node (e) at (0,0) {$\circ$};
  \node (f) at (-1.5,0) {$\times$};
  \node (min) at (0,-1.5) {$\circ$};
  \draw (min) -- (d) -- (a) -- (max) -- (b) -- (f)
  (e) -- (min) -- (f) -- (c) -- (max)
  (d) -- (b);
  \draw[preaction={draw=white, -,line width=6pt}] (a) -- (e) -- (c);
  
    \node (maxmax) at (4.5,3) {$\circ$};
  \node (aa) at (6,1.5) {$\times$};
  \node (bb) at (4.5,1.5) {$\times$};
  \node (cc) at (3,1.5) {$\circ$};
  \node (dd) at (6,0) {$\times$};
  \node (ee) at (4.5,0) {$\circ$};
  \node (ff) at (3,0) {$\circ$};
  \node (minmin) at (4.5,-1.5) {$\circ$};
  \draw (minmin) -- (dd) -- (aa) -- (maxmax) -- (bb) -- (ff)
  (ee) -- (minmin) -- (ff) -- (cc) -- (maxmax)
  (dd) -- (bb);
  \draw[preaction={draw=white, -,line width=6pt}] (aa) -- (ee) -- (cc);
  
\end{tikzpicture}
\end{center}

The left (resp. right) minimal shock can be interpreted as initially coloring node $A$ (resp. node $C$) black.

 For a counting measure $\mu$, the $\mu$-rank of $f$ is $5$, whereas the fragility of $f$ is $3$. The resilience of $f$ in that case is $4$. For a system with non-trivial rules on the components, the lowest value of fragility attainable is $1$. It is attained when all the nodes have a threshold of $2$. The highest value attainable, however, is actually $3$. Indeed, the system would have required the same amount of effort to fail it if all thresholds where equal to $1$. Yet changing all the thresholds to $1$ would necessarily increase the $\mu$-rank to $6$.

\subsection{Recovery Mechanisms and Kernel Operators}

Cascade effects, in this paper, have been mainly driven by the axioms A.1 and A.2.  The axiom A.1 ensures that the dynamics do not permit recovery.  Those axioms however do not hinder us from considering situations where certain forms of recovery are permitted, e.g., when fault-protection mechanisms are built into the systems.  Such situations may be achieved by dualizing A.1, and by considering multiple maps to define our fault-protected system.  Specifically, we define a recovery mechanism $k$ to be map $k : P \rightarrow P$ satisfying:
\begin{description}
 \item[K.1] If $a \in P$, then $ka \leq a$.
 \item[A.2] If $a, b \in P$ and $a \leq b$, then $ka \leq kb$.
 \item[A.3] If $a \in P$, then $kka = ka$.
\end{description}
The axiom K.1 is derived from A.1 by only reversing the order.  As such, a recovery mechanism $k$ on $P$ is only a system on the dual lattice $P^{\dual}$, obtained by reversing the partial order.  The maps satisfying K.1, A.2 and A.3 are typically known as \emph{kernel operators}, and inherit (by duality) all the properties of the systems described in this paper.

We may then envision a system equipped with fault-protection mechanisms as a pair $(k,f)$ where $f$ is system in $\L_P$ and $k$ is a recovery mechanism, i.e., a system in $\L_{P^{\dual}}$.  The pair $(k,f)$ is then interpreted as follows.  An initial state of failure is inflicted onto the system.  Let $a \in P$ be the initial state.  Recovery first occurs via the dynamics of $k$ to yield a more desireable state $k(a)$. The \emph{dynamics} of $f$ then come into play to yield a state $f(k(a))$.

The collection of pairs $(k,f)$ thus introduce a new class of systems, whose properties build on those developed in this paper.  If the axiom A.3 is discarded, iteration of maps in the form $(fk)^n$ may provide a more realistic account of the interplay of failures and recovery mechanisms.  In general, the map $fk$ will satisfy neither A.1 nor K.1.  A different type of analysis might thus be involved to understand these new system.

Several questions may be posed in such a setting.  For a design-question example, let us consider $P$ to be a graded poset.  What is the recovery mechanism $k$ of minimum $\mu$-rank, whereby $f(k(a))$ has rank (in $P$) less than $l$ for every $a \in P$ with rank less than $l'$?  Other design or analysis questions may posed, inspired by the example question.  This direction of recovery however will not be further investigated in this paper.

\subsubsection*{Remark} Another form of recovery may be achieved by \emph{removing rules} from the system.  Such a form may be acheived via the $\proj$ operator.  Indeed, the system $f \proj g$ is the most undesireable system that includes the common rules of both $f$ and $g$.  If $g$ is viewed as a certain complement of some system we want to remove from $f$, then we recover the required setting of recovery.  The notion of complement systems is well-defined for prime systems.  For systems that are not prime, it may be achieved by complementing the set of fixed-points, adding the maximum element $\hat{p}$ and then closing the obtained set under meets.

\section{Concluding Remarks}

Finiteness is not necessary (as explained in Section 3) for the development. The axioms A.1, A.2 and A.3 can be satisfied when $P$ is an infinite lattice, and $\Phi f$ (for every $f$) is complete whenever $P$ is complete. Nevertheless, the notion $\mu$-rank should be \emph{augmented} accordingly, and non-finite component realizations should be allowed. Furthermore, \emph{semimodularity} on infinite lattices (still holds, yet) requires stronger conditions than what is presented in this paper on finite lattices. 

Finally, the choice of the state space and order relation allows a good flexibility in the modeling exercise. State spaces may be augmented accordingly to capture desired instances. But order-preserveness is intrinsic to what is developed. This said, hints of negation (at first sight) might prove not to be integrable in this framework.

\bibliographystyle{alpha}
\bibliography{MyBiblio}

\end{document}